\def\cX{\mathcal{X}}
\def\cN{\mathcal{N}}
\def\Edge{\mathcal{E}}
\def\bbN{\mathbb{N}}
\def\bbM{\mathbb{M}}
\def\bbE{\mathbb{E}}
\def\bbR{\mathbb{R}}
\def\bbP{\mathbb{P}}
\def\ind{\bm{1}}
\def\gap{\mathrm{Gap}}
\def\Var{\mathrm{Var}}
\def\hbbR{\bbR_{+}}
\def\Var{\mathrm{Var}}
\def\TGS{ \mathrm{TGS}}
\def\WTGS{ \mathrm{WTGS}}
\def\tpi{\tilde{\pi}}
\def\cT{\mathsf{T}}
\def\Anc{\mathcal{A}}
\def\hsq{h_{0.5}}
\def\htgs{h_{+1}}
\def\hmin{h_{\wedge 1}} 
\def\bpi{\bar{\pi}}
\def\cG{\mathsf{G}}
\def\cY{\mathcal{Y}}
\def\sgamma{\gamma_0}
\def\bP{\bar{P}}
\def\se{\bm{e}}
\def\tmix{\tau_{\mathrm{mix}}}
\def\pmin{\pi_{\mathrm{min}}}
\def\bcN{\overline{\mathcal{N}}}  
\newcommand{\tg}{\tilde{g}}
\newcommand{\cpi}{\check{\pi}}
\newcommand{\comega}{\check{\omega}}
\newcommand{\xmap}{\hat{x}}
\newcommand{\Tmap}{T_{\hat{x}}}
\DeclareMathOperator*{\argmax}{arg\,max}
\DeclarePairedDelimiter{\norm}{\lVert}{\rVert}
\newtheorem{lemma}{Lemma}
\newtheorem{theorem}{Theorem}
\theoremstyle{definition}
\newtheorem{definition}{Definition}
\newtheorem{remark}{Remark}
\newtheorem{example}{Example}
\newtheorem{assumption}{Assumption}
\newtheorem{alg}{Algorithm}
\def\TITLE{Rapid Convergence of Informed Importance Tempering} 
\title{\TITLE}
\author{Quan Zhou$^1$ \and Aaron Smith$^2$}
\date{$^1$Department of Statistics, Texas A\&M University \\ 
$^2$Department of Mathematics and Statistics, University of Ottawa}
\renewcommand{\MakeUppercase}[1]{#1}
\begin{document}

\maketitle

\begin{abstract} 
Informed Markov chain Monte Carlo (MCMC) methods have been proposed as scalable solutions to Bayesian posterior computation on high-dimensional discrete state spaces, but   theoretical results about their convergence behavior in general settings are lacking. 
In this article, we propose a class of MCMC schemes called informed importance tempering (IIT), which combine importance sampling and informed local proposals, and derive generally applicable spectral gap bounds for IIT estimators. 
Our theory shows that IIT samplers have remarkable scalability when the target posterior distribution concentrates on a small set.  
Further, both our theory and numerical experiments demonstrate that the informed proposal should be chosen with caution: the performance of some proposals may be very sensitive to the shape of the target distribution. We find that the ``square-root proposal weighting'' scheme tends to  perform well in most settings.  
\end{abstract}
  
\section{\MakeUppercase{Introduction}}\label{sec:intro}
Bayesian inference provides a flexible framework for modeling complex data and assessing  uncertainty of model selection and parameter estimation, but these advantages often come at the cost of intensive posterior computation. 
In recent years, various informed Markov chain Monte Carlo (MCMC)  methods have been proposed for sampling from discrete state spaces, which are particularly useful for model selection problems and have been shown numerically to scale well to high-dimensional data~\citep{titsias2017hamming, zanella2019scalable,  zanella2020informed, griffin2021search}; see~\citet{zhou2021local} and the references therein.\footnote{
We only consider discrete-state-space problems in this article. On continuous state spaces, gradient-based informed MCMC samplers, such as Metropolis-adjusted Langevin algorithm and Hamiltonian Monte Carlo~\citep{roberts1998optimal,girolami2011riemann}, have become almost standard solutions to posterior computation, and their theoretical properties are well understood.}
These methods require evaluating the posterior distribution locally in each iteration  so that neighboring states with larger posterior probabilities are more likely to be proposed. 
A theoretical guarantee for the scalability of informed MCMC samplers was recently obtained by~\citet{zhou2021local}, who proved that their informed  Metropolis-Hastings (MH) algorithm for high-dimensional variable selection can achieve a mixing rate that is independent of the number of variables.  
 
In this article,  we consider another approach to making use of informed proposals that is not based on MH algorithm:  accept all the proposed moves and use importance weights to correct for the proposal bias. 
We call this scheme informed importance tempering (IIT). 
Since an informed proposal distribution usually has a shape similar to the local posterior landscape, a combination of informed proposals and importance sampling can sometimes be strikingly efficient. 
One example in the literature is the tempered Gibbs sampler (TGS) for variable selection devised by~\citet{zanella2019scalable}, which has largely motivated the general framework to be proposed in this work.  
The convergence rate of  IIT estimators can be measured by the spectral gap of a continuous-time Markov chain, which enables us to use Markov chain theory to investigate the complexity of IIT schemes in general high-dimensional settings. 
We first consider the case where the target posterior distribution satisfies a unimodal condition and concentrates on one state, which, for model selection problems, can be interpreted as a strong notion of statistical consistency. 
Our theory suggests that in this setting,  IIT schemes with locally balanced proposals (see Definition~\ref{def:balance}), including TGS,  have superior scalability.  
Next, we relax the unimodal condition by assuming the posterior mass concentrates on a small set. It turns out that then TGS may lose its advantage completely, while some other IIT samplers still perform well.  
More interestingly, both our theory and numerical study support the use of  the square root of the posterior probability as the proposal weight, which renders the algorithm much more robust than other weighting schemes such as the one used by TGS. 
Finally, we extend our results to general decomposable target distributions. The spectral gap bound we obtain suggests that another way to achieve robustness is to use bounded proposal weights (see Remark~\ref{rmk:bounded}), which echoes the LIT-MH algorithm (MH with locally informed and thresholded proposals) developed by~\citet{zhou2021local}. 
 
The rest of the paper is structured as follows. We introduce the notation and the IIT algorithm in Section~\ref{sec:mcmc}. 
Spectral gap bounds for IIT samplers are presented in Section~\ref{sec:main}.  
Section~\ref{sec:sim} presents two simulation studies, and Section~\ref{sec:disc} concludes the paper with some discussion on the literature and how to use IIT in practice. All proofs are deferred to the supplement. 

\section{\MakeUppercase{Informed Importance Tempering}}\label{sec:mcmc}

\subsection{Notation, Problem Setup and Preliminaries} \label{sec:notation}
Throughout this work,  we use $\cX = \cX(p)$ to denote a finite set, where $p \in (1, \infty)$ is a parameter describing the problem size, and use $\pi$ to denote a  target posterior distribution with support $\cX$; that is, $\pi(x) > 0$ for every $x$.  
Our goal is to approximate $\pi$ by sampling when $\pi$ is known only up to a normalizing constant. 
For convenience of interpretation, we treat $\cX$ as a set of candidate models in a model selection problem with $p$ variables.  
The cardinality of  a set is denoted by $| \cdot |$. We are mostly interested in the cases where $|\cX|$ grows super-polynomially with $p$.  
For any function $f \colon \cX \rightarrow \bbR$, let $\bbE_\pi[ f ] = \sum_{x \in \cX} f(x) \pi(x)$ denote its expectation. 
A stochastic matrix (i.e., transition matrix) is denoted by $P$ or $K$, which is treated as a mapping from $\cX^2$ to $[0, 1]$. 
Similarly, a transition rate matrix is denoted by a mapping  $Q  \colon \cX^2 \rightarrow \bbR$. 
  
Suppose  that  a neighborhood mapping  $\cN \colon \cX \rightarrow 2^{\cX}$ is given such that $x \notin \cN(x)$ for each $x$; the set $\cN(x)$ is referred to as the neighborhood of $x$. 
Let $\bbM(\cX, \cN)$   denote the collection of all stochastic matrices $K$ with state space $\cX$ such that $K(x, y) > 0$  if and only if  $y \in \cN(x)$.    
We make two assumptions on $\cN$.  First,  $\cN$ is symmetric, which means that $x \in \cN(y)$ implies $y \in \cN(x)$. 
Second,  any $K \in \bbM(\cX, \cN)$ is irreducible. 
The term ``neighborhood''  also connotes that $| \cN(\cdot)|$ tends to be much smaller than $| \cX |$, though we will not formally impose this assumption until Section~\ref{sec:main}.    
Any $K \in \bbM(\cX, \cN)$ can be used as the proposal scheme for constructing an MH algorithm targeting  $\pi$. If $K(x, y) =  |\cN(x)|^{-1}$ for each $y \in \cN(x)$,  we refer to the corresponding algorithm as RWMH (RW: random walk).   
Details of MH algorithms are omitted since they are not the focus of this work. 

\begin{example} \label{ex:var1}  
Variable selection will be used as a running example. 
Define $\cX = \{  x \in  \{0, 1\}^p \colon  \norm{x}_1 \leq s \}$ for some $s = s(p) > 0$, where $\norm{x}_1$ denotes the $L^1$-norm. Each $x \in \cX$ represents a sparse linear regression model such that the $i$-th variable is included if and only if $x_i = 1$.  
In high-dimensional settings, we usually let $s \rightarrow \infty$, so $|\cX|$ is super-polynomial in $p$.  
For each $x \in \cX$, we can define ``add'', ``delete'' and ``swap'' neighborhoods by 
\begin{align*}
\cN_{\rm{add}}(x) &=\{y \in \cX \colon  \norm{x - y}_1  = 1, \norm{y}_1 = \norm{x}_1 + 1 \}, \\
\cN_{\rm{del}}(x) &=\{y \in \cX \colon  \norm{x - y}_1  = 1, \norm{y}_1 = \norm{x}_1 - 1 \},  \\
\cN_{\rm{swap}}(x) &= \{ y \in \cX\colon  \norm{x - y}_1  = 2, \norm{y}_1 = \norm{x}_1 \}. 
\end{align*}
Let  $\cN^1(x) = \cN_{\rm{add}}(x) \cup \cN_{\rm{del}}(x)$ and $\cN^2(x) = \cN^1(x) \cup \cN_{\rm{swap}}(x) $.  Both $\cN^1$  and $\cN^2$ satisfy our assumptions. 
For most MH algorithms on $\cX$ used in practice, the proposal matrix belongs to either $\bbM(\cX, \cN^1)$ or  $\bbM(\cX, \cN^2)$.   
\end{example} 

We say a proposal scheme $K \in \bbM(\cX, \cN)$ is informed  if the proposal probability depends on  the un-normalized posterior.   
We follow~\citet{zanella2020informed} to consider informed proposals that can be written as 
\begin{equation}\label{eq:def.h}
K_h(x, y)  = \frac{\ind_{\cN(x)}(y)}{Z_h(x)}    h \left( \frac{\pi(y)}{\pi(x)} \right) ,  
\end{equation}
for some $h \colon  \hbbR \rightarrow \hbbR$,   
where  $\hbbR = (0, \infty)$, $\ind$ is the indicator function,   and the normalizing constant $Z_h$ is  
\begin{equation}\label{eq:def.Zh}
    Z_h(x) = \sum_{y \in \cN(x)} h \left( \frac{\pi(y)}{\pi(x)} \right). 
\end{equation}
The function $h$ determines how the proposal weight of each neighboring state is calculated.  
In most cases,  we want $h$ to be non-constant  and non-decreasing.   
One simple choice    is  $h(u) = u^a$  for some $a > 0$, which favors neighboring states with larger posterior probabilities.  
\citet{zanella2020informed} proposed to use a  ``balancing function''. 
\begin{definition}\label{def:balance}
If  $h(u) = u \, h(1 / u)$ for any $u \in \hbbR$, we say $h$ is a  balancing function  and  $K_h$ is  a locally balanced proposal. 
\end{definition}

\begin{remark}\label{rmk:balance}
The class of balancing functions is  very rich. 
Let $h, h'$ be  two balancing functions and $g\colon \hbbR \rightarrow \hbbR$ be an arbitrary non-negative function. 
We can define new balancing functions $h_1, h_2, h_3$ by letting $h_1 = a h + a' h'$ for any $a, a' \geq 0$, 
$h_2(u) =  h(u) g(u) g(1 / u)$ and $h_3(u) = \min\{ g(u),  u  g(1/u) \}$ (or $h_3(u) = \max \{ g(u),  u  g(1/u) \}$). 
The last property was used in~\citet{zanella2020informed} to compare different informed proposals. 
\end{remark}

\begin{example}\label{rmk:three.h}
Three balancing functions will be considered in our analysis, which we denote by 
\begin{equation}\label{def:three}
\hsq(u) = \sqrt{u},  \;  \hmin(u) = 1 \wedge u,  \; \htgs(u) = 1 + u. 
\end{equation}
Note that $\htgs$ behaves just like $h(u) = 1 \vee u$, since $(1 + u)/2 \leq 1 \vee u < 1 + u$ for any $u > 0$.  So $\htgs,  \hmin$ and $\hsq$ represent three very different proposal weighting strategies:  $\hmin$ treats any $y \in \cN(x)$ with $\pi(y) \geq \pi(x)$ equally, while $\htgs$ assigns roughly the same weight to any $y \in \cN(x)$ with $\pi(y) < \pi(x)$. 
Only $\hsq$ always ``makes full use'' of the knowledge about the local posterior landscape. 
\end{example}

We will always assume that the time needed to evaluate $\pi(x)$ (up to a normalizing constant)  for any $x \in \cX$  is $O(1)$, which is also the complexity of one iteration of RWMH.   
Generating a sample from an informed proposal distribution $K_h(x, \cdot)$ typically requires evaluating $\pi$ in the entire neighborhood $\cN(x)$, so the time complexity of one informed iteration at state $x$ is  $O(| \cN(x)| )$.  
This needs to be taken into account when we compare any informed MCMC method with RWMH. 

\subsection{Algorithm}\label{sec:iit} 
Let $x^{(1)}, x^{(2)}, \dots, x^{(t)}$ denote $t$ samples generated from an irreducible Markov chain with stationary distribution  $\tpi$.   
For any function $f \colon \cX \rightarrow \bbR$, we can estimate $\bbE_\pi[f]$   using the  self-normalized importance sampling estimator
\begin{equation}\label{eq:sni}
\hat{f}(t, \omega)  =   \frac{ \sum_{k=1}^t f(x^{(k)}) \omega (x^{(k)}) }{  \sum_{k=1}^t   \omega (x^{(k)}) }, 
\end{equation}   
where $\omega(x) = \pi(x) / \tpi(x)$ is called the importance weight of the sample $x$. 
Note that we do not need to know the normalizing constants of  $\pi$ and $\pi_h$, which are canceled out.  
Such an MCMC importance sampling scheme is commonly used with  $\tpi(x) \propto \pi(x)^{1 / T}$, where the ``temperature'' $T$ can be treated as either fixed or an auxiliary random variable~\citep{jennison1993discussion, neal1996sampling}.  \citet{gramacy2010importance} called this method importance tempering (IT), and they noted that successful applications of IT schemes were surprisingly rare. 
Recently,  \citet{zanella2019scalable} proposed the TGS algorithm by combining IT with Gibbs sampling, and a weighted version of TGS demonstrated excellent performance  in high-dimensional variable selection. The great success of TGS can be mainly attributed to its informed choice of the coordinate to update. 
The IIT algorithm we propose generalizes the main idea of TGS. 
 
\begin{alg}[Informed importance tempering]\label{alg:iit}
Let $x^{(0)} \in \cX$ and $h \colon  \hbbR \rightarrow \hbbR$ be given. 
Define $K_h$ by~\eqref{eq:def.h} and denote its  stationary distribution by $\pi_h$.  
For $k = 1, \dots, t$,   \smallskip  \\ 
\noindent (i) draw $x^{(k)}$ from $K_h(x^{(k-1)}, \cdot)$,  \smallskip  \\ 
\noindent (ii) calculate   $\comega^{(k)} \propto  \pi(x^{(k)}) / \pi_h(x^{(k)})$.  \smallskip  \\
Return samples $x^{(1)}, \dots, x^{(t)}$ and their un-normalized importance weights $\comega^{(1)}, \dots, \comega^{(t)}$. 
\end{alg}

\begin{remark}\label{rmk:iit.general}
The generic TGS algorithm was proposed as a Gibbs sampler on a product space, which updates one coordinate by conditioning on all the others. 
Algorithm~\ref{alg:iit} generalizes it to arbitrary finite spaces. 
Some algebra shows that the TGS algorithm for variable selection introduced in~\citet[Section 4.2]{zanella2019scalable} is a special case of Algorithm~\ref{alg:iit} with $h = \htgs$. 
We explain  in detail the link between TGS and Algorithm~\ref{alg:iit} in Supplement~\ref{supp:tgs}. 
\end{remark}

To implement Algorithm~\ref{alg:iit}, we need to evaluate  $\pi_h$.  The following lemma gives $\pi_h$ for two classes of reversible IIT schemes that are of particular interest to this work. 

\begin{lemma}\label{lm:pih}
Let $K_h$ be as given in~\eqref{eq:def.h} with stationary distribution $\pi_h$. 
If $h(u) = u^a$ for some $a \geq 0$, then  $\pi_h \propto \pi^{2a} Z_h$. 
If $h$ is a balancing function, then $\pi_h  \propto \pi Z_h$.  
Further, $K_h$ is reversible in both cases. 
\end{lemma}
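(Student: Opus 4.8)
The plan is to verify detailed balance directly, since for an irreducible chain on a finite state space, if a positive probability distribution $\mu$ satisfies $\mu(x) K_h(x,y) = \mu(y) K_h(y,x)$ for all $x, y \in \cX$, then $\mu$ is the unique stationary distribution and $K_h$ is reversible with respect to $\mu$. This single computation simultaneously establishes both the formula for $\pi_h$ and the reversibility claim, so each case reduces to checking that the probability flow induced by the candidate (already supplied in the statement) is symmetric in $(x,y)$.

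First I would handle the balancing case. Taking the candidate $\mu(x) \propto \pi(x) Z_h(x)$ and using~\eqref{eq:def.h}, the flow out of $x$ into a neighbor $y$ is
\[
\mu(x) K_h(x,y) \propto \pi(x)\, \ind_{\cN(x)}(y)\, h\!\left( \frac{\pi(y)}{\pi(x)} \right),
\]
where the factor $Z_h(x)$ cancels. Applying the balancing identity $h(u) = u\, h(1/u)$ with $u = \pi(y)/\pi(x)$ rewrites $\pi(x)\, h(\pi(y)/\pi(x))$ as $\pi(y)\, h(\pi(x)/\pi(y))$, and the symmetry of $\cN$ gives $\ind_{\cN(x)}(y) = \ind_{\cN(y)}(x)$. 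The resulting expression is exactly $\mu(y) K_h(y,x)$, so detailed balance holds with $\pi_h \propto \pi Z_h$.

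Next I would treat the power case $h(u) = u^a$ in the same manner, with candidate $\mu(x) \propto \pi(x)^{2a} Z_h(x)$. Here the flow simplifies to
\[
\mu(x) K_h(x,y) \propto \pi(x)^{2a}\, \ind_{\cN(x)}(y)\, \left( \frac{\pi(y)}{\pi(x)} \right)^{a} = \pi(x)^{a} \pi(y)^{a}\, \ind_{\cN(x)}(y),
\]
which is manifestly symmetric in $x$ and $y$ once $\ind_{\cN(x)}(y) = \ind_{\cN(y)}(x)$ is used, again yielding detailed balance. As a consistency check, the choice $a = 1/2$ makes $h = \hsq$ simultaneously a power function and a balancing function, and the two formulas agree since $\pi^{2 \cdot 1/2} Z_h = \pi Z_h$.

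The only remaining points to confirm are that the state space is finite and that $\pi$ and $Z_h$ are strictly positive, so that each candidate $\mu$ is a finite positive measure which normalizes to a genuine probability distribution on $\cX$. I do not anticipate any real obstacle: the entire argument is a direct detailed-balance verification, and the substantive observation is merely that the defining functional equation of $h$ is precisely what symmetrizes the probability flow.
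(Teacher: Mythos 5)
Your proof is correct and follows essentially the same route as the paper: both verify the detailed balance condition directly, with the power case reducing to the symmetric expression $\pi(x)^a\pi(y)^a$ and the balancing case following from the identity $h(u) = u\,h(1/u)$ applied at $u = \pi(y)/\pi(x)$. The only cosmetic difference is that the paper writes out the power case and leaves the balancing case as ``similar,'' whereas you do the reverse and also spell out the (standard) uniqueness-of-stationary-distribution step.
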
 

In Supplement~\ref{S:alg}, we detail the implementation of IIT using Lemma~\ref{lm:pih} for the two classes of function $h$ considered above. 
When $h$ is a balancing function, we refer to Algorithm~\ref{alg:iit} as a locally balanced IIT scheme. 

Suppose $h$ is non-decreasing. Since $Z_h$ is defined as the sum of  $h(\pi(y) / \pi(x))$ for all neighboring states $y$, $Z_h$ is most likely to achieve its minimum (resp. maximum) at a local maximum (resp. minimum) point of $\pi$. But in most parts of $\cX$, we expect that $Z_h(x)$ does not depend much on $\pi(x)$ and it behaves just like a random noise. 
Hence, Lemma~\ref{lm:pih} suggests that  for locally balanced IIT schemes,  the distribution $\pi_h$ can be seen as a random perturbation of $\pi$ (except around local extrema of $\pi$). 
This is an intuitive reason why locally balanced IIT schemes may work well as importance sampling tends to be most efficient when  $\pi_h$ looks similar to the target distribution $\pi$~\citep[Chapter 2.5]{liu2008monte}.   
For $h(u) = u^a$, by Lemma~\ref{lm:pih}, we have $\omega = \pi/\pi_h \propto \pi^{1 - 2a} Z_h^{-1}$. 
Ignoring the term $Z_h^{-1}$, we see that  if $a > 1/2$, states with negligible posterior probabilities can receive exceedingly large importance weights, which can cause the estimator~\eqref{eq:sni} to converge very slowly. 
This will be confirmed in the next section by analyzing a toy example. 
 
\subsection{Measuring Rates of Convergence}\label{sec:gap}
Given an irreducible and reversible transition matrix $P$,  we can  denote its eigenvalues by  $1 = \lambda_1(P) > \lambda_2(P) \geq \cdots \geq \lambda_{|\cX|}(P) \geq -1$, and define its spectral gap  by $\gap(P) = 1 - \lambda_2(P).$ 
If all eigenvalues of $P$ are non-negative, it is well known that the mixing time of $P$ can be bounded by $C_\pi \gap(P)^{-1} $, where the constant $C_\pi$ only depends on $ \min_{x \in \cX} \pi(x)$~\citep{sinclair1992improved}. 
If $Q$ is the  transition rate matrix of an irreducible and reversible continuous-time Markov chain, it has eigenvalues $0 = \lambda_1(Q) > \lambda_2(Q) \geq \cdots \geq \lambda_{|\cX|}(Q) $, and we define its spectral gap by $\gap(Q) =   - \lambda_2(Q).$

Though the performance of the IIT estimator partially depends on the mixing rate of $K_h$, $\gap(K_h)$ does not reflect the overall efficiency of IIT since it does not take into account the importance weights.  
But if we replace $\omega(x^{(k)})$ in~\eqref{eq:sni} with an exponential random variable with mean $\omega(x^{(k)})$, we obtain the (unweighted) time average of a continuous-time Markov chain. This motivates us to use the  spectral gap of this new chain, which we denote by $Q_h$,  to measure  the ``convergence rate'' of IIT. 
The importance weight of state $x$, $\omega(x)$, becomes the average holding time at state $x$ of the chain $Q_h$. 
The following result was proved in~\citet[Lemma 2]{zanella2019scalable}  for TGS by using a variational characterization of the asymptotic variance~\citep{andrieu2016establishing}.  We give a different proof in the supplement. 
As usual, $N(0, \sigma^2)$ denotes the normal distribution (or a normal random variable) with mean $0$ and variance $\sigma^2$. 
\begin{lemma}\label{lm:tgs}
Consider Algorithm~\ref{alg:iit} and assume $K_h$ is reversible with respect to $\pi_h$. 
Define a transition rate matrix $Q_h$ by 
\begin{equation}\label{eq:def.Q}
Q_h(x, y) = \left\{\begin{array}{cc}
K_h(x, y) / \omega(x), \quad & \text{ if } x \neq y, \\
- \sum_{x' \neq x} Q_h(x, x'), \quad & \text{ if } x = y,  \\
\end{array}
\right.
\end{equation}
where $\omega = \pi / \pi_h$. 
Let  $x^{(1)}, \dots, x^{(t)}$ be samples generated from $K_h$. Consider the estimator $\hat{f}(t, \omega) $ defined in~\eqref{eq:sni} for some function $f$ such that $\bbE_\pi [f] = 0$. 
Then, $\sqrt{t} \hat{f}(t, \omega)  \overset{D}{\rightarrow} N(0,   \sigma^2  )$   where $\overset{D}{\rightarrow}$ denotes the convergence in distribution and $\sigma^2 \leq 2 \bbE_\pi[f^2] / \gap(Q_h)$. 
\end{lemma}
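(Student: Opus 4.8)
The plan is to reduce the stated central limit theorem to a standard Markov chain CLT for the reversible jump chain $K_h$, and then to translate the resulting asymptotic variance into a spectral quantity of the continuous-time chain $Q_h$. First I would set $g = f\omega$ and $w = \omega$, and observe that, since $K_h$ is reversible with respect to $\pi_h$ and $\omega = \pi/\pi_h$, we have $\bbE_{\pi_h}[g] = \bbE_\pi[f] = 0$ and $\bbE_{\pi_h}[w] = \sum_x \pi(x) = 1$. Writing the self-normalized estimator as $\sqrt{t}\,\hat f(t,\omega) = \big(t^{-1/2}\sum_k g(x^{(k)})\big)\big/\big(t^{-1}\sum_k w(x^{(k)})\big)$, the ergodic theorem sends the denominator to $1$, while the numerator obeys the usual CLT for the reversible chain $K_h$. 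Slutsky's theorem then gives $\sqrt{t}\,\hat f(t,\omega)\overset{D}{\to}N(0,\sigma^2)$ with $\sigma^2$ equal to the discrete-time asymptotic variance $\sigma_g^2$ of $g$ under $K_h$. (I would note the standard caveat that if $K_h$ has $-1$ as an eigenvalue one passes to a lazy version, or restricts $g$ to the orthogonal complement of that eigenspace; on a finite state space this causes no essential difficulty.)

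The heart of the argument is an operator identity linking the two chains. Using $\sum_{j\ge0}K_h^j=(I-K_h)^{-1}$ on the $\pi_h$-mean-zero subspace, the reversible asymptotic variance is $\sigma_g^2 = 2\langle g,(I-K_h)^{-1}g\rangle_{\pi_h}-\norm{g}_{\pi_h}^2$. On the other hand, from \eqref{eq:def.Q} the generator satisfies $Q_h = D^{-1}(K_h-I)$ with $D=\mathrm{diag}(\omega)$, so if $u$ solves $-Q_h u = f$ then $(I-K_h)u = \omega f = g$, i.e.\ $u=(I-K_h)^{-1}g$. A one-line change of measure then yields $\langle f,(-Q_h)^{-1}f\rangle_\pi=\sum_x\pi(x)f(x)u(x)=\sum_x\pi_h(x)g(x)u(x)=\langle g,(I-K_h)^{-1}g\rangle_{\pi_h}$, whence $\sigma_g^2 = 2\langle f,(-Q_h)^{-1}f\rangle_\pi - \norm{g}_{\pi_h}^2 \le 2\langle f,(-Q_h)^{-1}f\rangle_\pi$. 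The right-hand side is exactly the continuous-time asymptotic variance, and the correction $-\norm{g}_{\pi_h}^2$ is the Rao--Blackwell gain reflecting that the deterministic weights $\omega(x^{(k)})$ outperform the random exponential holding times of $Q_h$.

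Finally I would bound $\langle f,(-Q_h)^{-1}f\rangle_\pi$ by the spectral gap. Since $Q_h$ is irreducible and, as one checks directly from the reversibility of $K_h$, reversible with respect to $\pi$, it admits a $\pi$-orthonormal eigenbasis with eigenvalues $0 = \lambda_1 > \lambda_2 \ge \cdots$; expanding the mean-zero $f$ in this basis gives $\langle f,(-Q_h)^{-1}f\rangle_\pi \le \gap(Q_h)^{-1}\norm{f}_\pi^2 = \gap(Q_h)^{-1}\bbE_\pi[f^2]$, using $\norm{f}_\pi^2 = \Var_\pi(f) = \bbE_\pi[f^2]$. Chaining the three bounds yields $\sigma^2 = \sigma_g^2 \le 2\bbE_\pi[f^2]/\gap(Q_h)$.

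I expect the main obstacle to be the careful bookkeeping across the two inner-product spaces $L^2(\pi)$ and $L^2(\pi_h)$: verifying the invertibility of $I-K_h$ and of $-Q_h$ on the correct mean-zero subspaces, confirming that the discrete-time CLT indeed applies to $g$ (handling any periodicity of $K_h$), and justifying the exchange between the closed-form variance sum and the resolvent identity. The spectral-gap estimate itself is routine once reversibility of $Q_h$ with respect to $\pi$ is in hand.
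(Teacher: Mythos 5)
Your proof is correct, but it takes a genuinely different route from the paper's for the key variance comparison. The paper proceeds probabilistically: it introduces auxiliary exponential holding times $W^{(k)}$ with conditional mean $\omega(x^{(k)})$, recognizes $\sum_k f(x^{(k)})W^{(k)}/\sum_k W^{(k)}$ as the time average of the continuous-time chain $Q_h$, invokes the standard bound $\sigma_{\rm c}^2 \leq \bbE_\pi[f^2]/\gap(Q_h)$ for that chain (citing Aldous--Fill), and then computes directly that $\sigma^2 = \sigma_{\rm c}^2 - \bbE_\pi[f^2\omega]$, the deficit coming from the extra variance of the exponential weights. You instead stay entirely in the discrete-time world and make the comparison algebraic: the resolvent formula $\sigma_g^2 = 2\langle g,(I-K_h)^{-1}g\rangle_{\pi_h}-\norm{g}_{\pi_h}^2$ for the reversible chain $K_h$, the operator identity $Q_h = D^{-1}(K_h - I)$ with $D = \mathrm{diag}(\omega)$, and the change of measure $\pi f = \pi_h g$ convert the discrete-time asymptotic variance into $2\langle f,(-Q_h)^{-1}f\rangle_\pi - \norm{g}_{\pi_h}^2$, after which the spectral bound is routine. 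Note that your correction term agrees exactly with the paper's, since $\norm{g}_{\pi_h}^2 = \bbE_{\pi_h}[f^2\omega^2] = \bbE_\pi[f^2\omega]$, so the two arguments prove the identical sharp identity. What your route buys is self-containedness --- no auxiliary randomization, no appeal to the Deligiannidis--Lee bivariate-chain argument, and the Rao--Blackwell gain appears as an explicit algebraic term; what the paper's route buys is the explicit probabilistic link between IIT and the continuous-time ``Zanella process,'' which the paper exploits elsewhere in its narrative (Remark~\ref{rmk:Qh} and Section~\ref{sec:literature}). One point to keep honest in your write-up: the resolvent identity and the CLT for $g$ under a possibly periodic $K_h$ should be justified spectrally (the formula $\sigma_g^2 = \sum_i \frac{1+\lambda_i}{1-\lambda_i}|\langle g,\phi_i\rangle_{\pi_h}|^2$ is finite even when $-1$ is an eigenvalue), and the two resolvents $(I-K_h)^{-1}g$ and $(-Q_h)^{-1}f$ are each defined only up to additive constants, which is harmless here because both pairings are against mean-zero measures --- you flag both issues, and on a finite state space they resolve exactly as you say.
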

 
\begin{remark}\label{rmk:Qh}
We call $\sigma^2$ the asymptotic variance of the estimator $\hat{f} (t, \omega)$. 
Analogously, the asymptotic variance of an unweighted time average  of a discrete-time Markov chain $P$ can be bounded by $\bbE_\pi[f^2] / \gap(P)$, regardless of the periodicity of $P$~\citep[Proposition 1]{rosenthal2003asymptotic}.  
Henceforth, we use $\gap(Q_h)$  to measure the convergence rate of Algorithm~\ref{alg:iit}, and similarly, the convergence rate of RWMH is measured by the spectral gap of its transition matrix. 
\end{remark} 

From the construction of $Q_h$, we see that $\omega(x)$ needs to be small for states $x$ with negligible posterior probabilities so that the chain $Q_h$ can quickly move to better states. Below we analyze the importance weight $\omega$ for a toy model, which provides important insights  into the choice of $a$  for IIT schemes with $h(u) = u^a$. 

\begin{example}\label{ex:bad}
Let $\cX = \{0, 1, \dots, p\}$ and $\cN(x) = \{ y \in \cX \colon |x - y| = 1 \}$.  Assume that $\pi(0) = \pi(1)$, and for $k = 1, \dots, p-1$,  $\pi(k) / \pi(k + 1) \geq r$ for some constant $r = r(p)$ such that  $r \rightarrow \infty$ as $p \rightarrow \infty$. 
This setup ensures that $\pi(k)$ quickly decays as $k$ grows so that $\pi$ concentrates on only states $0$ and $1$. 
Letting $\sim$ denote asymptotic equivalence as $p \rightarrow \infty$, one can show that $\pi(0) = \pi(1) \sim 1/ 2$. 
Consider IIT schemes with $h(u) = u^a$ for some fixed $a > 0$.  
In Supplement~\ref{supp:bad}, we show that as $p \rightarrow \infty$, 
\begin{align*}
    \omega(k) = \frac{\pi(k)} {\pi_h(k)} \sim 2^{1-2a} \frac{  \pi(k)^{1-a}  }{  \pi(k - 1)^a }, \quad   k = 1, \dots, p. 
\end{align*} 
We make a few observations.  \smallskip \\
 \noindent  (i) If $a \geq 1$,  we have $ \omega(p) \gtrsim 2^{1-a} r^{(p - 2)a}$, which grows super-exponentially with $p$ (as we assume $r \rightarrow \infty$).  \smallskip \\
 \noindent  (ii)   If $0 < a \leq 1/2$, for each $k \geq 2$, we have $\omega(k) \lesssim 2 r^{-a}$, which goes to zero as $r \rightarrow \infty$. \smallskip \\
 \noindent  (iii)    Suppose for $k = 1, \dots, p$, $\pi(k) \propto r^{ - k^{c} }$ for some universal constant $c  \geq  1$. One can show that for any $a > 1/2$, $\omega(p)$ still grows super-exponentially with $p$.  \smallskip \\
Hence, for this toy model, $Q_h$ always mixes quickly for any $a \in (0, 1/2]$, while $a > 1/2$ can easily make $Q_h$ mix slowly, even if the tail of $\pi$ decays quickly.  
Since a very small value of $a$ defeats the purpose of using a locally informed sampling scheme, this example suggests that we may want to use $ a = 1/2$ in practice. 
Interestingly,  $h(u) = u^a$ is a balancing function only when $a = 1/2$. 
\end{example}

\section{\MakeUppercase{Spectral Gap Bounds}}\label{sec:main}

\subsection{Results for Unimodal Targets} \label{sec:unimodal}
MCMC sampling from continuous distributions has been extensively studied in the literature. A commonly used assumption in these works is the log-concavity or strong log-concavity of the target; see~\citet[Section 9.10]{saumard2014log} for a brief review, and for recent works, see~\citet{dalalyan2017theoretical, mangoubi2017rapid, dwivedi2018log, cheng2018underdamped, mangoubi2019mixing, shen2019randomized}, among many others. 
All log-concave continuous distributions are unimodal and have sub-exponential tails. 
In our setting, we propose to consider the following condition on $\pi$, which also assumes unimodality and ``sub-exponential tails'' but conceptually requires less than log-concavity. 
 
\begin{assumption}\label{A1}
Let $|\cX| < \infty$, $\pi(x) > 0$ for each $x$, and $\cN$ be a symmetric neighborhood mapping such that $ \max_{x \in \cX}| \cN(x) | \leq p^\alpha$ for some $\alpha > 0$ and $p > 1$.   
There exist a state $x^* \in \cX$, an operator $\cT \colon \cX \rightarrow \cX$ and some constants $\nu > \alpha$ such that $\cT(x) \in \cN(x)$ and $\pi( \cT(x) ) \geq p^\nu \pi(x)$ for each $x \neq x^*$. Define $\cT(x^*) = x^*$. 
\end{assumption}
  
\begin{remark}\label{rmk:uni1} 
Observe that under Assumption~\ref{A1}, we have $x^*  = \argmax_{x \in \cX} \pi(x)$, and  $x^*$ is the unique fixed point of $\cT$.
To see the link between Assumption~\ref{A1} and log-concavity, first consider the one-dimensional case where $\cX = \{0, 1, \dots, m \}$ for some finite $m$. Then $\pi$ is log-concave if $\pi(x)^2 \geq \pi(x + 1) \pi(x - 1)$ for each $x$~\citep{saumard2014log}. Assuming $\pi$ is maximized at $x = 0$, we can define $\cT(x) = x - 1$ for each $x > 0$, and use log-concavity of $\pi$ to show that $\pi(\cT(x)) / \pi(x) \geq \pi(0) / \pi(1)$ for each $x \geq 1$. 
In higher dimensions, we point out that Assumption~\ref{A1} requires a much weaker notion of unimodality than log-concavity.  
If $\varphi$ is a log-concave  probability density function  defined on $\bbR^p$, then for any unit vector $\se$, the mapping $x \mapsto \varphi(x \se)$ for $x \in \bbR$ is again log-concave and thus unimodal. 
In contrast, Assumption~\ref{A1} does not require $\pi$ to be unimodal ``in every direction''; see Supplement~\ref{supp:uni} for a toy example. This is very important when we consider high-dimensional model selection; see Example~\ref{ex:var2} (continued) below. 
The condition $\nu > \alpha$ is imposed in Assumption~\ref{A1} so that $\pi$ has ``sub-exponential tails''. 
Let $D(x) = \min \{k \colon \cT^k (x) = x^*\}$ denote the distance from $x$ to the mode $x^*$.  One can show that  $D$ is sub-exponential:   $\pi(\{D \geq k\}) \leq e^{- c k}$  for $c = (\nu - \alpha) \log p$.  
\end{remark}

For some model selection problems, we can rigorously verify Assumption~\ref{A1} by imposing mild high-dimensional conditions and letting $x^*$ be the ``true'' model  (more precisely, $x^*$ is often defined as the model that contains all signals that exceed some threshold). 
Below we briefly explain why we expect it holds for variable selection, and we refer readers to~\citet{zhou2021complexity} for how to establish Assumption~\ref{A1} for sparse structure learning. 

\setcounter{example}{0}
\begin{example}[continued]\label{ex:var2}
Consider Example~\ref{ex:var1} again.   
\citet{yang2016computational} considered high-dimensional Bayesian variable selection with a standard g-prior for linear regression models and a sparsity prior on $x$. In their Lemma 4, the authors essentially proved that, under some reasonable high-dimensional conditions,  Assumption~\ref{A1} holds for the triple $(\cX, \cN^2, \pi)$ where we recall $\cN^2$ denotes the ``add-delete-swap'' neighborhood and it satisfies  $|\cN^2(x)| = O(s p)$ for any $x$.  
To explain their construction of the function $\cT$,  let $x^*$ denote the ``true'' model, and we say the $i$-th variable is ``influential'' if $x^*_i = 1$ and ``non-influential'' if $x^*_i = 0$.  We say   $x$ is ``overfitted'' if $x_i \geq x^*_i$ for each $i \in \{1, \dots, p\}$.  
First, if $x \neq x^*$ is overfitted, we  let $\cT(x)$ be the best model that we can obtain by removing a non-influential variable from  $x$  (``best'' means it maximizes $\pi$). 
Second, if $x$ is underfitted (i.e., not overfitted) and $\norm{x}_1 < s$,  we let  $\cT(x)$ be the best model that we can obtain by adding an influential variable. Finally, if $x$ is underfitted and $\norm{x}_1 = s$, we use the best swap move that exchanges a non-influential variable with an influential one. 
By construction, we have $\norm{ \cT(x) - x^*}_1 < \norm{ x - x^*}_1$ for any $x \neq x^*$. This explains why Assumption~\ref{A1} can be seen as a consistency  property of  model  selection procedures: it assumes that  any $x \neq x^*$ has a neighbor $\cT(x)$  which has a larger posterior probability and   is ``more similar'' to $x^*$ than $x$.  
It is important to note that in the proof of~\citet{yang2016computational},  $\cT(x)$ is always defined to be the best move of its type.  
In particular, if $x$ is underfitted, there is no guarantee that we can increase the posterior probability by adding any influential variable that is not in $x$, because of the collinearity of the design matrix. 
This underscores one point made in Remark~\ref{rmk:uni1}: under Assumption~\ref{A1}, $\pi$ can still look very ``irregular'' due to the dependence among coordinate variables (if $\cX$ takes a product form), which is very likely to happen in high-dimensional model selection.  
Lastly, we note that from a purely theoretical standpoint, there is often no loss of generality in assuming that $\nu$ in Assumption~\ref{A1} is a sufficiently large universal constant, which is explained in detail in~\citet[Supplement S2]{zhou2021local} for  the variable selection problem.  
\end{example}
\setcounter{example}{3}
 
It is known that when the unique mode is ``sufficiently sharp'', we can bound the spectral gap of a reversible Markov chain   using ``canonical paths''~\citep{sinclair1992improved}. 
In Lemma~\ref{lm:main} below, we further improve the existing spectral gap bounds~\citep{yang2016computational, zhou2021complexity} by using a refined path argument.  The key idea of our proof is to measure the ``length'' of each edge in the canonical paths in light of $\pi$. 
 
\begin{lemma}\label{lm:main} 
Suppose Assumption~\ref{A1} holds. 
For any transition matrix $P$ or transition rate matrix $Q$ that is irreducible and  reversible with respect to $\pi$, we have 
\begin{equation*}
\gap(P) \geq   \kappa(p, \alpha, \nu) \min_{x \neq x^*}  P(x, \cT(x)), 
\end{equation*}
where
\begin{equation*}
\kappa(p, \alpha, \nu) =  \frac{1}{2}  \left\{ 1 -   p^{-(\nu - \alpha) / 2}   \right\}^3, 
\end{equation*}   
and  $\gap(Q) \geq  \kappa(p, \alpha, \nu) \min_{x \neq x^*} Q(x, \cT(x))$. 
\end{lemma}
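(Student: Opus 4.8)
The plan is to establish a Poincar\'e inequality $\Var_\pi(f) \le \rho\,\mathcal{E}(f,f)$ with an explicit congestion constant $\rho$, since for any irreducible reversible $P$ (resp.\ rate matrix $Q$) one has $\gap = \min_f \mathcal{E}(f,f)/\Var_\pi(f)$ over non-constant $f$, where $\mathcal{E}(f,f) = \tfrac12\sum_{x,y}\pi(x)P(x,y)(f(x)-f(y))^2$ and the analogous Dirichlet form holds for $Q$. Both claims will follow from a single argument because only the edge conductances $\pi(x)P(x,y)$ enter. First I would use the reference-point bound $\Var_\pi(f) = \min_c \bbE_\pi[(f-c)^2] \le \sum_x \pi(x)(f(x)-f(x^*))^2$, and for each $x\neq x^*$ route a single canonical path from $x$ to the mode along the drift orbit $x, \cT(x), \cT^2(x), \dots, \cT^{D(x)}(x) = x^*$, whose edges $(\cT^j x, \cT^{j+1}x)$ all carry positive conductance since $\cT(v)\in\cN(v)$. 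Telescoping $f(x)-f(x^*)$ along this path and applying a weighted Cauchy--Schwarz with an edge weight $a_u = \pi(u)^\theta$ attached to the edge $(u,\cT(u))$ --- the step that ``measures the length of each edge in light of $\pi$'' --- gives
\[
(f(x)-f(x^*))^2 \le S(x)\sum_{j=0}^{D(x)-1} a_{\cT^j x}\,(f(\cT^j x)-f(\cT^{j+1}x))^2, \qquad S(x) = \sum_{j=0}^{D(x)-1} a_{\cT^j x}^{-1}.
\]

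Next I would multiply by $\pi(x)$, sum over $x$, and interchange the order of summation so the bound is organized by drift edges. Writing $\Anc(u) = \{x : \cT^k(x) = u \text{ for some } k\ge 0\}$ for the set of states whose path passes through $(u,\cT(u))$, this yields $\Var_\pi(f) \le \sum_{u\neq x^*} a_u C(u)(f(u)-f(\cT u))^2$ with $C(u) = \sum_{x\in\Anc(u)}\pi(x)S(x)$. Because $\cT$ strictly increases $\pi$ there are no $2$-cycles, so distinct tails $u$ give distinct unordered edges; reversibility then gives $\mathcal{E}(f,f)\ge \sum_{u\neq x^*}\pi(u)P(u,\cT u)(f(u)-f(\cT u))^2$, whence $\rho = \max_{u\neq x^*} a_u C(u)/(\pi(u)P(u,\cT u))$. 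The heart of the estimate is two nested geometric series controlled by the two quantitative inputs of Assumption~\ref{A1}: along any path $\pi$ grows by a factor at least $p^\nu$ per step, so $S(x)\le \pi(x)^{-\theta}/(1-p^{-\nu\theta})$; and since $\{v:\cT(v)=w\}\subseteq\cN(w)$ forces $|\{x:\cT^k x = u\}|\le p^{\alpha k}$ while $\pi(x)\le p^{-\nu k}\pi(u)$ on that set, one gets $\sum_{x\in\Anc(u)}\pi(x)^{1-\theta}\le \pi(u)^{1-\theta}/(1-p^{\alpha-\nu(1-\theta)})$.

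Combining these, the powers of $\pi(u)$ cancel exactly: $a_u C(u)/\pi(u) \le (1-p^{-\nu\theta})^{-1}(1-p^{\alpha-\nu(1-\theta)})^{-1}$, independent of $u$. Choosing $\theta = (\nu-\alpha)/(2\nu)$ balances the two ratios, both becoming $\delta := p^{-(\nu-\alpha)/2}$ (and the convergence conditions $\theta>0$, $\alpha-\nu(1-\theta)<0$ hold since $0<\theta<(\nu-\alpha)/\nu$), so that $\rho \le [\,(1-\delta)^2 \min_{x\neq x^*}P(x,\cT x)\,]^{-1}$ and hence $\gap(P)\ge (1-\delta)^2 \min_{x\neq x^*}P(x,\cT x)$. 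Since $(1-\delta)^2\ge \tfrac12(1-\delta)^3$ for $\delta\in(0,1)$, this already implies the stated bound with constant $\kappa(p,\alpha,\nu)$; the continuous-time claim follows verbatim with $P$ replaced by $Q$, as only the conductances change. I expect the main obstacle to be the bookkeeping in the interchange of summation --- correctly identifying $\Anc(u)$, verifying that each $x\in\Anc(u)$ has a \emph{unique} $k$ with $\cT^k x = u$ (again from the absence of cycles), and matching the swapped sum to the Dirichlet form through reversibility and the injectivity of drift-edge tails --- together with confirming that both geometric series converge for the chosen $\theta$.
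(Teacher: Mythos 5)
Your proof is correct, and its engine is the same as the paper's: canonical paths along the drift orbits $(x,\cT(x),\cT^2(x),\dots)$, edge weights that are powers of $\pi$ with the exponent $\theta=(\nu-\alpha)/(2\nu)$, and the two geometric series coming from the $p^{\nu}$-growth of $\pi$ along paths and the bound $|\{v:\cT(v)=w\}|\le|\cN(w)|\le p^{\alpha}$ on preimages. The packaging differs in ways worth noting. First, you use the elementary reference-point bound $\Var_\pi(f)\le\sum_x\pi(x)\,(f(x)-f(x^*))^2$ and route every state only to the mode, whereas the paper invokes the all-pairs weighted path theorem of Saloff-Coste (Theorem 3.2.3) and must control $|\gamma(x,y)|_\ell\le|\gamma(x,x^*)|_\ell+|\gamma(y,x^*)|_\ell$, paying an extra geometric-series factor and a factor of $2$; this is exactly why you obtain the sharper constant $(1-\delta)^2$, with $\delta=p^{-(\nu-\alpha)/2}$, versus the paper's $\tfrac12(1-\delta)^3$, and your observation that $(1-\delta)^2\ge\tfrac12(1-\delta)^3$ correctly recovers the stated bound. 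Second, you handle the rate matrix $Q$ directly through its Dirichlet form, while the paper reduces to discrete time by uniformization $P=b^{-1}Q+I$; both are valid, yours being marginally cleaner. Third, the bookkeeping issues you flagged are genuinely resolved by the structure you identified: since $\pi(\cT(u))\ge p^{\nu}\pi(u)>\pi(u)$, the orbit of $\cT$ has no cycles, so the index $k$ with $\cT^k x=u$ is unique, distinct tails $u$ give distinct unordered edges, and the drop-to-drift-edges lower bound on $\mathcal{E}(f,f)$ is legitimate. The only cosmetic omission is the degenerate case $\min_{x\neq x^*}P(x,\cT(x))=0$, where the claimed inequality holds trivially (the paper dispatches this with one sentence); with that remark added, your argument is a complete and slightly stronger proof of the lemma.
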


\begin{remark}\label{rmk:complexity}
Lemma~\ref{lm:main} is non-asymptotic.  For high-dimensional model selection problems, we can consider the asymptotic regime where $p \rightarrow \infty$ and $\nu > \alpha$ are fixed constants. 
Then,  by Lemma~\ref{lm:main}, the convergence rate of an MCMC algorithm has the same order as the minimum transition probability/rate from $x$ to $\cT(x)$. 
For RWMH, one can use Lemma~\ref{lm:main} to show that  its convergence rate is $O(p^{-\alpha})$. 
Since each IIT iteration has complexity $O(p^\alpha)$,   we need $\gap(Q_h) \geq c$ for some universal constant $c$ so that the ``real-time convergence rate'' of IIT  is at least as fast as that of RWMH. 
\end{remark}

\begin{theorem}\label{th:one}
Suppose Assumption~\ref{A1} holds and let $\kappa(p, \alpha, \nu)$ be as given in Lemma~\ref{lm:main}. 
For Algorithm~\ref{alg:iit} with some non-decreasing balancing function $h$, we have 
\begin{equation*} 
\gap(Q_h)  \geq  \kappa(p, \alpha, \nu) \frac{h(p^\nu) }{ \bbE_\pi[Z_h ] }, 
\end{equation*} 
where $Q_h$ is as defined in~\eqref{eq:def.Q},
Further,  for the three balancing functions defined in~\eqref{def:three}, we have 
\begin{align*}
\frac{ 2 \gap(Q_h) }{   \kappa(p, \alpha, \nu)  } \geq \left\{\begin{array}{ll}
p^{\nu - \alpha},     &  \text{ if } h = \htgs, \\ 
p^{\nu - 2 \alpha},         &  \text{ if } h = \hmin,  \\
p^{\nu / 2}/ (p^{2 \alpha - \nu} + p^{\alpha - \nu / 2}   ),   &  \text{ if } h  = \hsq. 
\end{array} \right.
\end{align*}
\end{theorem}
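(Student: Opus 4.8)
The plan is to instantiate Lemma~\ref{lm:main} at the rate matrix $Q_h$ and then to control the numerator $h(p^\nu)$ and the denominator $\bbE_\pi[Z_h]$ of the resulting bound separately. First I would verify the hypotheses of Lemma~\ref{lm:main}: because $h$ is a balancing function, Lemma~\ref{lm:pih} gives $\pi_h\propto\pi Z_h$ and the reversibility of $K_h$ with respect to $\pi_h$; combined with $\omega=\pi/\pi_h$ and the definition~\eqref{eq:def.Q}, this shows $\pi(x)Q_h(x,y)=\pi_h(x)K_h(x,y)$, which is symmetric in $(x,y)$, so $Q_h$ is reversible with respect to $\pi$ (not $\pi_h$). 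Irreducibility of $Q_h$ is inherited from $K_h$, so Lemma~\ref{lm:main} yields $\gap(Q_h)\ge\kappa(p,\alpha,\nu)\min_{x\ne x^*}Q_h(x,\cT(x))$.

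The key observation is that the state-dependent normalizer $Z_h(x)$ cancels in $Q_h(x,\cT(x))$. Normalizing $\pi_h\propto\pi Z_h$ gives $\omega(x)=\pi(x)/\pi_h(x)=\bbE_\pi[Z_h]/Z_h(x)$, while $K_h(x,\cT(x))=h(\pi(\cT(x))/\pi(x))/Z_h(x)$, so $Q_h(x,\cT(x))=h(\pi(\cT(x))/\pi(x))/\bbE_\pi[Z_h]$ exactly; since $h$ is non-decreasing and $\pi(\cT(x))/\pi(x)\ge p^\nu$ under Assumption~\ref{A1}, this is at least $h(p^\nu)/\bbE_\pi[Z_h]$, which is the first display. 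It then remains to bound $\bbE_\pi[Z_h]$ from above for the three functions in~\eqref{def:three}, using $\htgs(p^\nu)\ge p^\nu$, $\hmin(p^\nu)=1$ and $\hsq(p^\nu)=p^{\nu/2}$.

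Writing $\bbE_\pi[Z_h]=\sum_x\sum_{y\in\cN(x)}w(x,y)$ with $w(x,y)=\pi(x)h(\pi(y)/\pi(x))$, the balancing identity makes $w$ symmetric, equalling $\pi(x)+\pi(y)$, $\min(\pi(x),\pi(y))$, and $\sqrt{\pi(x)\pi(y)}$ for the three choices. The case $\htgs$ is immediate, $\bbE_\pi[Z_{\htgs}]=2\sum_x\pi(x)|\cN(x)|\le 2p^\alpha$, giving $p^{\nu-\alpha}$. For $\hmin$ and $\hsq$ the concentration in Assumption~\ref{A1} enters through two facts recorded in Remark~\ref{rmk:uni1}: every $y\ne x^*$ has $\pi(y)\le p^{-\nu}$ (as $\pi(\cT(y))\le 1$), and $\sum_{x\ne x^*}\pi(x)=\pi(\{D\ge 1\})\le p^{\alpha-\nu}$. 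For $\hmin$ I would split each directed edge into an ascending part ($\pi(y)\ge\pi(x)$) and a descending part; since $x^*$ has no ascending neighbour, both sums live on $\{x\ne x^*\}$ and are at most $p^\alpha\sum_{x\ne x^*}\pi(x)\le p^{2\alpha-\nu}$, whence $\bbE_\pi[Z_{\hmin}]\le 2p^{2\alpha-\nu}$ and the bound $p^{\nu-2\alpha}$.

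The main obstacle is the $\hsq$ estimate, since the target $p^{2\alpha-\nu}+p^{\alpha-\nu/2}$ is not reachable by a single global AM--GM or Cauchy--Schwarz step (a single free parameter only yields the weaker $p^{(3\alpha-\nu)/2}$, which exceeds both target terms). The device I would use is to split edges by incidence to the mode. For edges with both endpoints different from $x^*$, applying $\sqrt{\pi(x)\pi(y)}\le(\pi(x)+\pi(y))/2$ and then $\sum_{x\ne x^*}\pi(x)\le p^{\alpha-\nu}$ contributes at most $p^{2\alpha-\nu}$; for the at most $2|\cN(x^*)|\le 2p^\alpha$ edges incident to $x^*$, I would instead use $\sqrt{\pi(x^*)\pi(y)}\le p^{-\nu/2}$ (from $\pi(x^*)\le 1$ and $\pi(y)\le p^{-\nu}$), contributing at most $2p^{\alpha-\nu/2}$. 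This gives $\bbE_\pi[Z_{\hsq}]\le p^{2\alpha-\nu}+2p^{\alpha-\nu/2}\le 2(p^{2\alpha-\nu}+p^{\alpha-\nu/2})$, and dividing $h(p^\nu)=p^{\nu/2}$ by it produces the stated bound; the factor $2$ on the left of the theorem is precisely what absorbs the constants from these three denominator estimates.
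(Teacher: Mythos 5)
Your proposal is correct and follows essentially the same route as the paper's proof: apply Lemma~\ref{lm:main} to $Q_h$, use the cancellation $Q_h(x,y)=h(\pi(y)/\pi(x))/\bbE_\pi[Z_h]$ (the paper's equation~\eqref{eq:Q1}) together with monotonicity of $h$ to get the first display, and then bound $\bbE_\pi[Z_h]$ case by case via the symmetric weights $I_h(x,y)=\pi(x)h(\pi(y)/\pi(x))$, separating edges incident to the mode $x^*$ exactly as the paper does (including the $\sqrt{\pi(x^*)\pi(y)}\le p^{-\nu/2}$ step for $\hsq$ and the observation that $\pi(x^*)$ never contributes for $\hmin$). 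Your explicit verification that $Q_h$ is reversible with respect to $\pi$ and irreducible is a detail the paper leaves implicit, but it does not change the argument.
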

 
Theorem~\ref{th:one} provides the theoretical guarantee for the scalability of IIT schemes when the posterior mass concentrates on just one model $x^*$.  
Consider $\htgs$ for example, which was used by the TGS algorithm for variable selection proposed in~\citet[Section 4.2]{zanella2019scalable}. 
Theorem~\ref{th:one} suggests that the real-time convergence rate of TGS  is $O(p^{\nu - 2 \alpha})$ under Assumption~\ref{A1}, which is always faster than that of RWMH by Remark~\ref{rmk:complexity}. 
If $\nu > 2 \alpha $ (which happens when the sample size is sufficiently large),   $\hsq$ is as efficient as $\htgs$. 
  
One may find Theorem~\ref{th:one} to be counter-intuitive.  If $\nu > 2 \alpha$, our bound implies that the IIT scheme $\htgs$ converges even faster (in real time) for larger $p$.  
To gain insights into this surprising phenomenon, recall that for locally balanced IIT schemes the importance weight $ \pi / \pi_h \propto Z_h^{-1}$. For $h = \htgs$, under Assumption~\ref{A1}, $Z_h(x) \geq p^\nu$ for any $x \neq x^*$ while $Z_h(x^*) \approx p^\alpha$; that is, $x^*$ receives a much larger importance weight than any other state.  Consequently, the estimator $\hat{f}(t, \pi / \pi_h)$  defined in~\eqref{eq:sni} becomes almost a constant (i.e., variance gets reduced to almost zero) once the chain visits $x^*$, which should happen quickly due to the use of informed proposals. 
Essentially, under Assumption~\ref{A1}, when $\nu$ is sufficiently large, an IIT sampler can behave just like a single best model selection procedure due to the importance weighting. 

\subsection{Results for Targets Concentrating on a Set}\label{sec:tgs.two}
The excellent scalability of IIT demonstrated by Theorem~\ref{th:one}  needs to be taken with a grain of salt. 
In practice, even if the sample size is very large, the posterior distribution may fail to satisfy the unimodal condition in Assumption~\ref{A1} due to collinearity among the variables  or identifiability issues; for example, in structure learning, observational data cannot distinguish between Markov equivalent directed acyclic graphs. 
In such cases, we expect that there exists a small set $\cX^* \subset \cX$ such that $\pi(\cX^*)$ is close to one and for any $x \notin \cX^*$, an MCMC sampler can easily move from $x$ to $\cX^*$. However, because now IIT has to visit all those models in $\cX^*$ many times to reduce the variance of the estimator, importance weighting can no longer boost the sampler's efficiency as significantly as under Assumption~\ref{A1}.  
Consider the following setting which generalizes Assumption~\ref{A1} to cases where $\pi$ concentrates on the set $\cX^*$.  
\begin{assumption}\label{A2}
Let $\cX, \pi, \cN$ be as given in Assumption~\ref{A1} such that $\max_{x \in \cX} |\cN(x)| \leq p^\alpha$ for some $\alpha > 0$, $p > 1$. 
There exist a set $\cX^*  \subset \cX$ with $|\cX^*| \geq 2$, an operator $\cT \colon \cX \rightarrow \cX$ and  constants $\nu > \alpha$ and $B \geq 1$ such that (i)  $\max_{x, x' \in \cX^*} \pi(x') / \pi(x) \leq B$, (ii) $\cT(x) \in \cN(x)$ for each $x$, (iii) $\pi( \cT(x) ) \geq p^\nu \pi(x)$ for each $x \notin \cX^*$. 
\end{assumption}

\begin{remark}
Condition (i) means all states in $\cX^*$ have comparable stationary probabilities. 
Conditions (ii) and (iii), roughly speaking, imply that $\pi$ satisfies Assumption~\ref{A1} if we collapse all states in $\cX^*$ into one single state. 
\end{remark} 

Before we state the spectral gap bounds for Assumption~\ref{A2}, we define ``restricted'' Markov chains. 

\begin{definition}\label{def:restrict}
Given a transition matrix $K$ with state space $\cX$, its restriction to $S \subset \cX$ is a transition matrix $K_S \colon S^2 \rightarrow [0, 1]$  such that $K_S(x, x') = K(x, x')$  if $x \neq x'$,  and $K_S(x, x) = 1 - \sum_{x' \in S \setminus \{x\}} K (x, x')$. 
Similarly, for a transition rate matrix $Q$, its restriction to $S$ is a mapping $Q_S \colon S^2 \rightarrow \bbR$ such that $Q_S(x, x') = Q(x, x')$  if $x \neq x'$,  and $Q_S(x, x) =   - \sum_{x' \in S \setminus \{x\}} Q (x, x')$. 
\end{definition}

\begin{theorem}\label{th:two}
Suppose Assumption~\ref{A2} holds with $B = 1$.
Let $ \kappa(p, \alpha, \nu)$ be as given in Lemma~\ref{lm:main} and $M = |\cX^*|$. 
Consider Algorithm~\ref{alg:iit} with some non-decreasing balancing function $h$. 
If $K_h$ restricted to $\cX^*$ is irreducible, 
\begin{align*}
          \gap(Q_h) \geq \frac{\kappa(p, \alpha, \nu) h(1) }{3 (Mp^{\alpha - \nu} +1) M (M-1) \bbE_\pi[Z_h]  }.
\end{align*}
Further,  for the three balancing functions defined in~\eqref{def:three}, we have $h(1) \geq 1$ and 
\begin{align*}
\bbE_\pi[Z_h] \leq \left\{\begin{array}{ll}
2p^\alpha,   &  \text{ if } h = \htgs, \\ 
2 p^{2\alpha - \nu} + M - 1,         &  \text{ if } h = \hmin,  \\
p^{2\alpha - \nu} + 2p^{\alpha - \nu/2} + M - 1,    &  \text{ if } h = \hsq. 
\end{array} \right.
\end{align*}
\end{theorem}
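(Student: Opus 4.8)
The plan is to bound $\gap(Q_h)$ through the variational (Poincar\'e) characterization of the spectral gap, so I first record the relevant Dirichlet form. By Lemma~\ref{lm:pih}, for a balancing function $h$ we have $\pi_h \propto \pi Z_h$, hence $\omega = \pi/\pi_h = \bbE_\pi[Z_h]/Z_h$, and a direct computation gives $\pi(x)\,Q_h(x,y) = \pi_h(x)\,K_h(x,y) = \pi(x)\,h(\pi(y)/\pi(x))/\bbE_\pi[Z_h]$ for $y \in \cN(x)$. In particular $Q_h$ is reversible with respect to $\pi$ (not $\pi_h$), its Dirichlet form is $\mathcal{E}(f,f) = (2\bbE_\pi[Z_h])^{-1}\sum_x\sum_{y\in\cN(x)}\pi(x)\,h(\pi(y)/\pi(x))\,(f(x)-f(y))^2$, and the balancing identity $h(u)=u\,h(1/u)$ makes the edge weights $w(x,y)=\pi(x)\,h(\pi(y)/\pi(x))$ symmetric. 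It then suffices to establish a Poincar\'e inequality $\Var_\pi(f)\le C\,\mathcal{E}(f,f)$ with $C$ equal to the reciprocal of the claimed bound.

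For this inequality I would use canonical paths, reusing the refined telescoping argument behind Lemma~\ref{lm:main} and splitting each path into three segments. Given a pair $(x,y)$, I route $x\to\cT(x)\to\cT^2(x)\to\cdots$ until the orbit first enters $\cX^*$ at a state $e(x)$, then move within $\cX^*$ from $e(x)$ to $e(y)$ along a fixed path in the connected graph induced by $K_h$ restricted to $\cX^*$ (this is where irreducibility of $K_h|_{\cX^*}$ enters; cf.\ Definition~\ref{def:restrict}), and finally descend from $e(y)$ to $y$ along the reversed $\cT$-orbit. Bounding $(f(x)-f(y))^2$ by three times the sum of the squared increments over the three segments (the source of the factor $3$) and estimating congestion segment by segment, the two radial segments contribute the factor $\kappa(p,\alpha,\nu)$ exactly as in Lemma~\ref{lm:main}, since along the orbit $\pi$ grows by at least $p^\nu$ per step and the geometric series telescopes; the middle segment, where $B=1$ forces $\pi$ to take a single common value $\pi^{*}$ on $\cX^*$ so that every internal edge weight equals $\pi^{*}h(1)$, contributes $h(1)/(M(M-1))$ through the length ($\le M-1$) and multiplicity ($\le M(M-1)$) of the internal paths. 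The remaining factor $Mp^{\alpha-\nu}+1$ arises from balancing the external $\pi$-mass funneled across the boundary of $\cX^*$ against the internal mass $M\pi^{*}$, together with the normalization $\bbE_\pi[Z_h]$.

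For the explicit consequences I would compute $\bbE_\pi[Z_h]=\sum_x\sum_{y\in\cN(x)}\pi(x)\,h(\pi(y)/\pi(x))$ in closed form via the balancing symmetry: the summand equals $\pi(x)+\pi(y)$ for $\htgs$, $\min(\pi(x),\pi(y))$ for $\hmin$, and $\sqrt{\pi(x)\pi(y)}$ for $\hsq$. For $\htgs$, summing $\pi(x)+\pi(y)$ over all directed neighbor pairs and using $\max_x|\cN(x)|\le p^\alpha$ with the symmetry of $\cN$ gives $\bbE_\pi[Z_{\htgs}]\le 2p^\alpha$. For $\hmin$ and $\hsq$ I split the edge sum into edges internal to $\cX^*$ and the remaining ``tail'' edges: on internal edges the summand equals $\pi^{*}$, and since $M\pi^{*}=\pi(\cX^*)\le 1$ and there are at most $M(M-1)$ directed internal edges, the internal contribution is at most $M-1$; on tail edges I use $\max_x|\cN(x)|\le p^\alpha$ together with the sub-exponential bound $\pi(\cX\setminus\cX^*)\lesssim p^{\alpha-\nu}$, which follows from condition (iii) of Assumption~\ref{A2} by the same drift estimate as in Remark~\ref{rmk:uni1}. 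This yields $2p^{2\alpha-\nu}$ for $\hmin$, and for $\hsq$ an additional cross-boundary term $2p^{\alpha-\nu/2}$, the latter because for $y\notin\cX^*$ adjacent to $\cX^*$ one has $\pi(y)\le p^{-\nu}\pi^{*}$, so the geometric mean $\sqrt{\pi^{*}\pi(y)}\le p^{-\nu/2}\pi^{*}$ dominates the corresponding minimum and contributes $p^{-\nu/2}\pi^{*}\sum_{x\in\cX^*}|\cN(x)|\le p^{\alpha-\nu/2}$.

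The main obstacle will be the middle (within-$\cX^*$) segment and its interface with the radial parts: one must route all $\binom{M}{2}$ internal pairs through the connected but otherwise arbitrary graph induced by $K_h|_{\cX^*}$ while simultaneously absorbing the external mass that the radial segments deposit at the entry points $e(\cdot)$, and it is the bookkeeping of this congestion that produces the somewhat delicate $(Mp^{\alpha-\nu}+1)\,M(M-1)$ factor. Verifying that the refined telescoping of Lemma~\ref{lm:main} survives when the unique mode $x^*$ is replaced by the set $\cX^*$, and that the crude internal bound does not leak extra powers of $p$ back into the dominant terms, is the part that will require the most care.
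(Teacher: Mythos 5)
Your strategy is genuinely different from the paper's. The paper never proves a Poincar\'e inequality directly: it uniformizes ($P=b^{-1}Q_h+I$), collapses $\cX^*$ to a single state to form a chain $\bP$ whose gap is controlled by a variant of Lemma~\ref{lm:main}, separately bounds the gap of the restriction $P^*$ of $P$ to $\cX^*$ (this is where irreducibility and the canonical paths of length $\leq M-1$ enter) and the escape probability $\sgamma=\max_{x\in\cX^*}P(x,\cX\setminus\cX^*)$, and then assembles the three estimates with the decomposition theorem of \citet{jerrum2004elementary}, $\gap(P)\geq\min\{\gap(\bP)/3,\;\gap(\bP)\gap(P^*)/(3\sgamma+\gap(\bP))\}$. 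Your three-segment canonical-path argument bypasses the decomposition theorem entirely, and it can in principle be made to work: $Q_h$ is indeed reversible with respect to $\pi$ with symmetric edge weights $\pi(x)h(\pi(y)/\pi(x))/\bbE_\pi[Z_h]$, each basin $\{x\colon \cT\text{-orbit first enters } \cX^* \text{ at } x_k^*\}$ is a tree, so the weighted-length and ancestor estimates behind Lemma~\ref{lm:main} apply to the radial segments, and the internal segment is handled by ordinary Cauchy--Schwarz. Your bounds on $\bbE_\pi[Z_h]$ for $\htgs$, $\hmin$, $\hsq$ are essentially the paper's computation and are correct.

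The gap is in what you openly defer. The congestion bookkeeping is not a detail to be checked at the end: for this route it \emph{is} the proof, and you have not carried it out. More substantively, your stated mechanism for the factor $(Mp^{\alpha-\nu}+1)$ is not what a direct path argument produces. That factor is an artifact of the decomposition bound --- up to constants it is $3\sgamma/\gap(\bP)+1$ after invoking the balancing identity $h(p^\nu)=p^\nu h(p^{-\nu})$ --- and no ``balancing of boundary mass'' in a path argument will recreate it. What the direct argument actually yields is a \emph{sum} of two separate terms, roughly $\gap(Q_h)^{-1}\lesssim \frac{3\,\bbE_\pi[Z_h]}{2\kappa\, h(p^\nu)}+\frac{3M(M-1)\,\bbE_\pi[Z_h]}{(1-p^{\alpha-\nu})\,h(1)}$, the first from the two radial segments (via the ancestor sums as in~Lemma~\ref{lm:main}) and the second from the middle segment (path length $\leq M-1$, congestion at most $1$, internal edge weight $\pi_0 h(1)$ with $\pi_0\geq(1-p^{\alpha-\nu})/M$). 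This is in fact \emph{stronger} than the stated bound, but it is not the stated bound, so to finish you must add a comparison step --- e.g.\ $h(p^\nu)\geq h(1)$, $1-p^{\alpha-\nu}\geq\{1-p^{-(\nu-\alpha)/2}\}^3=2\kappa$, and $1+M(M-1)\leq 2M(M-1)$ since $M\geq 2$ --- to deduce the theorem's inequality. As written, the proposal stops precisely at the point where that inequality would have to be established.
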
 

\begin{remark}
The condition $B = 1$ is used merely for ease of presentation, and it will be removed later in Theorem~\ref{th:mix}.   
In our proof of Theorem~\ref{th:two}, we use a general ``worst-case'' bound on the spectral gap of $Q_h$ restricted to  $\cX^*$, which yields the $O(M^2)$ term on the denominator of the bound.
It may be significantly improved when one applies Theorem~\ref{th:two} to a specific problem. 
\end{remark}

Assuming $M$ is bounded as $p \rightarrow \infty$,  by Theorem~\ref{th:two}, we have $\gap(Q_h)^{-1} = O( \bbE_\pi[Z_h])$, from which we see that $\htgs$ appears to be the least efficient, while we recall $\htgs$ is the best in Theorem~\ref{th:one}. 
Indeed, our estimate suggests that for $h = \htgs$, $\gap(Q_h)$ may have the same order as the spectral gap of RWMH (so RWMH would perform much better in reality since each iteration of RWMH runs much faster). 
For the other two choices $\hmin$ and $\hsq$, we have $\gap(Q_h) \gtrsim c$ for some universal constant $c$ if Assumption~\ref{A2} holds with $\nu - 2 \alpha > \epsilon$ for some fixed $\epsilon > 0$. 
Hence, when the posterior mass concentrates on more than one models, the IIT schemes $\hmin$ and $\hsq$ tend to have better scalability than the scheme $\htgs$. 
We use variable selection as an example to show that for $h = \htgs$, the bound $\gap(Q_h)^{-1} = O(p^{-\alpha})$ can be attained. 

\begin{example}\label{ex:counter} 
Let $\cX = \{0, 1\}^p$ and, as in  Example~\ref{ex:var1}, let $\cN^1(x) = \{y \in \cX \colon \norm{x - y}_1 = 1 \}$. Assume  $\pi(x) \propto r^{ - d(x)}$ for some $r \geq 1$ and $d(x) = \sum_{i = 2}^p x_i$. Hence, the empty model and the model with only variable $1$ are equally the best. 
 Consider Algorithm~\ref{alg:iit} with $h(u) = 1 + u$, and let $Q_h$ be as given in~\eqref{eq:def.Q}.   It can be shown that $\gap(Q_h) \leq 2 / p$; see Supplement~\ref{supp:counter}. 
\end{example}

We can also derive a mixing time bound under Assumption~\ref{A2}. To this end, we first prove a generic decomposition bound in  Supplement~\ref{supp:mix}, similar to those found in~\citet{pillai2017elementary}, by studying the ``trace'' of a Markov chain. This approach allows us not to require the irreducibility of $K_h$ restricted to $\cX^*$ as in Theorem~\ref{th:two}. 
 
\begin{definition}\label{def:trace}
Fix a discrete-time Markov chain $\{X_{t}\}_{t \geq 0}$ and a subset $S \subset \mathcal{X}$ of its state space. Define $\eta_{0} = \min\{t \colon  X_{t} \in S\}$, then inductively set:
\begin{align*}
\eta_{i+1} = \min \{t > \eta_{i} \, : \, X_{t} \in S\}.
\end{align*}
The \textit{trace} of $\{X_{t}\}_{t \geq 0}$ on $S$ is the process $\{X_{\eta_{t}}\}_{t \geq 0}$. If $K$ is the transition matrix of $\{X_{t}\}_{t \geq 0},$ define $K|_{S}$ to be the transition matrix of the trace. 
\end{definition}

\begin{theorem}\label{th:mix}
Suppose Assumption~\ref{A2} holds with $p^{\nu - \alpha}$ being sufficiently large. 
Consider Algorithm~\ref{alg:iit} with some non-decreasing balancing function $h$. 
Let $b = 2 \max_{ x \in \cX} |Q(x, x)|$ and $P = b^{-1} Q + I$. 
If there exists $\delta > 0$ such that the graph with vertex set $\cX^*$ and edge set $\{ (x, x') \in \cX^2 \colon P |_{\cX^*}(x, x') \geq \delta / b \}$ is connected, 
\begin{align*}
     \gap(Q_h) \geq \tmix(Q_h)^{-1}  
     \geq  C \min  \Big\{
    \frac{\delta}{B M^2 \log (4 B M)},   
     \frac{ \kappa \, h (p^\nu)}{M    \bbE_\pi[Z_h] \log(8 M / \pmin )}
    \Big\}, 
\end{align*} 
where $C>0$ is a universal constant, $ \kappa$ is as given in Lemma~\ref{lm:main}, $M = |\cX^*|$,  $\pmin = \min_{x \in \cX} \pi(x)$, and 
\begin{align*}
    \tmix(Q_h) = \max_{x} \, \min\{t \colon 
 \norm{   e^{t Q_h}(x, \cdot) - \pi(\cdot) }_{\rm{TV}} \leq 1/4 \} 
\end{align*}
denotes the mixing time of $Q_h$ ($\mathrm{TV}$ stands for total variation distance). 
\end{theorem}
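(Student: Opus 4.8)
The plan is to prove the upper bound on $\tmix(Q_h)$ (equivalently the lower bound on $\tmix(Q_h)^{-1}$) by a state-space decomposition that isolates the concentration set $\cX^*$ from its complement, where $\pi$ decays geometrically along the contraction $\cT$. The leftmost inequality $\gap(Q_h) \ge \tmix(Q_h)^{-1}$ is just the standard fact that the relaxation time of a reversible chain lower-bounds its mixing time, so the substance lies in the second inequality. Throughout I would work with the uniformized lazy chain $P = b^{-1} Q_h + I$, which is reversible with respect to $\pi$ and has nonnegative spectrum; since the eigenvalues of $P$ are $1 + b^{-1}\lambda(Q_h)$, one has $\gap(Q_h) = b\,\gap(P)$ and mixing times transfer up to the factor $b$, so it suffices to analyze $P$ together with its trace $P|_{\cX^*}$.

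First I would invoke the generic trace/decomposition bound established in Supplement~\ref{supp:mix}, in the spirit of \citet{pillai2017elementary}. Informally, it controls $\tmix(P)$ by the larger of two contributions: the mixing time of the trace chain $P|_{\cX^*}$ inside the concentration set, and a ``return'' contribution measuring how fast excursions into $\cX \setminus \cX^*$ are reabsorbed into $\cX^*$. Unlike Theorem~\ref{th:two}, this formulation never requires $K_h$ restricted to $\cX^*$ to be irreducible, because the trace chain automatically inherits the connectivity supplied by the hypothesis on the graph with edge set $\{(x,x') : P|_{\cX^*}(x,x') \ge \delta/b\}$; the requirement that $p^{\nu-\alpha}$ be sufficiently large is what makes $\pi(\cX \setminus \cX^*)$ negligible, so that the trace chain's stationary law is essentially $\pi$ renormalized to $\cX^*$ and the excursions are short.

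Next I would bound the two contributions separately. For the trace term, the stationary distribution of $P|_{\cX^*}$ is $\pi$ conditioned on $\cX^*$, whose minimum entry is at least $1/(BM)$ by Assumption~\ref{A2}(i); feeding the connected graph of $\delta/b$-weight edges into a canonical-paths argument (paths of length at most $M$, congestion of order $BM^2$) yields $\gap(P|_{\cX^*}) \gtrsim \delta/(b\,B M^2)$ and hence, after the $b$-rescaling, a trace mixing rate of order $\delta/(BM^2\log(4BM))$, the first term of the minimum. For the return term I would collapse $\cX^*$ to a single state $\xmap$ and note, as in the Remark following Assumption~\ref{A2}, that conditions (ii)--(iii) make the collapsed target satisfy Assumption~\ref{A1} with contraction $\Tmap$. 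A short computation using $\pi_h \propto \pi Z_h$ (Lemma~\ref{lm:pih}) gives $\omega = \bbE_\pi[Z_h]/Z_h$, so for every $x \notin \cX^*$ one has $Q_h(x,\cT(x)) = K_h(x,\cT(x))/\omega(x) \ge h(p^\nu)/\bbE_\pi[Z_h]$, since $h$ is non-decreasing and $\pi(\cT(x))/\pi(x) \ge p^\nu$. Applying Lemma~\ref{lm:main} to the collapsed chain then gives a gap of order $\kappa\, h(p^\nu)/\bbE_\pi[Z_h]$, and converting this into a mixing rate for the full chain costs the factors $M$ (redistributing a super-state visit among the $M$ genuine states) and $\log(8M/\pmin)$, producing the second term.

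Finally I would combine the two contributions through the decomposition bound to obtain $\tmix(Q_h)^{-1} \gtrsim \min\{\cdot,\cdot\}$ with the stated constants, and conclude via $\gap(Q_h) \ge \tmix(Q_h)^{-1}$. The main obstacle I anticipate is the decomposition bound itself: making the trace/excursion argument rigorous without assuming irreducibility of $K_h$ on $\cX^*$ requires carefully controlling the length and stationary weight of excursions outside $\cX^*$ (which is exactly where ``$p^{\nu-\alpha}$ sufficiently large'' enters) and reconciling the trace chain's invariant law with $\pi$ restricted to $\cX^*$; keeping the polynomial-in-$M$ and logarithmic correction factors clean while doing so is the delicate bookkeeping at the heart of the proof.
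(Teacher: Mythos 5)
Your high-level skeleton matches the paper's: uniformize to the lazy chain $P = b^{-1}Q_h + I$, invoke the generic trace/decomposition mixing bound of Supplement~\ref{supp:mix}, bound the trace chain on $\cX^*$ by canonical paths using the $\delta/b$-edge connectivity (your congestion estimate and the $\log(4BM)$ conversion are exactly the paper's), and finish with $\gap(Q_h)^{-1} \leq \tmix(Q_h)$. Your computation $Q_h(x,\cT(x)) = h(\pi(\cT(x))/\pi(x))/\bbE_\pi[Z_h] \geq h(p^\nu)/\bbE_\pi[Z_h]$ is also correct.

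The genuine gap is in how you feed the second (``return'') term into the generic lemma. Lemma~\ref{LemmaDecompGeneric} does not take as input the spectral gap of a chain with $\cX^*$ collapsed to a point; its hypotheses are a partition $\cX = \sqcup_{i=1}^{M}\cX_i$, one privileged point $x_i^* \in \cX_i$ per piece with $\pi(x_i^*)/\pi(\cX_i) > 0.65$, and the quantity $\gap_{\mathrm{min}} = \min_i \gap(P|_{\cX_i})$, the minimum \emph{trace}-chain gap over the pieces. The paper's key construction, which your proposal omits entirely, is the partition into basins of attraction of $\cT$: with $d(x) = \min\{j \colon \cT^j(x) \in \cX^*\}$, set $\cX_k = \{x_k^*\} \cup \{x \notin \cX^* \colon \cT^{d(x)}(x) = x_k^*\}$. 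Each basin is $\cT$-invariant, so $(\cX_k, \cN_k, \pi_k, \cT)$ satisfies Assumption~\ref{A1} with the same $(\alpha,\nu)$; Lemma~\ref{lm:main} then bounds the restriction gap $\gap(P_k)$, and Peskun's theorem lifts this to the trace gap $\gap(P|_{\cX_k}) \geq \gap(P_k)$, giving $\gap_{\mathrm{min}} \geq h(p^\nu)/(b\,\bbE_\pi[Z_h])$. It is also here---not in the trace chain's stationary law, which equals $\pi$ renormalized to $\cX^*$ for any reversible chain, with no asymptotics needed---that ``$p^{\nu-\alpha}$ sufficiently large'' enters: it ensures $\pi(x_k^*)/\pi(\cX_k) \geq 0.65$ within each basin. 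Your substitute step---collapse $\cX^*$ to one state as in Theorem~\ref{th:two} and apply Lemma~\ref{lm:main} to the collapsed chain---bounds an object that does not appear in the generic lemma, and nothing in the paper (or in the standard toolbox) converts ``collapsed-chain gap plus trace mixing on $\cX^*$'' into a full-chain mixing bound without the irreducibility-of-restriction hypothesis that this theorem is specifically designed to avoid. As written, the two halves of your argument do not connect; the basin partition is the missing idea.
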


For most high-dimensional model selection problems, we may assume that $\log(\pmin^{-1})$ grows at most polynomially in $p$.  
Thus, the bounds on $\bbE_\pi[Z_h]$ provided in Theorem~\ref{th:two} suggest that, if $\nu$ is sufficiently large and $\delta, B, M$ are bounded from above, we have $\tmix(Q_h) = O(\delta^{-1})$.  
The advantage of Theorem~\ref{th:mix} is that the trace chain $P|_{\cX^*}$
 mixes much more quickly than the restriction chain used in Theorem \ref{th:two}.
As an extreme but common illustration, the trace is always ergodic and mixes at least as quickly as $P$, while the restriction chain may fail to be ergodic (e.g. if $\cX^*$ is not connected by moves of the chain). 

\begin{example}\label{ex:ex3}
Let $\cX = \{0, 1\}^p$ and $\cN^1$ be as given in Example~\ref{ex:var1}. 
Let $x^i$ denote the model with only the $i$-th variable, e.g., $x^2 = (0, 1, 0, 0, \dots, 0)$. 
Assume that $x^1, x^2$ are two equally best models, and let $\cX^* = \{x^1,  x^2\}$.  
We can not apply Theorem~\ref{th:two} since $x^1$ and $x^2$ are not neighbors ($\cN^1$ does not include swap moves).  
But Theorem~\ref{th:mix} can be applied, and  for the transition matrix $P$ defined in the theorem, we have $$P|_{\cX^*}(x^1, x^2) \geq   P(x^1, x^0) P(x^0, x^2) / (1 - P(x^0, x^0) )$$ 
where $x^0$ denotes the empty model. This can be used to estimate the parameter $\delta$.   
\end{example}

\subsection{Results for Decomposable Targets} \label{sec:decomp}
We can further relax Assumption~\ref{A2} using decomposable Markov chains. 
Given a surjective mapping $\cG \colon \cX \rightarrow \cY$ where $\cY$ is a finite space with $|\cY| < |\cX|$, we let $\cX_y = \{x \in \cX \colon \cG(x) = y\}$ and 
$\pi_\cG$ be the push-forward measure of $\pi$ on $\cY$; that is, $\pi_{\cG}(y) = \pi(\cX_y)$   for each $y \in \cY$. 

\begin{assumption}\label{A3}
Let $|\cX| < \infty$, $\pi(x) > 0$ for each $x \in \cX$ and $\cN$ be a symmetric neighborhood function on $\cX$. 
Let $|\cY| < |\cX|$, $\cG \colon \cX \rightarrow \cY$ be  surjective, and $\cT \colon \cY \rightarrow \cY$ be an operator with a unique fixed point $y^* \in \cY$. 
For each $y \neq y^*$, define 
\begin{align*}
    \cX_y( \cT(y), c) =\left\{ x \in \cX_y \colon   \max_{x' \in \cX_{\cT(y)}  \cap \cN(x) } \pi(x')  \geq p^c \pi(x) \right\}. 
\end{align*}
Suppose there exist universal constants $\nu > \alpha > 0$, $\varepsilon > 0$ and $\tilde{\nu} > 0$ such that 
     (i) for each $y\in \cY$, $|\{y' \in \cY \colon \cT (y') = y  \} | \leq p^\alpha$,  (ii) for each $y \neq y^*$, $\pi_\cG( \cT(y) ) \geq p^\nu \pi_\cG(y)$  and  (iii) for each $y \neq y^*$, $\pi( \cX_y(\cT(y), \tilde{\nu}) ) \geq \varepsilon \pi(\cX_y)$. 
\end{assumption}

\begin{remark}
In Assumption~\ref{A3}, the mapping $\cG$ can be interpreted as a clustering of the states: for each $y$, states in $\cX_y$ tend to be close to each other and have similar posterior probabilities. Conditions (i) and (ii) essentially assume that $\pi_\cG$ satisfies a unimodal condition on $\cY$. 
The set $ \cX_y(\cT(y), \tilde{\nu}) $   can be interpreted as a collection of states from which a locally informed Markov chain is likely to enter the set $\cX_{\cT(y)}$, 
and condition (iii) ensures that the stationary measure of $ \cX_y(\cT(y), \tilde{\nu}) $ is not too small. 
Compared with Assumption~\ref{A2}, here we allow $\pi$ to be multimodal even outside of the set $\cX_{y^*}$ where the posterior concentrates. 
\end{remark}

\begin{theorem}\label{th:decomp} 
Suppose Assumption~\ref{A3} holds. Consider Algorithm~\ref{alg:iit} with some non-decreasing balancing function $h$, and let $Q_h$ be as defined  in~\eqref{eq:def.Q}. 
Let $Q_y$ be the restriction of $Q_h$ to $\cX_y$.  We have 
\begin{align*} 
\gap(Q_h)  \geq      \kappa   \varepsilon h(p^{\tilde{\nu}})  \min  \Big\{ \frac{ 1 }{ 3 \bbE_\pi [Z_h] },  
  \frac{  \min_{y \in \cY} \gap(Q_y)}{\kappa  \varepsilon  h(p^{\tilde{\nu}}) +  3 \max_{x \in \cX} Z_h(x)} \Big\}, 
\end{align*} 
where  $\kappa = \kappa(p, \alpha, \nu)$ is as given in Lemma~\ref{lm:main}.  
\end{theorem}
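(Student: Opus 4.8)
The plan is to treat $Q_h$ as a \emph{decomposable} chain relative to the partition $\{\cX_y\}_{y\in\cY}$ induced by $\cG$, combining a variance/Dirichlet-form decomposition over the blocks with Lemma~\ref{lm:main} applied to the projected chain on $\cY$. The first step is to make the rates explicit. Since $h$ is a balancing function, Lemma~\ref{lm:pih} gives $\pi_h \propto \pi Z_h$, so $\omega = \pi/\pi_h = \bbE_\pi[Z_h]/Z_h$ and hence, for $x'\in\cN(x)$,
\begin{equation*}
Q_h(x,x') = \frac{K_h(x,x')}{\omega(x)} = \frac{h(\pi(x')/\pi(x))}{\bbE_\pi[Z_h]}.
\end{equation*}
In particular the total exit rate at $x$ equals $Z_h(x)/\bbE_\pi[Z_h]$, so $R := \max_x Z_h(x)/\bbE_\pi[Z_h]$ is the maximal jump rate; this is the origin of the $\max_x Z_h(x)$ term. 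I would also record that $Q_h$ is reversible with respect to $\pi$ (inherited from reversibility of $K_h$ with respect to $\pi_h$), so the variational formula $\gap(Q_h)=\inf_f \mathcal{E}_{Q_h}(f,f)/\Var_\pi(f)$ is available, where $\mathcal{E}_{Q_h}(f,f)=\tfrac12\sum_{x,x'}\pi(x)Q_h(x,x')(f(x)-f(x'))^2$.

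Next I would analyze the projected (lumped) rate matrix $\bar{Q}$ on $\cY$ defined by $\pi_{\cG}(y)\bar{Q}(y,y') = \sum_{x\in\cX_y,\,x'\in\cX_{y'}}\pi(x)Q_h(x,x')$, which is reversible with respect to $\pi_{\cG}$. Conditions (i)--(ii) of Assumption~\ref{A3} are exactly the hypotheses of Assumption~\ref{A1} for the triple $(\cY,\cT,\pi_{\cG})$, with condition (i) controlling the congestion through each $\cT$-edge (playing the role of the neighborhood-size bound $p^\alpha$), so Lemma~\ref{lm:main} yields $\gap(\bar{Q})\geq\kappa\min_{y\neq y^*}\bar{Q}(y,\cT(y))$. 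To bound $\bar{Q}(y,\cT(y))$ I would restrict the defining sum to $x\in\cX_y(\cT(y),\tilde{\nu})$: each such $x$ has a neighbor $x'\in\cX_{\cT(y)}$ with $\pi(x')/\pi(x)\geq p^{\tilde{\nu}}$, so monotonicity of $h$ gives a contribution at least $\pi(x)h(p^{\tilde{\nu}})/\bbE_\pi[Z_h]$; summing and invoking condition (iii) produces $\bar{Q}(y,\cT(y))\geq\varepsilon h(p^{\tilde{\nu}})/\bbE_\pi[Z_h]$, hence $\gap(\bar{Q})\geq\kappa\varepsilon h(p^{\tilde{\nu}})/\bbE_\pi[Z_h]$, which is the lower bound that ultimately appears in the theorem.

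The core is the decomposition. Writing $\bar{f}(y)=\bbE_{\pi_y}[f]$ for the conditional distribution $\pi_y(x)=\pi(x)/\pi_{\cG}(y)$ on $\cX_y$, the law of total variance gives $\Var_\pi(f)=V_1+V_2$ with $V_1=\sum_y\pi_{\cG}(y)\Var_{\pi_y}(f)$ and $V_2=\Var_{\pi_{\cG}}(\bar{f})$, while the Dirichlet form splits as $\mathcal{E}_{Q_h}(f,f)=\mathcal{W}+\mathcal{B}$ into within- and between-block edges. By Definition~\ref{def:restrict}, $\mathcal{W}=\sum_y\pi_{\cG}(y)\mathcal{E}_{Q_y}(f,f)\geq(\min_y\gap(Q_y))\,V_1$. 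For the between part I would compare $\mathcal{E}_{\bar{Q}}(\bar{f},\bar{f})$ with $\mathcal{B}$ via $(\bar{f}(y)-\bar{f}(y'))^2\leq 3[(\bar{f}(y)-f(x))^2+(f(x)-f(x'))^2+(f(x')-\bar{f}(y'))^2]$: the cross term reproduces $\mathcal{B}$, and the endpoint terms are charged to $R\,V_1$ using $\sum_{x\in\cX_y}\pi_y(x)(f(x)-\bar{f}(y))^2=\Var_{\pi_y}(f)$ together with $\sum_{x'\notin\cX_y}Q_h(x,x')\leq R$, giving $\gap(\bar{Q})V_2\leq 3R\,V_1+3\mathcal{B}$. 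Combining $\mathcal{W}\geq(\min_y\gap(Q_y))V_1$ with $\mathcal{B}\geq\tfrac13\gap(\bar{Q})V_2-R\,V_1$ and optimizing $\mathcal{E}_{Q_h}(f,f)/(V_1+V_2)$ over the ratio $V_2/(V_1+V_2)\in[0,1]$ yields, after a short calculation,
\begin{equation*}
\gap(Q_h)\geq\min\Big\{\tfrac13\gap(\bar{Q}),\ \frac{\gap(\bar{Q})\min_y\gap(Q_y)}{\gap(\bar{Q})+3R}\Big\}.
\end{equation*}
Both terms are nondecreasing in $\gap(\bar{Q})$, so substituting $\gap(\bar{Q})\geq\kappa\varepsilon h(p^{\tilde{\nu}})/\bbE_\pi[Z_h]$ and $R=\max_x Z_h(x)/\bbE_\pi[Z_h]$ gives exactly the stated bound.

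I expect the main obstacle to be the between-block comparison and the bookkeeping that turns it into the clean minimum form. Two points need care: confirming that the proof of Lemma~\ref{lm:main} uses only the preimage bound of condition (i) (rather than a literal neighborhood-size bound on $\cY$), so that $\kappa(p,\alpha,\nu)$ is legitimately available for $\bar{Q}$; and verifying that the endpoint terms of the triangle-type inequality are correctly charged to $R\,V_1$. The final one-dimensional optimization over $V_2/(V_1+V_2)$ is elementary but must be carried out carefully to confirm that the worst case is precisely $\min\{\gap(\bar{Q})/3,\ \gap(\bar{Q})\min_y\gap(Q_y)/(\gap(\bar{Q})+3R)\}$ and not a looser convex combination.
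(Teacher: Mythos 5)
Your proposal is correct, and its overall skeleton coincides with the paper's: the same block decomposition $\{\cX_y\}$, the same projected chain on $\cY$ (the paper's $P_\cG$, your $\bar Q$), the same application of Lemma~\ref{lm:main} to the projection (including the observation, confirmed in a footnote of the paper, that its proof only needs the preimage bound $|\{y'\colon \cT(y')=y\}|\le p^\alpha$), the same lower bound $\bar Q(y,\cT(y))\ge \varepsilon h(p^{\tilde\nu})/\bbE_\pi[Z_h]$ via condition~(iii), and the same control of the escape term by $\max_x Z_h(x)/\bbE_\pi[Z_h]$. The one genuine difference is how the decomposition inequality
\begin{equation*}
\gap(Q_h)\;\ge\;\min\Bigl\{\tfrac13\gap(\bar Q),\ \tfrac{\gap(\bar Q)\,\min_y \gap(Q_y)}{\gap(\bar Q)+3R}\Bigr\}
\end{equation*}
is obtained: the paper uniformizes, setting $P=b^{-1}Q_h+I$ with $b=\max_x|Q_h(x,x)|$, cites Theorem~1 of \citet{jerrum2004elementary} for $P$, and then converts back using $\gap(Q_h)=b\,\gap(P)$ and $\gap(P_y)=b^{-1}\gap(Q_y)$; you instead reprove that inequality from scratch for the generator itself, via the law of total variance, the within-block Dirichlet form bound, the triangle-type inequality with constant $3$ (with the endpoint terms charged to $R\,V_1$ using reversibility), and the convex-combination optimization over $\theta\in[0,1]$. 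I checked your between-block comparison and the final optimization; both are correct, and in particular the worst case is exactly the stated minimum, with the case $\theta=1$ producing the $\gap(\bar Q)/3$ branch. What each route buys: the paper's citation is shorter, while your direct continuous-time argument is self-contained and sidesteps the uniformization constant $b$ entirely, so there is no bookkeeping needed to verify that $b$ cancels — which is precisely the delicacy the paper's own remark warns about when selecting among decomposition bounds. One small point to make explicit in a full write-up: Lemma~\ref{lm:main} also requires irreducibility of the projected chain, which follows from irreducibility of $Q_h$ (guaranteed since $K_h\in\bbM(\cX,\cN)$).
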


\begin{remark}
The proofs of Theorems~\ref{th:two} and~\ref{th:decomp} rely on the state decomposition result of~\citet[Theorem 1]{jerrum2004elementary}. 
While there exist various spectral gap bounds for decomposable Markov chains~\citep{martin2000sampling, madras2002markov,guan2007small}, for our problem we need to be careful with which one to apply.
Most bounds estimate  the spectral gap of a Markov chain by a product of two spectral gaps, one corresponding to the mixing between components and the other corresponding to the mixing within each component. 
But in our analysis  we need to use uniformization argument to bound $\gap(Q_h)$, and such multiplicative bounds can yield poor estimates because the uniformization constants cannot cancel out. 
\end{remark}

\begin{remark}\label{rmk:bounded}
Theorem~\ref{th:decomp} is most useful when we can bound $\max_x Z_h(x)$. But this is not a limitation since one can always construct a balancing function bounded by $p^c$ for any $c > 0$.  
By Remark~\ref{rmk:balance}, one way to achieve this is to define $g(u) = u \wedge p^{c}$, and  let $h(u) = g(u) \vee u g(u^{-1}) = (u \wedge p^{c}) \vee (1 \wedge up^{c} )$. Then $h$ is a balancing function taking value in $(0, p^{c}]$.  
Intuitively, the use of a bounded $h$ should be beneficial since it prevents the informed proposal from being too ``aggressive'', which means to overly favor those neighboring states with larger posterior probabilities. 
\end{remark}

\section{\MakeUppercase{Simulation Studies}}\label{sec:sim}  
\subsection{Weighted Permutations}\label{sec:sim.perm}
In the first numerical example, we consider the ``weighted permutations'' problem~\citep{zanella2020informed} where $\cX$ is the collection of all possible permutations of $\{1, \dots, p\}$. 
For applications in statistics, we can think of $\pi$ on $\cX$ as an approximate representation of the target posterior distribution in order-based MCMC methods for structure learning~\citep{friedman2003being, agrawal2018minimal}.   
We choose $p = 100$ and simulate $\pi$ under four different settings such that $\pi$ always has one unique mode at $\tau^* = (1, \dots, p)$.  We consider the following five choices for $h$ in Algorithm~\ref{alg:iit}: $\htgs, \hmin, \hsq$, $h_{0.3}(u) = u^{0.3}$ and $h_{0.4}(u) = u^{0.4}$. We find that the IIT scheme $\hsq$ appears to be the best and the only one that is consistently better than RWMH.
As is consistent with our theory,  the  scheme $\htgs$ is most sensitive to the unimodality of $\pi$. It is  worse than RWMH when the posterior mass does not concentrate on $\tau^*$ alone. 
Details are given in Supplement~\ref{supp:sim.perm}.

\subsection{Variable Selection}\label{sec:sim.var}
In the second simulation study, we consider variable selection with  $p = 5,000$ and sample size $n = 1,000$.  We simulate the data using correlated design matrices, $20$ ``causal'' predictors and different values of signal-to-noise ratio (SNR).  In both strong and weak SNR cases, $\pi$ tends to concentrate on $1$ or $2$ models, while  in the intermediate SNR case,  $\pi$ tends to be multimodal. 
Details of the simulation settings and results are presented in Supplement~\ref{supp:sim.var}.  The code for IIT sampling is written in  \texttt{C++}.  We  summarize our main findings here. 

First, as in Section~\ref{sec:sim.perm}, the IIT scheme  $\htgs$ performs poorly and the scheme $\hsq$ overperforms RWMH in all cases. When $\pi$ is multimodal,  IIT exhibits a  huge advantage. We suspect one main reason is that IIT can get out of local modes more easily since it is ``rejection-free''.  
A more careful comparison between $\hsq$ and $h_{0.3}$ shows that $h_{0.3}$ is more robust in the intermediate SNR (multimodal) case  but less efficient in the other cases, which is expected from the theory. 
The IIT scheme $\hmin$ is also much better than RWMH but seems to be too conservative and less desirable than $\hsq$ or $h_{0.3}$. 
Second, for every sampled $x$ in IIT,  we can check  $\max_{y \in \cN(x)} \pi(y) / \pi(x)$, a key quantity in the assumptions used in our theory. We find that when the SNR is either strong or weak,  Assumption~\ref{A1} is likely to be satisfied, which is   consistent with the main result of~\cite{yang2016computational}. In the intermediate SNR case,  the ratio $\max_{y \in \cN(x)} \pi(y) / \pi(x)$ is still very large for most $x$, and thus Assumption~\ref{A2} appears to be satisfied. 

\section{\MakeUppercase{Discussion}}\label{sec:disc}
\subsection{Other Related Works} \label{sec:literature}
\citet{zanella2020informed}  proposed to use balancing functions to devise locally informed MH algorithms on discrete spaces, which we note is a generalization of the reduced-rejection-rate MH of~\citet{baldassi2017method}.  
One main advantage of IIT  is that by construction, it never gets stuck at a single state.  In contrast, the performance of an informed  MH sampler largely depends on the acceptance rate, which can be very difficult to control; the same concern applies to other importance sampling schemes built on MH chains~\citep{geyer1995annealing, rudolf2020metropolis, schuster2020markov}. In particular, it was shown in~\citet{zhou2021local} that even if Assumption~\ref{A1} holds, an informed MH algorithm  that uses proposal $K_h$ with $h = \hsq$ or $h = \htgs$ can be slowly mixing, while our Theorem~\ref{th:one} shows that the convergence of the corresponding IIT samplers can be very quick. 

Recall that  we  measure  the convergence rate of IIT using the spectral gap of a continuous-time Markov chain denoted by $Q_h$. One can directly simulate $Q_h$ and use the time average to estimate $\bbE_\pi[f]$ for any function $f$ of interest. 
This approach dates back to kinetic Monte Carlo~\citep{bortz1975new, dall2001faster} and was taken in a recent paper of~\citet{power2019accelerated} on non-reversible MCMC algorithms. The authors only considered balancing functions and called  $Q_h$ the ``Zanella process''.  
Compared with IIT,  the Zanella process essentially replaces the importance weight of state $x$  with the random holding time at $x$, which appears to be slightly less efficient.   
Our work partially answers the question raised in~\citet[Section 2.1]{power2019accelerated} on how to choose a balancing function for the Zanella process.  For the use of  balancing functions on continuous state spaces, we refer readers to~\citet{livingstone2019barker} (note that the problem becomes very different due to the availability of gradient information).  

\subsection{Using IIT in Practice} \label{sec:practical}
Observe that if $\pi$ is the uniform distribution on $\cX$, no matter what $h$ is used, $K_h$ is identical to the random walk proposal; thus, RWMH should be used since it has a much smaller time complexity per iteration.  Roughly speaking, informed samplers tend to gain an advantage over RWMH when the posterior mass concentrates on a small set of states, e.g. in  model selection problems with sufficiently large sample sizes.  
Regarding the choice for $h$, heuristically, we want $h$ to ``aggressively'' push to the mode for unimodal targets, and allow exploration for multimodal targets.  
One can try to run a long burn-in period to empirically check the multimodality, though in general measuring multimodality is difficult.  
One surprising and interesting consequence of our results is that it is usually not useful to be more aggressive than $h_{0.5}$, even in the unimodal case, making it a reasonable default if one does not wish to burn-in. 
We offer two  explanations. First, consider a state $x$ and some $y \in \cN(x)$ such that $\pi(y) \gg \max_{x' \in \cN(x) \setminus \{y\}} \pi(x')$.  Then, neither $\htgs$ nor $\hmin$ can ensure that  the proposal probability $K_h(x, y)$ is always sufficiently large, while $\hsq$ does.  
Second, among the class of informed proposals with $h(u) = u^a$, the choice $a = 0.5$ tends to yield the most efficient importance weighting.  The choice $a > 0.5$ is too aggressive, as shown in Example~\ref{ex:bad} (in the extreme case $a \rightarrow \infty$, the chain becomes deterministic).  We tried $h(u) = u^{0.6}$ in the simulation study, but the result was too poor and thus not presented.  
If $a < 0.5$, the informed proposal explores posterior modes less efficiently than that with $a = 0.5$ (in the extreme case $a  = 0$, $K_h$ becomes the random walk proposal).   

In many real problems, $\pi$ is severely multimodal. In that setting,  at least IIT can still be used to efficiently explore the local posterior landscape, and one can combine IIT with other multimodal sampling techniques. For example, one may use tempering to realize ``long-range jumps'' between modes, or partition the state space so that $\pi$ is roughly unimodal on each subspace~\citep{basse2016parallel}. 

In both of our simulation studies,  the wall time usage of each IIT iteration is much smaller than that of $|\cN(\cdot)|$ RWMH iterations, probably because the empirical complexity depends on many factors, including problem dimension, likelihood complexity, implementation, hardware, etc.  This seems to favor IIT, though we expect when the problem dimension becomes extremely large, it may be helpful to approximate $Z_h$ by only evaluating a subset of neighboring states.   
When one has enough parallel computing resources, informed MCMC samplers are usually more appealing than RWMH since the time complexity of each informed iteration can be greatly reduced by parallelizing the evaluation of $\pi$ for neighboring states.  

\subsection{Concluding Remarks}  
The convergence theory of MCMC algorithms on high-dimensional discrete state spaces is largely underdeveloped. 
One major contribution of this work is an effective and general approach to obtaining strong bounds on the variance of MCMC-based methods. The path method we use is easy to generalize and applies to many problems, and by adjusting it to the context we obtain sharp estimates for the convergence rates of IIT samplers. Overall, our theory advocates the use of IIT for model selection problems if there is sufficient data. 

\section*{Acknowledgements}
We thank the anonymous reviewers whose suggestions have helped us greatly improve the paper. 

\bibliographystyle{plainnat}
\bibliography{ref}


\clearpage
\appendix

\thispagestyle{empty}

\begin{center}
{\LARGE Supplementary Material: \TITLE} 
\end{center}


The supplement is structured as follows. 
\ref{S:alg}:  pseudocode for the two classes of IIT samplers considered in the paper. \ref{supp:tgs}: a brief review of the TGS sampler and its weighted version proposed by~\citet{zanella2019scalable}. 
\ref{supp:uni}: a numerical example which shows that under Assumption~\ref{A1}, $\pi$ is not necessarily unimodal ``in every direction''. 
\ref{supp:sim}: results and simulation details for the two numerical studies presented in the main text. 
\ref{supp:exs}: details about Examples~\ref{ex:bad} and~\ref{ex:counter} in the main text. 
\ref{supp:mix}: a generic mixing time bound for decomposable Markov chains based on  trace chains. 
\ref{supp:all.proofs}: proofs of all the theoretical results stated in the main text. 
\textbf{Simulation code} can be downloaded at \url{https://github.com/zhouquan34/IIT}.  

\section{Two IIT Algorithms}\label{S:alg}
We still use the notation defined in Section~\ref{sec:notation}. 
To avoid ambiguity, let $\cpi$ be a measure on $\cX$ such that $\cpi(x) = C \pi(x)$ for some  constant $C > 0$.  To implement IIT schemes,  we only need  $\cpi$ instead of $\pi$. 
Similarly,  we use $\check{\omega}$ to denote the un-normalized version of the importance weight. 
For each $a \geq 0$, define a function $h_a \colon \hbbR \rightarrow \hbbR$ by 
\begin{align*}
h_a(u) = u^a,   \quad \forall \, u > 0. 
\end{align*}
Algorithm~\ref{alg1} shows how to implement IIT with $ h =h_a$,  and Algorithm~\ref{alg2} is for IIT schemes with an arbitrary balancing function $h$. 

\setcounter{algocf}{1}
\begin{algorithm}\setstretch{1.3}
\caption{Informed importance tempering with $h_a$}\label{alg1}
\SetKwInput{KwInput}{Input}                 
\SetKwInput{KwOutput}{Output}  
\SetKwFunction{FnH}{h}
\KwInput{Sample space $\cX$, symmetric neighborhood function $\cN \colon \cX \rightarrow 2^\cX$, 
measure $\cpi > 0$,  constant $a \geq  0$,  number of iterations $t$, and  state $x^{(0)} \in \cX$} 
Calculate $ \cpi(y)$ for each $y \in \cN(x^{(0)})$\\ 
\For{$k=1, \dots, t$}{
     Draw $x$ from $\cN(x^{(k - 1)})$ with probability proportional to $\cpi(x)^a$  \\
     Calculate $ \cpi(y)$ for each $y \in \cN(x)$ \\
     $z^{(k)} \gets \sum_{y \in \cN(x)}  \frac{\cpi(y)^a}{ \cpi(x)^a } $ \\ 
     $\comega^{(k)} \gets  \frac{ \cpi(x)^{1-2a}}{ z^{(k)} }$ \\
     $x^{(k)} \gets x$  
}
\KwOutput{samples $x^{(1)}, \dots, x^{(t)}$ and their un-normalized importance weights 
$\comega^{(1)}, \dots, \comega^{(t)}$}
\end{algorithm}

\begin{algorithm}\setstretch{1.3}
\caption{Locally balanced informed importance tempering}\label{alg2}
\SetKwInput{KwInput}{Input}                 
\SetKwInput{KwOutput}{Output}  
\SetKwFunction{FnH}{h}
\KwInput{Sample space $\cX$, symmetric neighborhood function $\cN \colon \cX \rightarrow 2^\cX$, 
measure $\cpi > 0$, balancing function $h \colon  \hbbR \rightarrow \hbbR$,  number of iterations $t$, and  state $x^{(0)} \in \cX$} 
Calculate $h\big(  \frac{ \cpi(y) }{ \cpi(x^{(0)} ) } \big)$ for each $y \in \cN(x^{(0)})$\\ 
\For{$k=1, \dots, t$}{
     Draw $x$ from $\cN(x^{(k - 1)})$ with probability proportional to $h \big(  \frac{ \cpi(x) }{ \cpi(x^{(k-1)}) } \big)$  \\
     Calculate $h\big(  \frac{ \cpi(y) }{ \cpi(x ) } \big)$ for each $y \in \cN(x)$ \\
     $z^{(k)} \gets \sum_{y \in \cN(x)} h\big(  \frac{ \cpi(y) }{ \cpi(x ) } \big)$ \\ 
     $\comega^{(k)} \gets  \frac{1}{z^{(k)}}$ \\
     $x^{(k)} \gets x$  
}
\KwOutput{samples $x^{(1)}, \dots, x^{(t)}$ and their un-normalized importance weights 
$\comega^{(1)}, \dots, \comega^{(t)}$}
\end{algorithm}

\section{Tempered Gibbs Samplers } \label{supp:tgs}
We first review the generic TGS algorithm introduced in~\citet[Section 2]{zanella2019scalable}. 
Let $\cX = \cX_1 \times \cdots \times \cX_p$ be a product space. For convenience, we still assume $|\cX| < \infty$. 
Define  $\bcN(x) = \bigcup_{i=1}^p \bcN_i(x)$ where
$\bcN_i(x) = \{ y \in \cX \colon     y_j = x_j, \; \forall \, j \neq i   \}.$
That is, $\bcN_i(x)$ is the set of all states which can only differ from $x$ at the $i$-th coordinate; note that   this definition is slightly different from the setting considered in the main text in that we assume $x \in \bcN_i(x)$ (indicated by the overbar). 
Let $\pi(x_i \mid x_{-i})$ denote the conditional density of the $i$-th coordinate given $x_{-i} = (x_1, \dots, x_{i-1}, x_{i+1}, \dots, x_p)$. 
For each  $i$ and each possible value of $x_{-i}$, let $g(x_i \mid x_{-i})$ denote a conditional ``proposal'' distribution for the $i$-th coordinate with support $\bcN_i(x)$. 
TGS is a Markov chain with transition matrix $K_{\TGS}(x,  y)  = 0$ if $y \notin \bcN(x)$, and 
\begin{align*}
K_{\TGS}(x,  y)  \propto  \frac{g(x_i \mid x_{-i}) g(y_i \mid x_{-i})}{\pi(x_i \mid x_{-i})}  ,   \quad \forall \, i \in \{1, \dots, p\}, \,  y \in \bcN_i(x). 
\end{align*}
In~\citet[Section 3.6]{zanella2019scalable}, TGS was further generalized by introducing coordinate weight functions $\xi_i(x_{-i})$ for each $(i, x_{-i})$. 
The transition matrix of the weighted TGS scheme is given by 
\begin{align*}
K_{\WTGS}(x,  y)  \propto  \xi_i(x_{-i}) \frac{g(x_i \mid x_{-i}) g(y_i \mid x_{-i})}{\pi(x_i \mid x_{-i})} ,   \quad \forall  \,  i \in \{1, \dots, p\}, \,  y \in \bcN_i(x). 
\end{align*} 
Denote the right-hand side of the above equation by $H(x, y)$; that is, we write 
$$K_{\WTGS}(x, \cdot ) =  \frac{ H(x, \cdot ) }{  Z (x) } \ind_{\bcN(x)} ( \cdot ),  \quad 
\text{ where } Z (x) = \sum_{y \in \bcN(x)} H(x, y)  
    = \sum_{i=1}^p   \xi_i(x_{-i}) \frac{g(x_i \mid x_{-i})  }{\pi(x_i \mid x_{-i})}.$$
This can be seen as a generalization of the proposal scheme $K_h$ defined in~\eqref{eq:def.h}. Since for any $y \in \bcN_i(x)$,  
\begin{align*}
    \pi(x) H(x, y) =  \pi(x_{-i})  \xi_i(x_{-i}) g(x_i \mid x_{-i}) g(y_i \mid x_{-i}) =  \pi(y) H(y, x), 
\end{align*}
$K_{\WTGS}$ is reversible w.r.t. $\pi Z $. Further, $Z^{-1}(x)$ is the importance weight associated with $x$. 
 
Next,  we describe the TGS algorithm for variable selection proposed in~\citet[Section 4.2]{zanella2019scalable}, which turns out to be a special case of Algorithm~\ref{alg:iit}.  Let $\cX = \{0, 1\}^p$ and  define 
\begin{equation}\label{eq:yi}
y^{(i)}(x) =  (x_1, \dots, x_{i-1}, 1 - x_i,  x_{i+1}, \dots, x_p), \quad \quad i = 1, \dots, p. 
\end{equation}
Suppose that the proposal $g(x_i \mid x_{-i})$ can be written as $\tg (  \pi(x_i \mid x_{-i})  )$ for some function $\tg \colon [0, 1] \rightarrow [0, 1]$, which implies $\tg (x) + \tg (1 - x) = 1$ for any $x \in [0, 1]$. Consider a balancing function 
$$h(u)  = (1 + u) \tg ( 1/(1 + u) ) \tg(u / (1 + u)),$$ 
and define $K_h(x, \cdot) \propto h ( \pi(\cdot) / \pi(x) ) \ind_{\bcN(x)}(y)$ as in~\eqref{eq:def.h}.  
Observe that $\pi(x_i \mid x_{-i}) =   \pi(x) / [\pi(x) + \pi(y^{(i)}(x))]$. 
Hence, for $y = y^{(i)}(x)$, we have 
\begin{align*}
K_h(x, y) \propto \;&  \left( 1 +  \frac{ \pi(y)}{ \pi(x) } \right) \tg \left(  \frac{ \pi(x) }{ \pi(x)+ \pi(y)  }  \right) \tg \left(  \frac{ \pi(y)  }{ \pi(x)+ \pi(y)} \right)  
=   \frac{g(x_i \mid x_{-i}) g(y_i \mid x_{-i})}{\pi(x_i \mid x_{-i})},  
\end{align*} 
which is a TGS scheme. \citet[Section 4.2]{zanella2019scalable} used the above transition matrix $K_h$ with  $\tg \equiv 1/2$; that is, it is an IIT scheme with $h(u) = 1 + u$. 
\citet[Section 4.2]{zanella2019scalable} further pointed out that, by using the argument of~\citet[see also]{liu1996peskun}, one can replace the neighborhood $\bcN(x)$   by $\cN^1(x) = \bcN(x) \setminus \{x\} =  \{ y^{(1)}(x), \dots, y^{(p)}(x)\}$. 
In our formulation of the IIT sampler, this replacement is clearly valid since we only require the neighborhood relation to be symmetric and ``connect'' all states in $\cX$. 

\section{On the Unimodal Condition in Assumption~\ref{A1}}  \label{supp:uni}
In the left panel of Figure~\ref{fig:uni}, we construct a discrete unimodal distribution on $\cX = \{1, 2, 3\}^2$. For each $x \in \cX$, its neighborhood is defined by $\cN(x) = \{y \in \cX \colon \norm{x - y}_1 = 1 \}$ where $\norm{\cdot}_1$ denotes the $L^1$-norm. The distribution $\pi$ has only one local mode (which is also the global mode) at $(1, 3)$. For example, $\pi$ is monotone increasing  along the path indicated by the black arrows in Figure~\ref{fig:uni}.  
However, the conditional distributions $\pi(1, \cdot)$ and $\pi(2, \cdot)$ are not unimodal. If we ``extend'' this distribution to a continuous state space, as shown in the right panel of Figure~\ref{fig:uni}, we get a unimodal distribution which is not log-concave. 

\begin{figure}[htp!]
    \centering
    \includegraphics[width=0.4\linewidth]{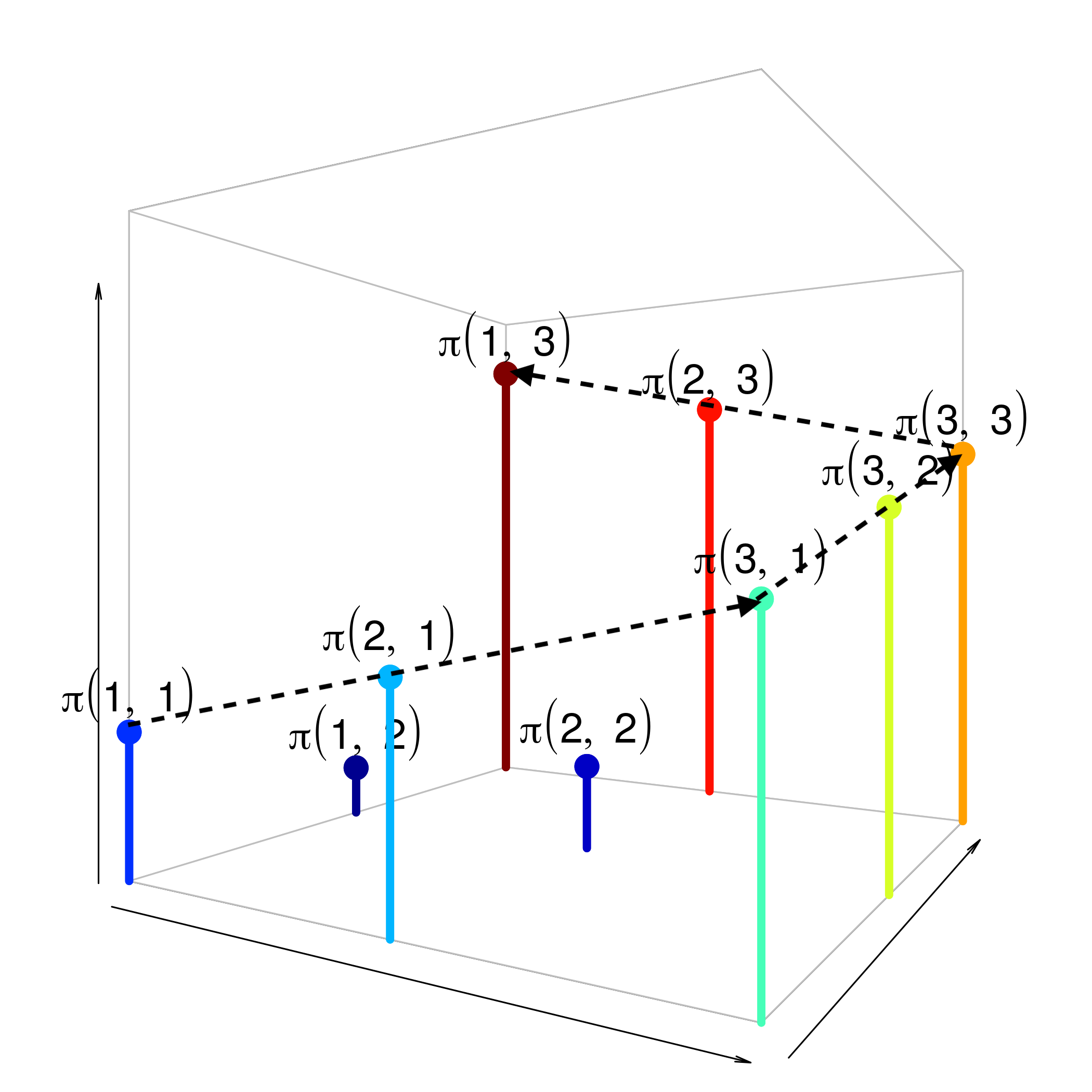}
    \includegraphics[width=0.4\linewidth]{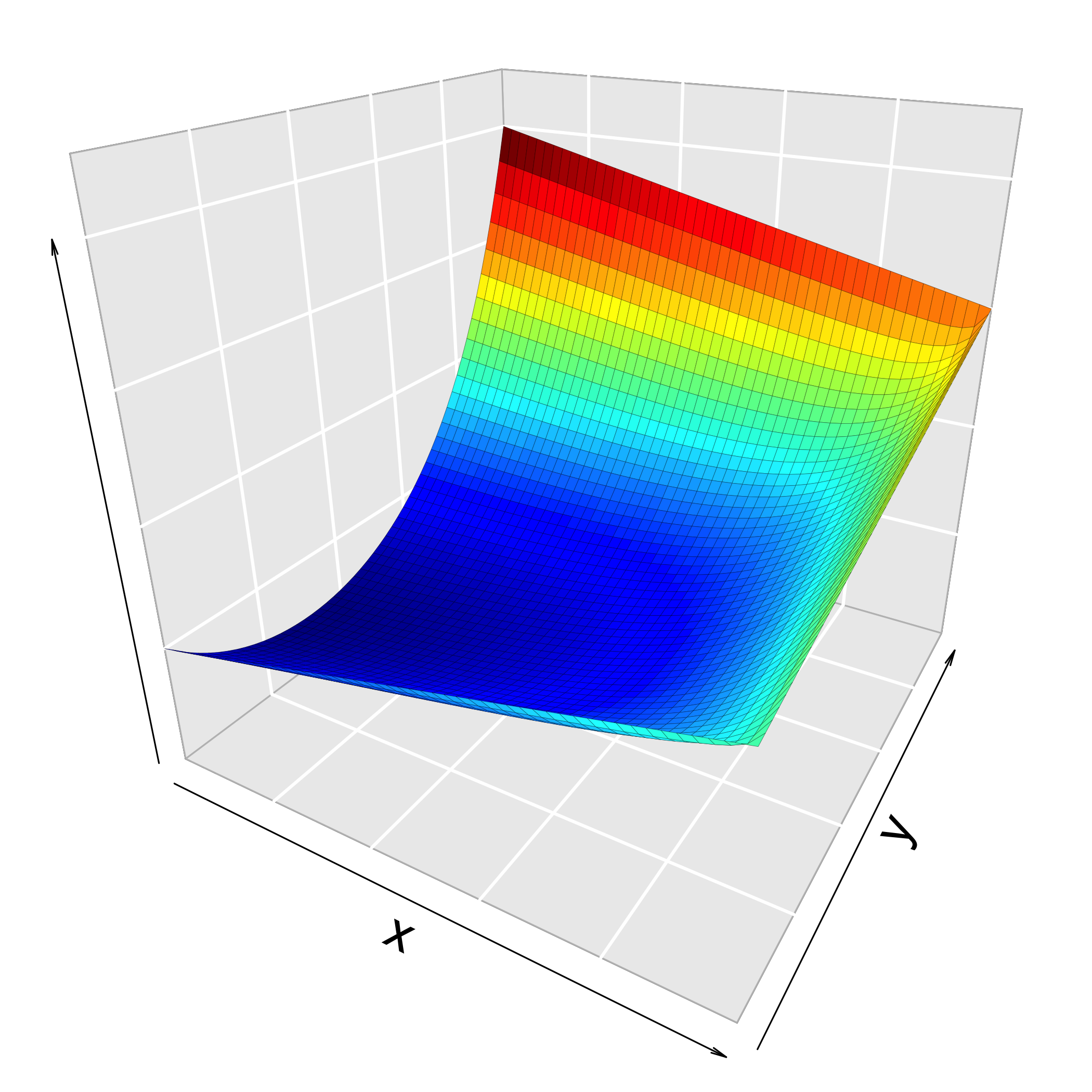}
    \caption{Left: a unimodal distribution defined on $\cX = \{1, 2, 3\}^2$ (the height of each bar is the value of $\log \pi$). Right: a continuous analogue of the distribution shown in the left panel.} 
    \label{fig:uni}
\end{figure}

\section{Simulation Settings and Results}\label{supp:sim}
\subsection{Weighted Permutations}\label{supp:sim.perm}
Let $\cX$ be the collection of all possible permutations  of $\{1, \dots, p\}$. 
For each permutation $\tau \in \cX$, let $\tau(i)$ be the index of the variable that has the $i$-th position in $\tau$  and $\tau^{-1}(k)$ be the ranking of the $k$-th variable. 
Assume that 
$$\pi(\tau) \propto \prod_{k=1}^p  W(k,  \tau^{-1}(k)),$$ 
where $W \colon \cX^2 \rightarrow \hbbR$ is a positive matrix, and 
$$\log W(k,  j) =  -  \eta \phi_k | j - \mu_k |  \log p, \quad \forall \, j, k \in \{1, \dots, p\}, $$  
for some $\eta > 0$, $\phi_k > 0$ and $\mu_k \in [1, p]$. We interpret $\eta$ as the signal-to-noise ratio (SNR) parameter.  
Let $\cN(\tau)$ be the set of all permutations that can be obtained from $\tau$ by a transposition; thus, $|\cN(\tau)| = p(p-1)/2$.  
We simulate $W$ in two ways such that $\pi$ has one unique mode at $\tau^* = (1, \dots, p)$. 
In Scenario I, we draw $\mu_k \sim \mathrm{Unif}(k-0.1, k+0.1)$  and $\phi_k \sim \mathrm{Unif}(0.5, 1)$ for $k = 1, \dots, p$, independently. 
In Scenario II, we draw $\mu_k \sim \mathrm{Unif}(k-0.5, k+0.5) $ and $\phi_k \sim \mathrm{Unif}(0.1, 1)$  for $k = 1, \dots, p$, independently.  
We explain in the next paragraph why $\pi$ is unimodal in both scenarios. 
Loosely speaking, in Scenario I, $\pi$ decays at roughly the same rate in every direction, while in Scenario II,  $\pi$ tends to be more irregular and decay very slowly in some directions, and some states in $\cN(\tau^*)$ can have posterior probabilities comparable to $\pi(\tau^*)$. Thus, Scenario II represents a ``weakly unimodal'' setting where Assumption~\ref{A1} is very likely to fail to hold for large $p$.  
We fix $p = 100$, and for each scenario, we generate two instances of $\pi$, one with $\eta = 1$  and the other with $\eta = 2$. 

In both scenarios,  observe that for each $k$,  $j \mapsto W(k, j)$ is unimodal with mode at $j = k$.   
To see that $\pi$ is guaranteed to be unimodal with mode at $\tau^*$, consider any permutation $\tau \neq \tau^*$. Let 
$$j_1 = \min\{ 1 \leq i \leq p \colon \tau^{-1}(i) \neq i \},  \text{ and }  j_{\ell} = \tau^{-1}(j_{\ell - 1}) \text{ for }  \ell = 2, 3, \dots.$$  
Observe that $j_2 > j_1$.  If $j_3 < j_2$, then we can swap the ranks of variables $j_1$ and $j_2$ and the new permutation will have a larger posterior probability.  If $j_3 > j_2$, let $m  = \min\{ \ell \geq 1  \colon j_{\ell + 1} < j_\ell   \}$. Observe that $m < p$ and  $j_{m+1} \geq j_1$. Hence, there exists some $1 \leq s \leq m - 1$ such that $j_{s } \leq j_{m + 1} < j_{s + 1}$. That is, $j_{s}  \leq \tau^{-1}(j_m) < \tau^{-1}(j_{s}) <  j_m$, which shows that we can increase the posterior probability by swapping variables $s$ and $m$. 

We compare RWMH and Algorithm~\ref{alg:iit} with the following five choices of $h$: $\htgs, \hmin, \hsq$, $h_{0.3}(u) = u^{0.3}$ and $h_{0.4}(u) = u^{0.4}$. 
For  $k = 1, \dots, p$, let $f_k(\tau) = \tau^{-1}(k)$ and consider the estimator $\hat{f}_k(t, \omega)$ defined in~\eqref{eq:sni}  (for RWMH, $\omega \equiv 1$, and for IIT samplers, $\omega = \pi / \pi_h$).  
The number of iterations $t$ is chosen to be $ 7\times 10^5$ for RWMH and $10^3$ for each IIT sampler so that the wall time used by each algorithm is about the same  
(the number of posterior evaluations does not dominate the empirical complexity in this study).
For each instance of $\pi$, we run each sampler $100$ times and then calculate the variance of $\hat{f}_k$ for each $k$. 
Boxplots for $\{ \Var(\hat{f}_k) \}_{k=1}^p$ are shown in Figure~\ref{fig1}. 
The simulation code is written in \texttt{R}.  

From Figure~\ref{fig1}, one can see that in Scenario I, the advantage of IIT schemes is overwhelming. In Scenario II with SNR $=1$, only the IIT scheme $\hsq$  is arguably slightly better than RWMH, and the IIT schemes $\htgs, \hmin$ and $h_{0.3}$ are clearly less efficient than RWMH.   But when SNR increases to $2$, IIT schemes $h_{0.3}, h_{0.4}, \hsq$ all perform significantly better than RWMH.  

\begin{figure*}[t!]
\centering
\includegraphics[width=0.48\linewidth]{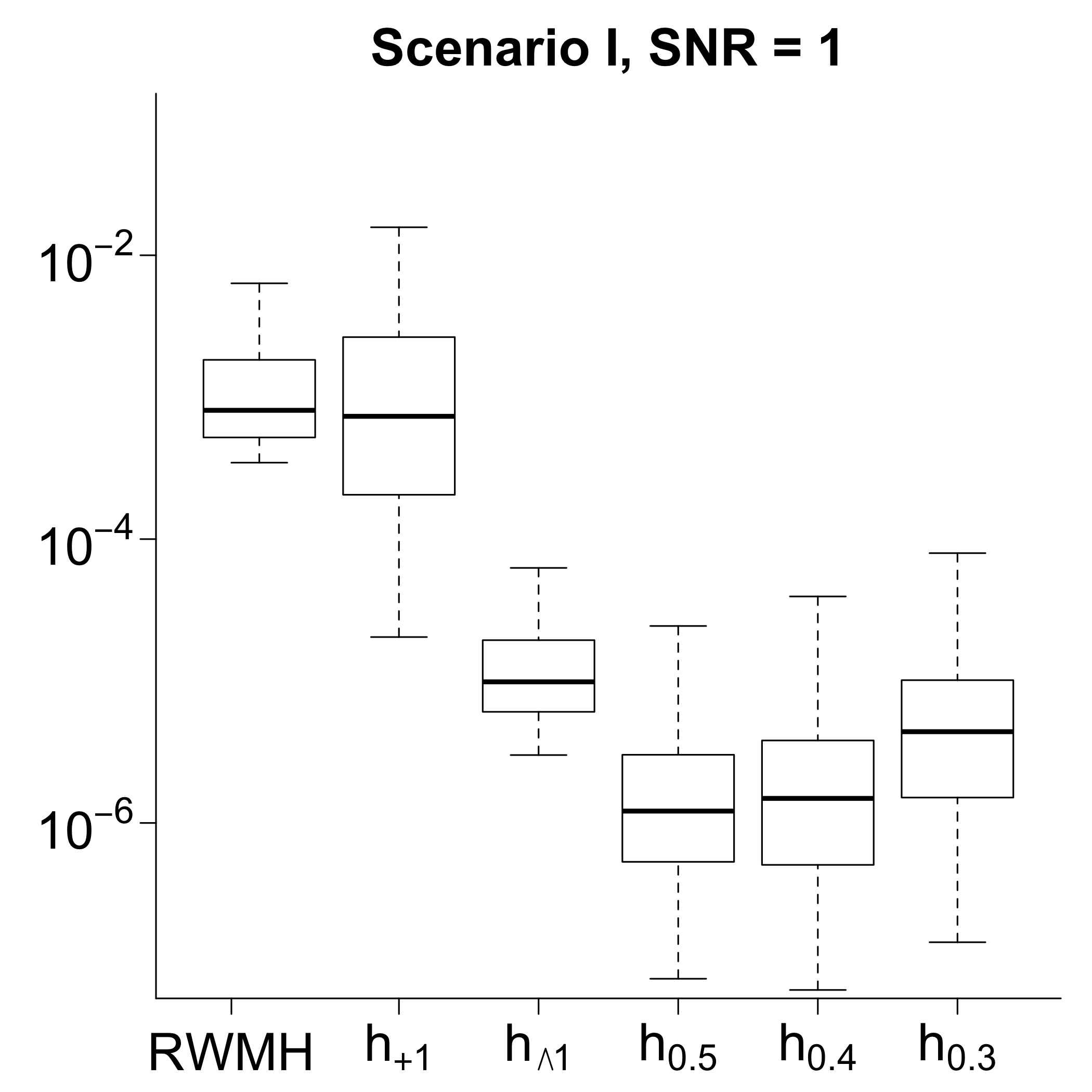} \hspace{0.2cm}
\includegraphics[width=0.48\linewidth]{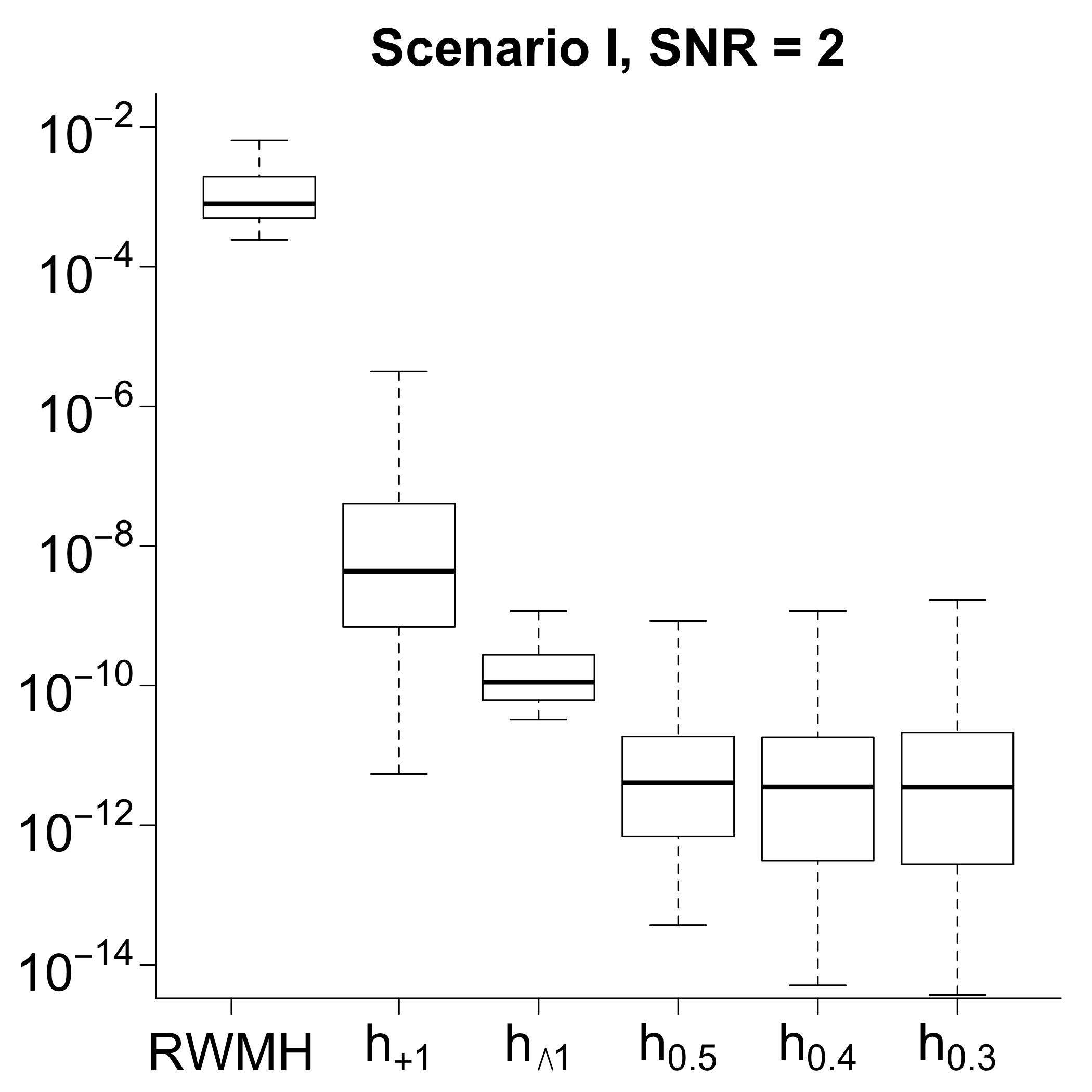}\\
\includegraphics[width=0.48\linewidth]{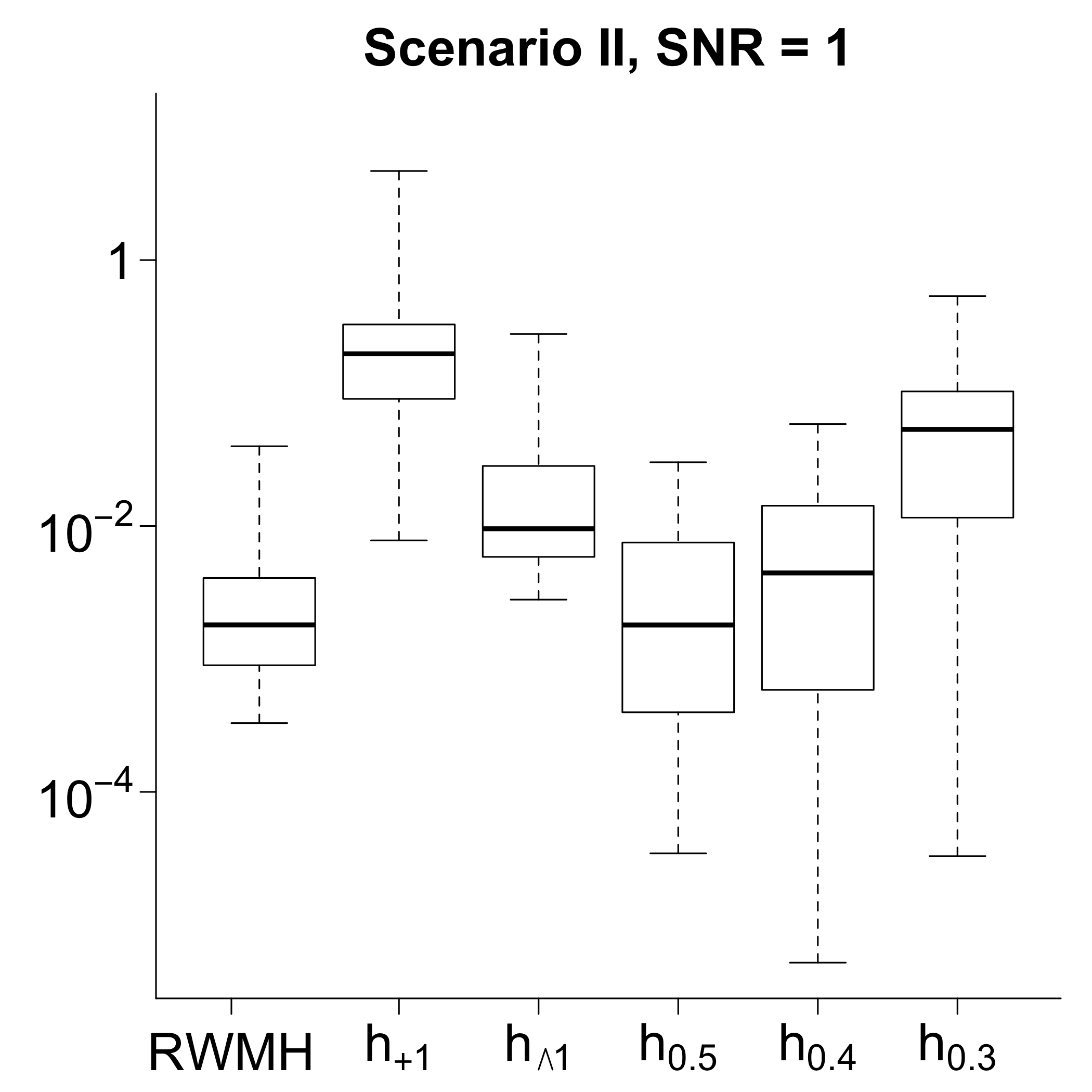}\hspace{0.2cm}
\includegraphics[width=0.48\linewidth]{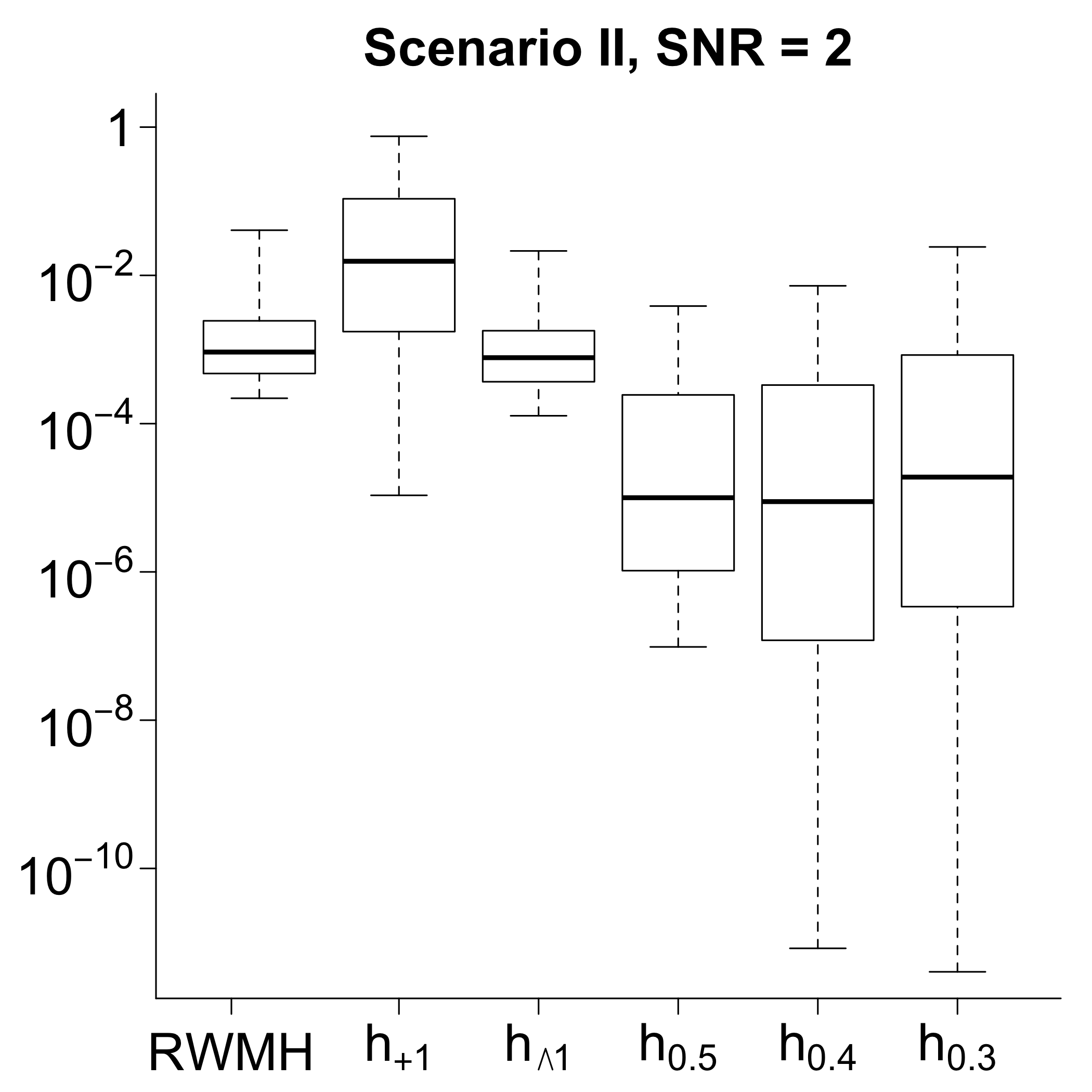} \\
\caption{ Each plot corresponds to one instance of $\pi$ for the ``weighted permutations'' problem with $p = 100$. 
Each box gives the empirical distribution of  $\{ \Var(\hat{f}_k) \}_{k=1}^p$ for one sampler, where $\Var( \hat{f}_k)$ is calculated from 100 independent runs of the algorithm. Each RWMH run has $ 7\times 10^5$ iterations, which takes about $10.3$ seconds; each IIT run has $10^3$ iterations, which takes about $10.1$ seconds.  
  }\label{fig1} 
\end{figure*}

\subsection{Variable Selection}\label{supp:sim.var}
We first describe how we simulate the data. Let $L \in \bbR^{n \times p}$ denote the design matrix. To mimic complex real-world problems with correlated predictors, we sample each row of $L$ independently from $N(0,  \Sigma )$ where $\Sigma_{ij} = e^{-|i - j|}$ for each $(i, j)$. Generate the response $Y \in \bbR^n$ by 
\begin{align*}
Y_i \overset{\mathrm{i.i.d.}}{\sim} N( \sum\nolimits_{j=1}^{j = s^*} \beta_j L_{i j},  1), 
\end{align*}
where $s^* = 20$  is the number of ``causal'' predictors. Generate $\beta$ by 
\begin{align*}
\beta_i  \overset{\mathrm{i.i.d.}}{\sim} \mathrm{SNR} \sqrt{\frac{\log p}{n}} \mathrm{Unif}( (2, 3) \cup (-3, -2) ). 
\end{align*}
We use $ \mathrm{SNR} = 1, 2, 3$, and for each value of SNR, we simulate $100$ replicates of $(L, Y)$.  This simulation setting is very similar to that used in~\citet{yang2016computational} (the main difference is that they used $s^* = 10$).   

As in Example~\ref{ex:var1},  denote the state space by $\cX = \{0, 1\}^p$ (a hard threshold on the model size is unnecessary in our simulation since an MCMC sampler never visits extremely large models in all of our runs). 
We follow~\citep{yang2016computational} to calculate the posterior distribution by   
\begin{equation}\label{eq:var.pi}
\pi(x) \propto p^{- c_0 \norm{x}_1 }  \frac{ (1 + g)^{- \norm{x}_1 / 2} }{ (1 + g(1-R_x^2))^{n/2} }, 
\end{equation}
where $R^2_x$ is the usual R-squared statistic for the model $x$, and $c_0$ is the hyperparameter in the sparsity prior on $x$,  and $g$ is the hyperparameter in the g-prior.  We fix $1 + g = p^{3}$ and $c_0 = 2$ in our simulation. 

We consider four IIT samplers with weighting scheme $\htgs, \hmin, \hsq$ and $h_{0.3}$. For all of them,  we use the neighborhood relation $\cN^1$, which we recall only includes ``addition'' and ``deletion'' moves. Note $|\cN^1(x)|  = p$ for all $x$. 
For comparison, we consider the standard add-delete-swap MH sampler (denoted by ADS) with proposal distribution 
\begin{align*}
K_{\rm{ads}}(x, y) = 0.4  \frac{ \ind_{\cN_{\rm{add}}(x)} (y) }{|\cN_{\rm{add}}(x)  | } + 0.4  \frac{ \ind_{\cN_{\rm{del}}(x)} (y) }{|\cN_{\rm{del}}(x)  | }  + 0.2  \frac{ \ind_{\cN_{\rm{swap}}(x)} (y) }{|\cN_{\rm{swap}}(x)  | }.  
\end{align*}
This is slightly different from a RWMH that proposes uniformly from $\cN^2(x)$, since we fix the proposal probabilities of three types of moves to $0.4, 0.4, 0.2$. In the main text, we simply refer to this ADS sampler as RWMH. 
  
For each simulated data set, we initialize all samplers at the same  model consisting of 10 random covariates, and run ADS for $5 \times 10^6$ iterations and each IIT sampler for $5 \times 10^3$ iterations. 
Each IIT  run takes less than $20$ seconds while ADS takes about $2,000$ seconds (it is likely that the code for ADS sampler can be further optimized, but due to overhead cost, our setting should already be very fair to ADS). The code for IIT sampling is written in \texttt{C++}. 

Before we discuss the performance of five samplers, we investigate the shape of $\pi$ to verify whether the assumptions used in our theory approximately hold for this variable selection problem. 
Calculating $\pi$ at each $x \in \cX$ is impossible since $|\cX| = 2^p \approx 10^{1505}$. 
So we empirically check the multimodality of $\pi$ by combining sample paths of all IIT runs. 
First, we count the number of local modes of $\pi$. We say a point $x$ is a local mode if $\pi(x) > \argmax_{y \in \cN^1(x)} \pi(y)$.  When SNR $=2$ (intermediate SNR),  among the 100 replicates, the number of local modes of $\pi$ (that have been visited by any sampler) ranges from $1$ to $12$  with an average of $5.3$.  
When SNR $=1$ or SNR $=3$, the number of local modes tends to be much smaller; see Figure~\ref{fig:local.modes}. 
Next, we consider the  quantity 
\begin{equation}\label{eq:def.nu}
\nu(x) = \max_{y \in \cN^1(x)} \log_p\left(  \frac{  \pi(y) }{ \pi(x) } \right). 
\end{equation}
Assumption~\ref{A1} would hold  if  $\nu(x) > 1$ for each $x \neq x^*$, and Theorem~\ref{th:two} suggests that we need at least $\nu(x) > 2$ at most $x$ so that IIT schemes can achieve optimal convergence rates. We find that $\nu(x)$ is indeed  large and greater than $2$ at most points in all the three SNR cases. See Figure~\ref{fig:nu} and Table~\ref{table:nu}. 

For the variable selection problem, the accuracy of posterior estimation largely depends on whether the sampler is able to find the ``best'' model in its search. So, we compare the performance of samplers using $\Tmap$, where  $\xmap$ denotes the model with the largest posterior probability that has been visited by any sampler (ADS or IIT) and $\Tmap$ denotes the number of iterations taken by a sampler to find $\xmap$. 
See Figure~\ref{fig:violin} for the distribution of $\Tmap$ for each sampler.

When SNR $=3$, $\pi$ tends to be unimodal with true model $x^*$ being the unique mode. In this case, IIT schemes $\hmin, \hsq$ and $h_{0.3}$ find $x^*$ very quickly and within $\approx 30$ iterations in most cases. This is expected since we initialize the sampler at some random $x$ with $\norm{x}_1 = 10$ (so it takes about $10$ iterations to remove the non-influential covariates and $20$ iterations to add the influential ones). 
When SNR $=1$, $\pi$ is still likely to be unimodal, but the mode may be the empty model (or some other model with very small size) due to the weak signal size. This is why ADS sometimes can be very efficient since it only needs to remove the 10 non-influential covariates in the initial model. 
However, we still observe that $\hsq$ and $h_{0.3}$ have better performance. 
The intermediate SNR case is the most challenging since $\pi$ is usually multimodal. We observe that the advantage of IIT schemes $\hsq$ and $h_{0.3}$ is very significant in the sense that ADS often fails to find $\xmap$ in $5 \times 10^6$ iterations; see Table~\ref{table:hit}. 
The best sampler in this case appears to be $h_{0.3}$ since it is most robust. 

The IIT scheme $\htgs$ performs worse than ADS in all the three cases. \citet{zanella2019scalable} also noted that the naive TGS sampler (i.e., IIT scheme $\htgs$) may converge slowly for variable selection, and thus they proposed to use a weighted TGS algorithm, which yielded excellent performance. In addition, they used Rao-Blackwellization to reduce the variance of posterior estimation. We note that both schemes are highly effective but rely on the specific structure of $\cX$ in variable selection. The same Rao-Blackwellization scheme can also be applied to IIT, but we did not observe significant improvement. 

\begin{figure}[t!]
\centering
\includegraphics[width=0.32\linewidth]{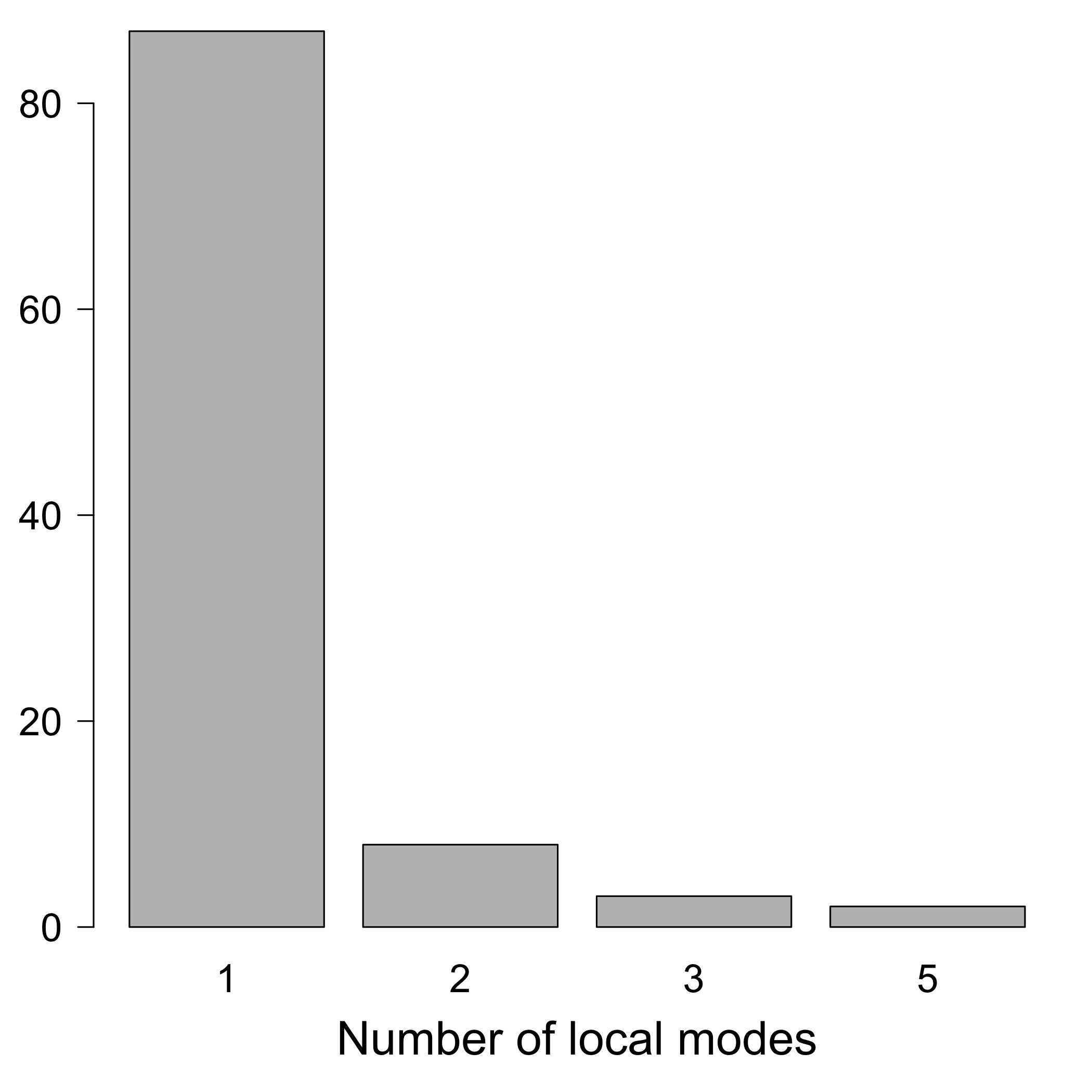}
\includegraphics[width=0.32\linewidth]{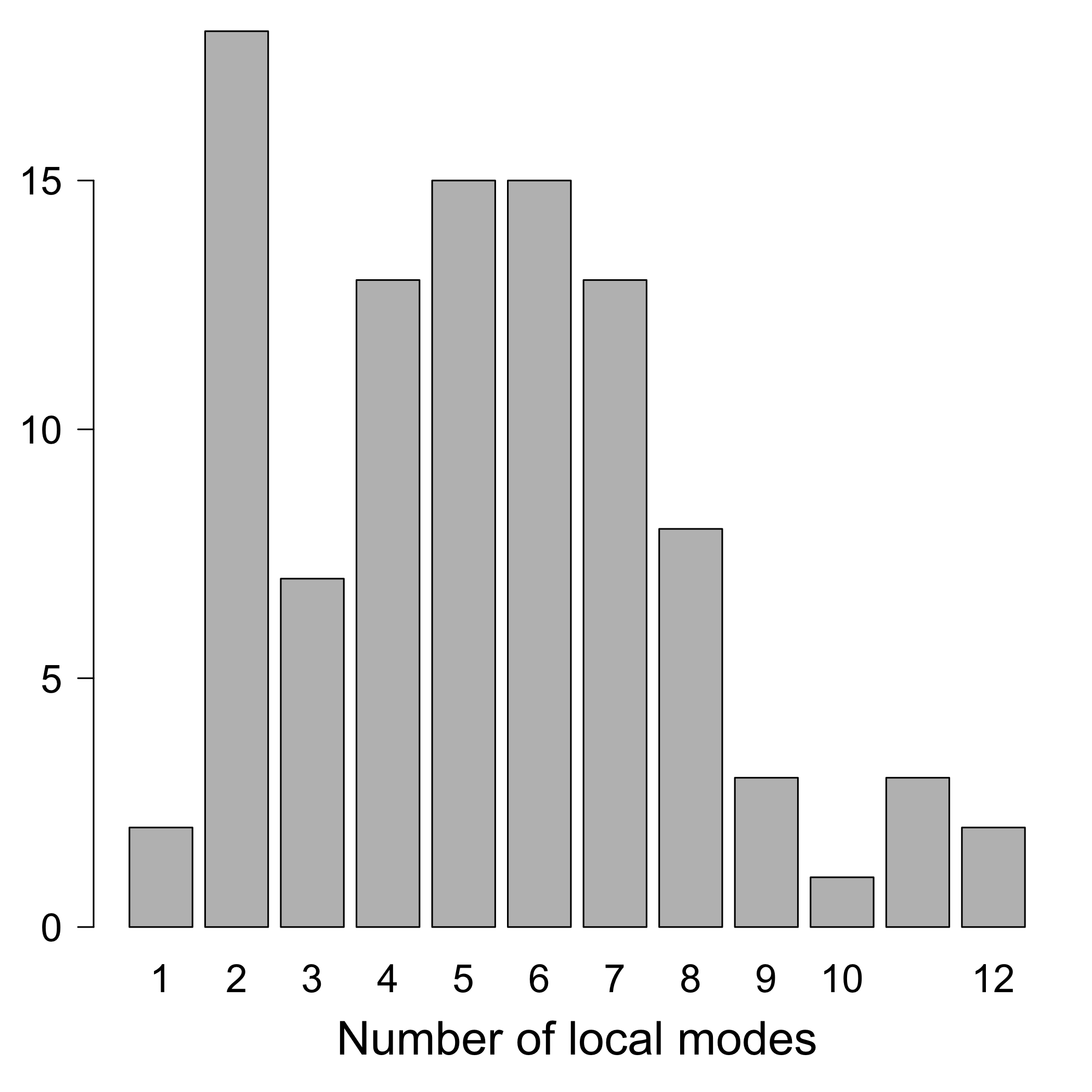}
\includegraphics[width=0.32\linewidth]{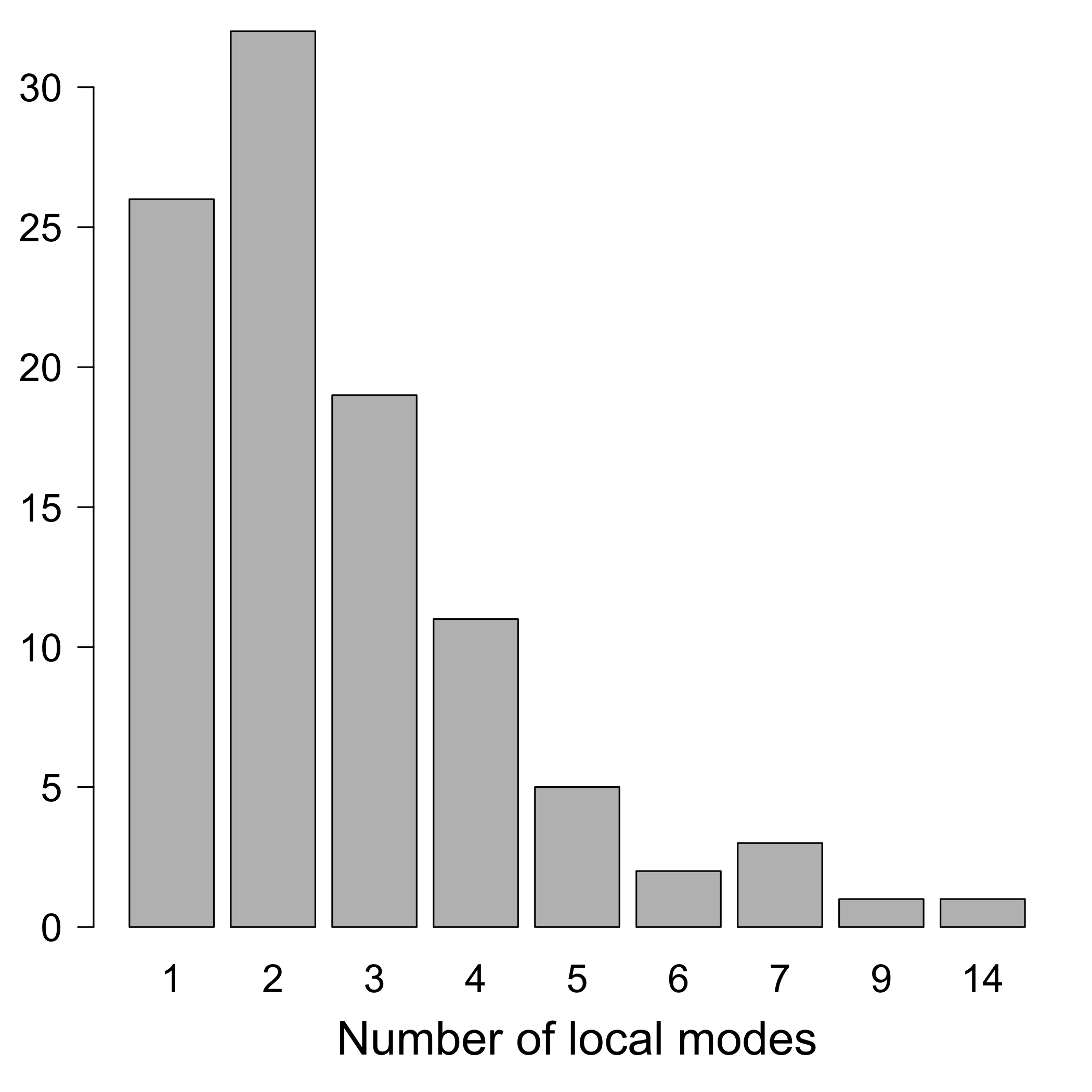}
\caption{Distribution of the number of local modes in the variable selection simulation study. 
Left: SNR $=3$; middle: SNR $=2$; right: SNR $=1$. For each value of SNR, we count the number of local modes of $\pi$ (visited by any IIT sampler) for 100 simulated data sets.} \label{fig:local.modes}
\end{figure}

\begin{figure}[t!]
\centering
\includegraphics[width=0.32\linewidth]{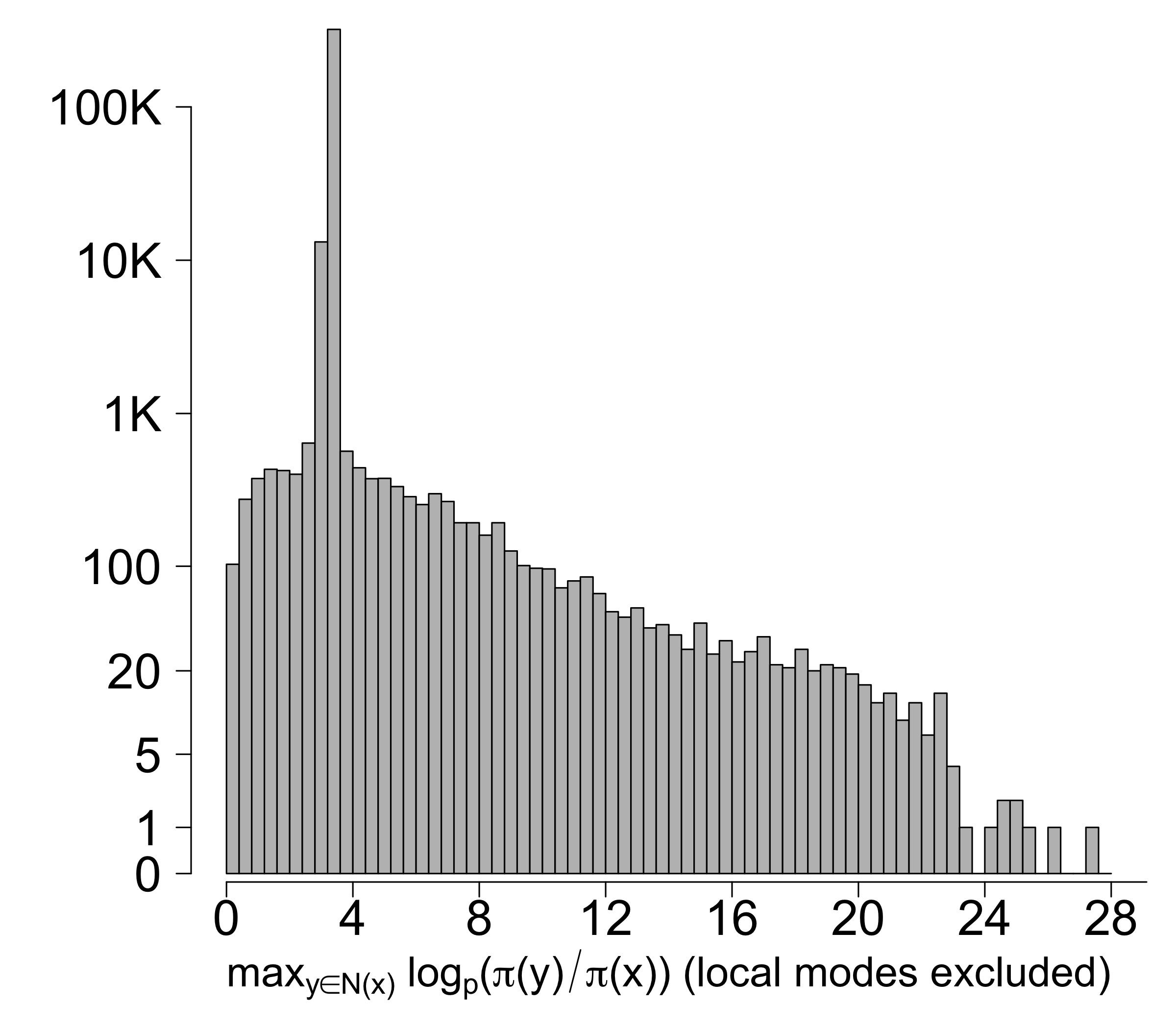}
\includegraphics[width=0.32\linewidth]{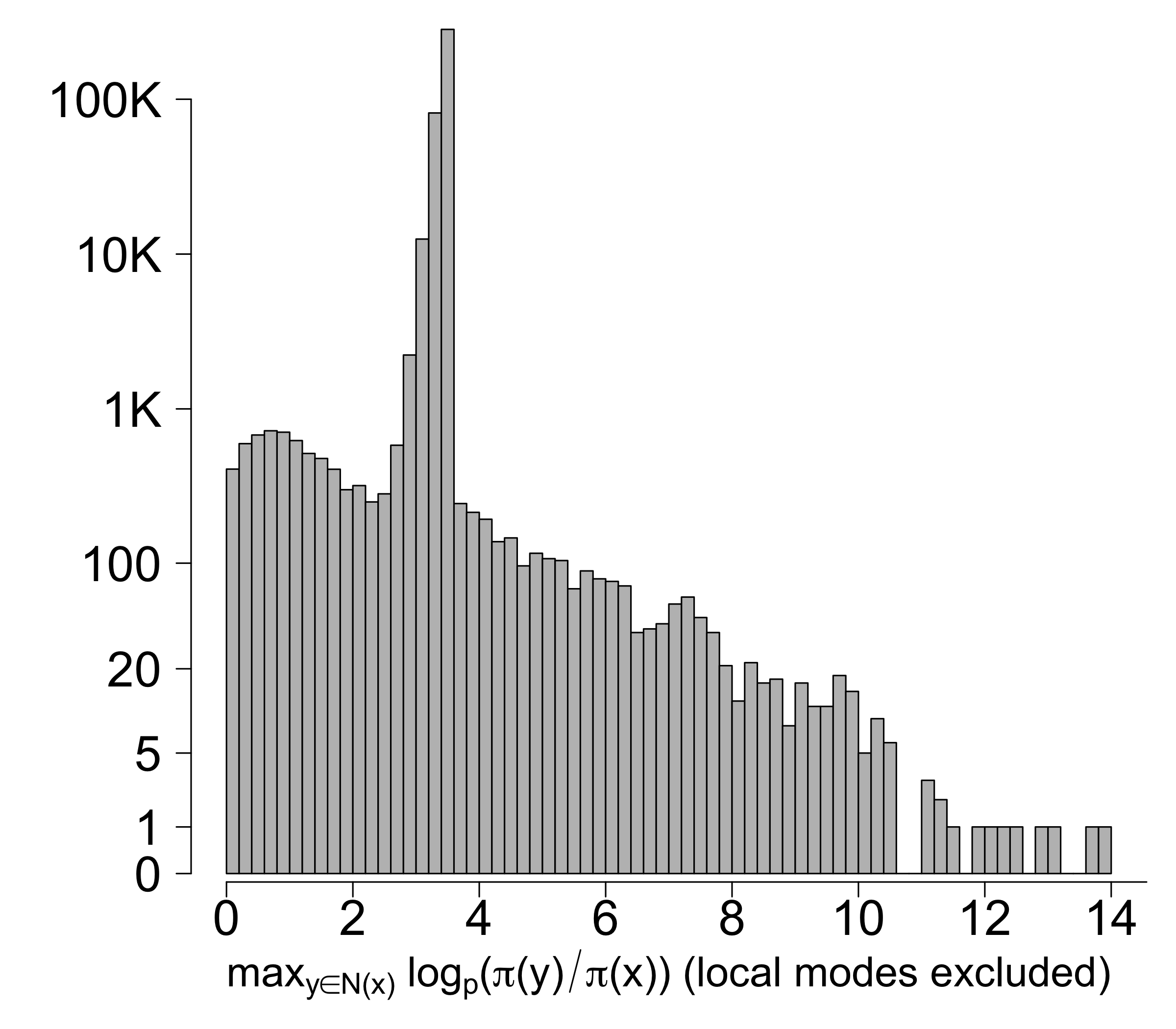}
\includegraphics[width=0.32\linewidth]{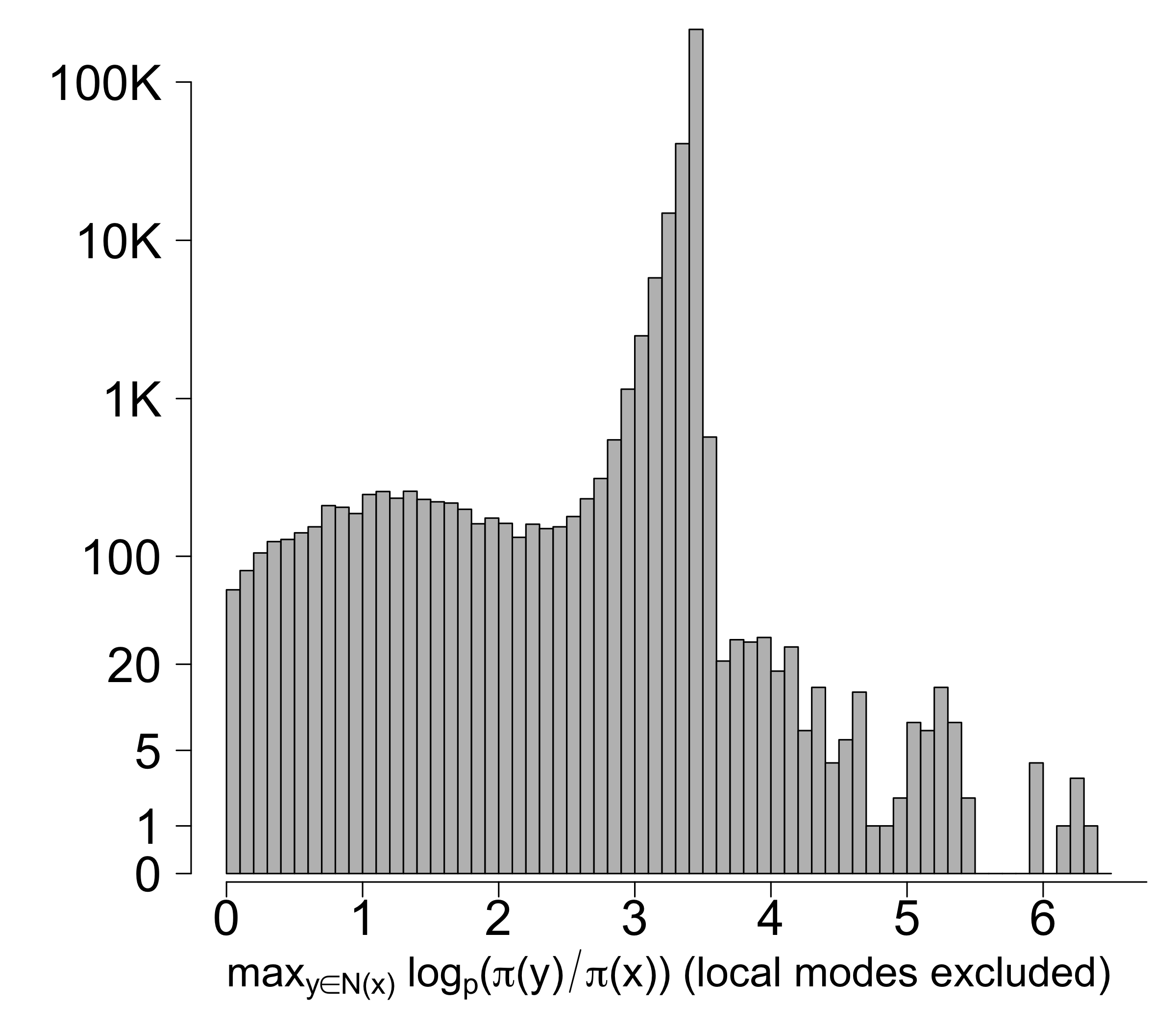}
\caption{Distribution of $\nu(x)$ 
in the variable selection simulation study.  Left: SNR $=3$; middle: SNR $=2$; right: SNR $=1$. For each value of SNR, we pool together the sample paths of IIT samplers for all 100 simulated data sets.} \label{fig:nu}
\end{figure}

\begin{table}[t!]
\centering
\begin{tabular}{cccc}
\hline
SNR   & $\nu(x) > 1$ & $\nu(x) > 2$ & $\nu(x) > 3$     \\
\hline
SNR $=3$ & 99.8\% & 99.5\% & 98.7\% \\ 
\hline
SNR $=2$ & 99.1\% & 98.5\% & 97.5\% \\
\hline
SNR $=1$ & 99.4\% & 98.7\% & 97.5\% \\
\hline 
\end{tabular}
\caption{Distribution of $\nu(x)$. 
For each SNR, we report the percentage of points $x$ with $\nu(x) > c$ for $c = 1, 2, 3$ using the data shown in Figure~\ref{fig:nu} (local modes are included in this calculation). }\label{table:nu}
\end{table}

\begin{table}[h!]
\centering
\begin{tabular}{cccccc}
\hline
SNR   & ADS &  $\htgs$  &  $\hmin$ &  $\hsq$ & $h_{0.3}$   \\
\hline
SNR $=3$ &0 & 2 & 0 &  0  & 0 \\ 
\hline
SNR $=2$ & 38 & 72 & 15 & 10 &  1 \\
\hline
SNR $=1$ & 1 & 10 & 1 & 0 & 0 \\
\hline 
\end{tabular}
\caption{Number of experiments (out of 100) where the sampler fails to find $\xmap$ in the entire run. This supplements Figure~\ref{fig:violin}. }\label{table:hit}
\end{table}

\begin{figure}[t!]
\centering
 {\Large SNR $=3$} \\
 \includegraphics[width=0.7\linewidth]{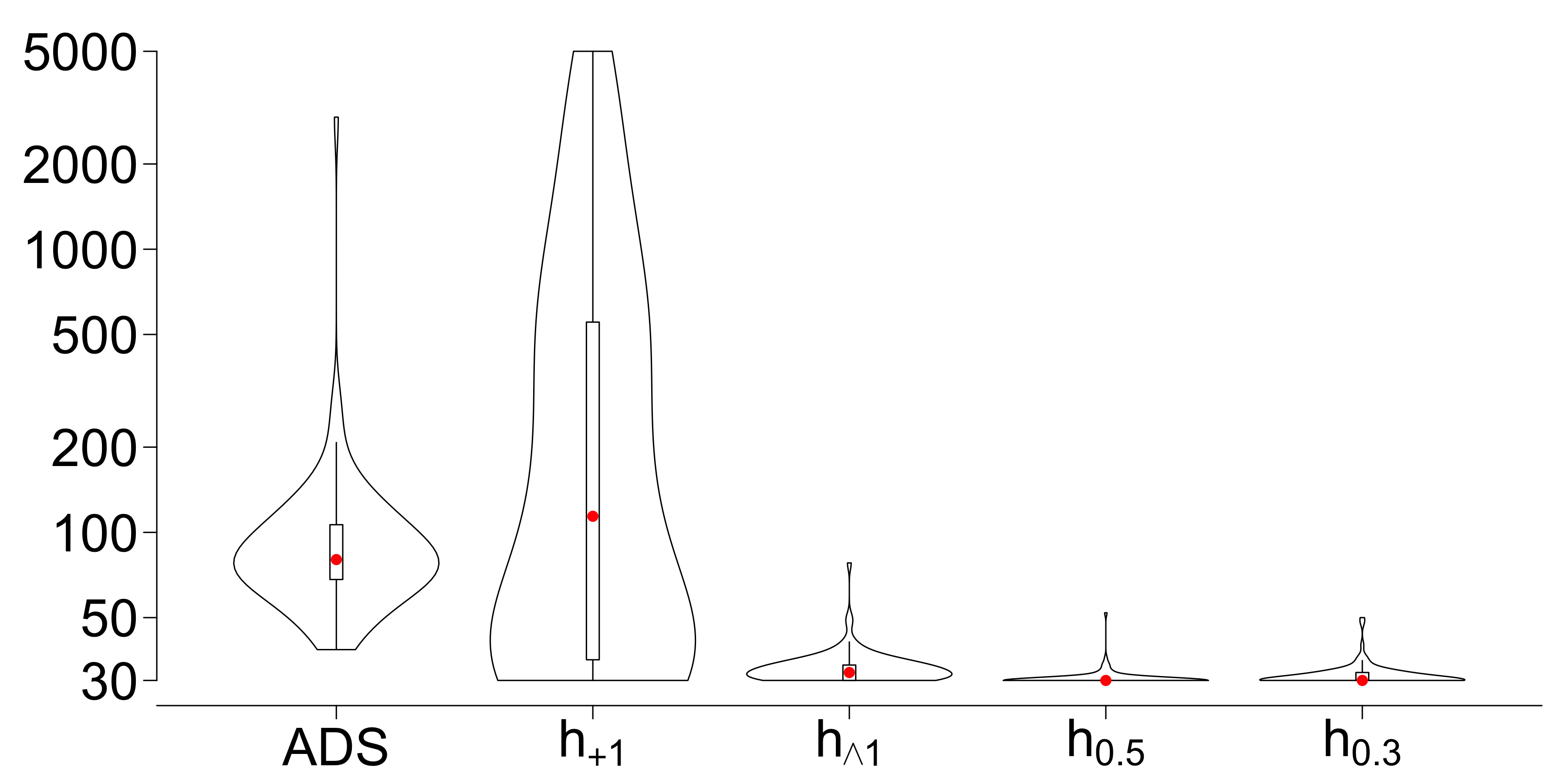}\\ 
\bigskip
 {\Large SNR $=2$}   \\
 \includegraphics[width=0.7\linewidth]{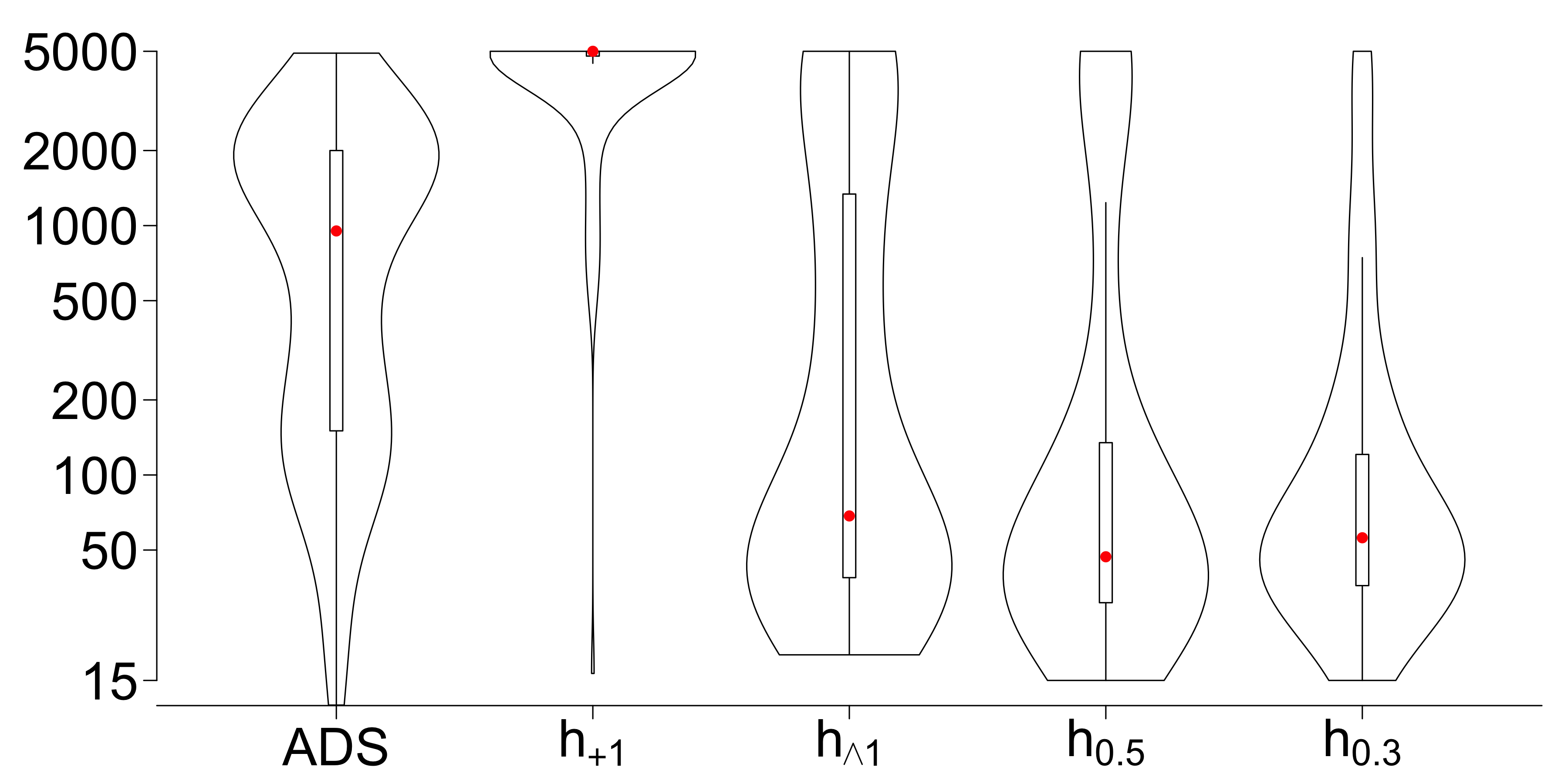}\\
\bigskip
  {\Large SNR $=1$} \\
 \includegraphics[width=0.7\linewidth]{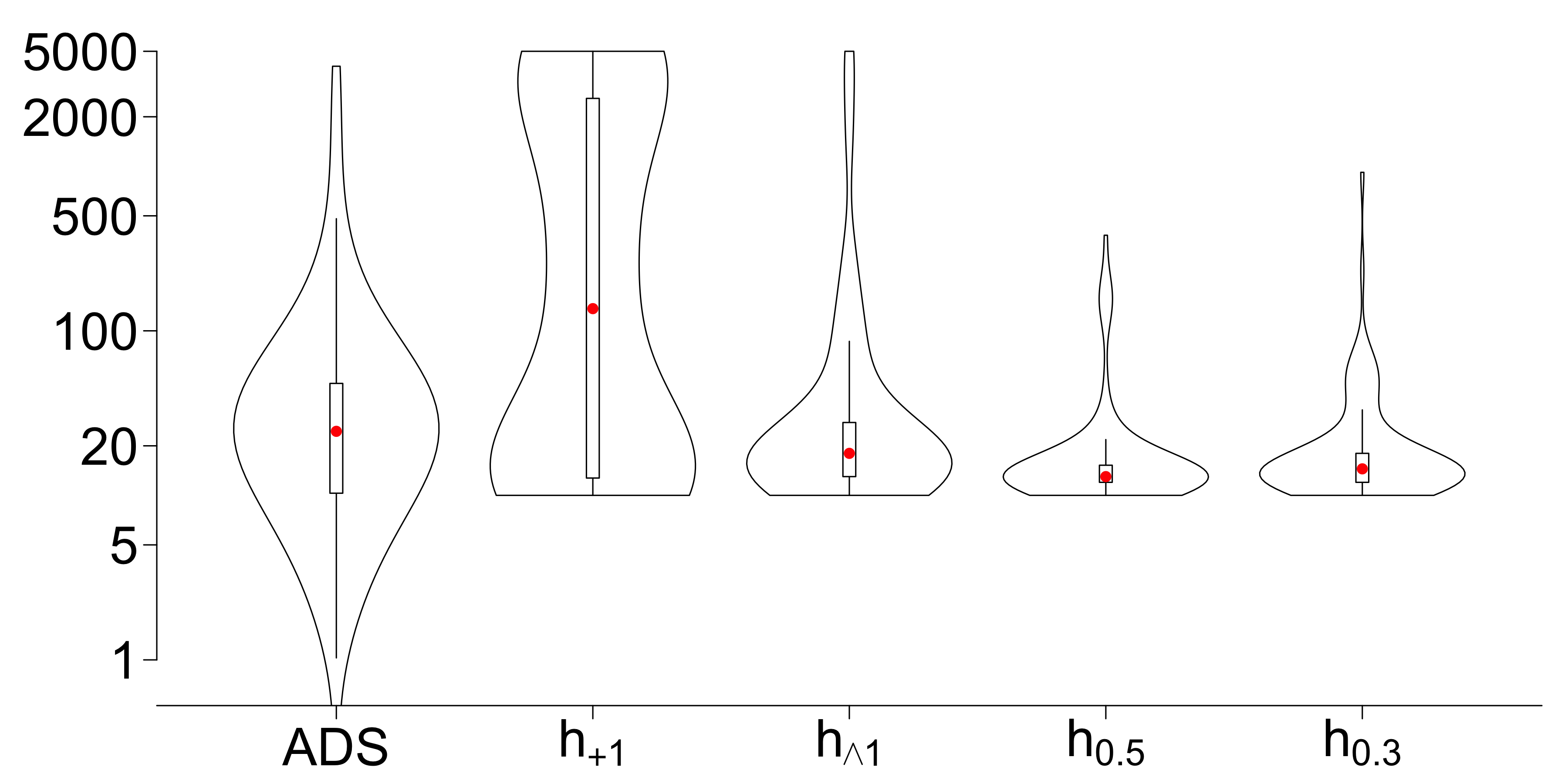} \\
\caption{Violin plots for the distribution of  $\Tmap$ in 100 replicates. The numbers on the y-axis correspond to $\Tmap$ for IIT samplers and $\Tmap  \times 10^{-3}$ for ADS. 
 $\Tmap$ is truncated at $5 \times 10^3$ for IIT  and $5 \times 10^6$ for ADS. 
The plot is made using \texttt{R} package \texttt{vioplot}. 
A boxplot is drawn inside each violin: the red dot denotes the median and the rectangle corresponds to the interquartile range. Note that the width of each violin is forced to be the same. } \label{fig:violin}
\end{figure}
\clearpage
\newpage 

\section{Examples in the Main Text} \label{supp:exs}
\subsection{Details of Example~\ref{ex:bad}}\label{supp:bad}
We show how to derive the expression for $\omega$. First, observe that 
\begin{align*}
    1 = \sum_{k=1}^p \pi(k) \leq \pi(0) + \pi(1) + \pi(1) r^{-1}  + \cdots + \pi(1) r^{-(p-1)}  \leq  \pi(0) + \frac{\pi(1)}{1 - r^{-1}}.   
\end{align*}
Since $\pi(0) = \pi(1)$ and $r \rightarrow \infty$, we get $\pi(0) = \pi(1) \sim 1/2$, and for $k \geq 2$, $\pi(k) \lesssim r^{-(k-1)}/2$.  
For $h(u) = u^a$ with some fixed $a > 0$, we have 
\begin{align*}
   \sum_{x \in \cX} \pi(x)^{2a} Z_h(x)  = \sum_{x \in \cX} \sum_{y \in \cN(x)} \pi(x)^a \pi(y)^a 
 = \sum_{k=0}^{p - 1}  2 \pi(k)^a \pi(k+1)^a. 
 \end{align*} 
Observe that as $k$ grows, $\pi(k)^a \pi(k+1)^a$ decreases at a rate not slower than $r^{-a}$. Therefore, as $r \rightarrow \infty$, we obtain that 
\begin{align*}
    \sum_{x \in \cX} \pi(x)^{2a} Z_h(x)  \sim 2 \pi(0)^a \pi(1)^a \sim 2^{1 - 2a}.  
\end{align*}
By Lemma~\ref{lm:pih}, 
\begin{align*}
    \pi_h(k) = \frac{ \pi(k)^{2a} Z_h(k) }{ \sum_{x \in \cX} \pi(x)^{2a} Z_h(x) }
   \sim \left\{ 
    \begin{array}{cc}
      \frac{ \pi(k - 1)^{a} \pi(k)^a } {2^{1-2a}},    &   k = 2, \dots, p,  \\
    1/2, &  k = 0, 1. 
    \end{array}
    \right.
\end{align*}
The importance weight $\omega$ then can be calculated by $\omega(k) = \pi(k) / \pi_h(k)$. 

\subsection{Details of Example~\ref{ex:counter}}\label{supp:counter}
Define a function $f \colon \cX \rightarrow \{-1, 1\}$ by letting $f(x) = 1$ if $x_1 = 1$ and $f(x) = -1$ if $x_1 = 0$. 
We show that $f$ is an eigenvector of $Q_h$ with eigenvalue $-2/p$, which proves $\gap(Q_h) \leq  2/ p$.  
We  use the notation $y^{(i)}(x)$  defined in~\eqref{eq:yi}. 
For any $x$ with $x_1 = 1$,  $f(y^{(i)}(x)) = 1$ for any $i \geq 2$, and $f(y^{(1)}(x)) = -1$. Hence, 
\begin{align*}
\sum_{x \in \cX} Q_h(x, y) f(y) = \sum_{i=2}^p Q_h(x, y^{(i)}(x) ) -  Q_h(x, y^{(1)}(x) )    +  Q_h(x, x) 
=  - 2 Q_h(x, y^{(1)}(x) ). 
\end{align*} 
Using $\pi(y^{(1)}(x)) = \pi(x)$ and~\eqref{eq:Q1},  we find that $Q_h(x, y^{(1)} (x)) = 2 / \bbE_\pi[Z_h]$.  
Since $|\cN^1(x)| = p$ for every $x$, we get $\bbE_\pi[Z_h] = 2 p$ from~\eqref{eq:EZ}, and thus $\sum_{x \in \cX} Q_h(x, y) f(y) = -2/p$. 
The case $x_1 = 0$ follows by an analogous calculation. 

\section{A Generic Mixing Time Bound for Decomposable Markov Chains}\label{supp:mix}

\begin{lemma} \label{LemmaDecompGeneric} 
Consider the transition kernel $P$ of a $\frac{1}{2}$-lazy  discrete-time Markov chain with unique stationary measure $\pi$ on state space $\mathcal{X}$. Denote by $\cX = \sqcup_{i=1}^{M} \cX_{i}$ a decomposition of the state space, and denote by $x_{i}^{*} \in \cX_{i}$ a privileged point satisfying $\pi(x_{i}^{*}) / \pi(\cX_i) > 0.65$ for each $i \in \{1,2,\ldots,M\}$. Define $\cX^{*} = \cup_{i=1}^{M} x_{i}^{*}$ to be the collection of all privileged points.  
Let $P|_{\cX_{i}}$ denote the trace of $P$ on $\cX_i$. 
Then there exists a universal constant $C$ so that 
\begin{align*}
 \tmix(P) \leq  C \,  \max \left\{    \gap_{\rm{min}}^{-1} \, M \, \log \frac{8 M}{  \pmin} ,  \,  \tmix(P|_{\cX^*}) \right\},
\end{align*} 
where $\gap_{\rm{min}} = \min_{i \in \{1,2,\ldots,M\} } \gap( P|_{\cX_{i}} )$ and $\pmin = \min_{x \in \cX} \pi(x)$ . 
\end{lemma}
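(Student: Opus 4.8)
The plan is to treat the statement as a two–scale mixing estimate and prove it through the trace–chain machinery of Definition~\ref{def:trace}, following the spirit of \citet{pillai2017elementary}. The two scales are the \emph{macro} dynamics, captured by the trace $P|_{\cX^*}$ on the privileged points, and the \emph{micro} dynamics inside each block, captured by the traces $P|_{\cX_i}$. The first term $\gap_{\rm{min}}^{-1} M \log(8M/\pmin)$ will come from the micro scale and the second term $\tmix(P|_{\cX^*})$ is exactly the macro scale, so the outer $\max$ (which, up to constants, is the same as a sum) is the natural form of the bound.

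First I would record the basic facts about traces. Because $P$ is $\tfrac12$-lazy and reversible, each of $P|_{\cX_i}$ and $P|_{\cX^*}$ is again lazy and reversible, has non-negative spectrum, and has stationary measure equal to $\pi$ conditioned on the relevant set. In particular the stationary probability of the least likely state of $P|_{\cX_i}$ is at least $\pmin$, so the standard relaxation-to-mixing inequality gives $\tmix(P|_{\cX_i}) \lesssim \gap(P|_{\cX_i})^{-1}\log(1/\pmin) \le \gap_{\rm{min}}^{-1}\log(1/\pmin)$. The domination hypothesis $\pi(x_i^*)/\pi(\cX_i) > 0.65$ enters here: it guarantees that once a micro chain is mixed it sits at its privileged point $x_i^*$ with probability bounded below by a constant, so hitting $\cX^*$ from inside a block is governed by the same within-block timescale rather than by any rare event.

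Next I would carry out the gluing step. The idea is to split the total variation error of $P^t(x,\cdot)$ against $\pi$ into a \emph{between-block} component — the discrepancy between the block masses $\{P^t(x,\cX_i)\}_i$ and $\{\pi(\cX_i)\}_i$ — and an \emph{in-block} component — the discrepancy between the conditional laws of $P^t$ on each $\cX_i$ and $\pi|_{\cX_i}$. Using the privileged points as proxies for the blocks (legitimate by the $0.65$ domination), the between-block component is controlled by the macro trace $P|_{\cX^*}$ after time $\tmix(P|_{\cX^*})$, while the in-block component is controlled by the micro traces after the within-block time of the previous step. Forcing the error to be uniform over all $M$ blocks — via a union bound that requires each block's conditional law to be accurate to within $\mathrm{error}/M$, combined with the submultiplicativity of mixing — is what produces the extra factor of $M$ and upgrades $\log(1/\pmin)$ to $\log(8M/\pmin)$. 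Assembling the two components and absorbing constants yields the claimed bound.

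The hard part will be this gluing step, and specifically making the heuristic ``equilibrate within blocks, then let the macro chain on $\cX^*$ redistribute the mass'' rigorous, since under $P$ the chain interleaves within-block moves with inter-block excursions rather than performing them in two separate phases. The technical device I expect to need is a careful comparison of the excursion structure of $P^t$ with the trace chains, controlling the leakage between the two scales so that the micro error accumulated over the many returns to $\cX^*$ does not blow up; this is precisely where the $M$ and the $\log(8M/\pmin)$ factors must be budgeted, and where laziness of $P$ is used to rule out periodicity and keep every spectrum non-negative.
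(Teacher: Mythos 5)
Your proposal identifies the right two scales (the trace on $\cX^*$ and the traces on the blocks $\cX_i$), but the argument has a genuine gap at exactly the point you flag as ``the hard part'': the gluing step. You propose to split the total variation error of $P^t(x,\cdot)$ into a between-block component and an in-block component, with the latter controlled by within-block mixing and the former by the macro trace. This decomposition cannot be carried out in the direct way you describe, because under $P$ the conditional law on a block $\cX_i$ at time $t$ is a mixture over trajectories that entered $\cX_i$ at many different times; there is no two-phase structure in which the chain first equilibrates inside blocks and then redistributes mass via $\cX^*$, and the ``careful comparison of the excursion structure'' you defer is precisely the missing proof. Decomposition theorems of Madras--Randall/Jerrum--Son--Tetali type sidestep this by working with spectral gaps and a \emph{projected} chain, not with TV distance and trace chains, and they yield multiplicative bounds of a different shape than the $\max$ in the statement.

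The paper's proof avoids TV-decomposition entirely by using two ingredients absent from your plan. First, it invokes the Peres--Sousi equivalence between mixing time and the maximal expected hitting time of sets of large stationary measure; this reduces the lemma to showing that from any start the chain hits any set $S$ with $\pi(S) > 0.45$ in the claimed time. Second, it proves a quantitative occupation-time statement: by a Chernoff-type bound for Markov chains (Lezaud) applied to each trace $P|_{\cX_i}$ together with the $0.65$ domination hypothesis, each trace spends at least half its time at its privileged point once its occupation time exceeds order $\gap_{\rm{min}}^{-1}\log(8M/\pmin)$; a union bound over blocks and a pigeonhole argument (only blocks with large occupation time matter once $T \gtrsim M \gap_{\rm{min}}^{-1}\log(8M/\pmin)$) then show the original chain spends a constant fraction ($2/5$) of its time in $\cX^*$ with probability $3/4$. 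This is where the factor $M\log(8M/\pmin)$ actually comes from --- not from requiring each block's conditional law to be accurate to within $\mathrm{error}/M$ as you suggest. Given the occupation bound, the chain's hitting time of $S$ is controlled by the hitting time of $S \cap \cX^*$ for the trace $P|_{\cX^*}$, which is bounded by iterating $\tmix(P|_{\cX^*})$; combining the two and applying Peres--Sousi finishes the proof. Without the hitting-time characterization and the occupation-time concentration, your outline does not close.
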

\begin{proof}

Let $\bbP_x$ denote the probability measure corresponding to a Markov chain $\{X_t\}_{t \geq 0}$ with transition matrix $P$ and $X_0 = x$.  
Let $\{X^{(k)}_t\}_{t \geq 0}$ denote the trace of $\{X_t\}$ on $\cX_k$, as defined in Definition~\ref{def:trace}.   Let $\pi_i (\cdot) = \pi( \cdot ) / \pi(\cX_i)$ denote the stationary distribution of $P|_{\cX_k}$. 
By assumption, we have  $\bbE_{\pi_i}[ \ind_{ \{ x_i^* \} } ( X^{(i)}_t ) ] = \pi_i(x_i^*) \geq 0.65$.  
By the Chernoff-like bound \citet[Theorem 1.1]{LezaudChernoff}, 
\begin{equation} \label{EqOccBd}
    \bbP_x [|\{t \in \{1, 2, \dots, T\} \colon   X_{t}^{(i)} \neq x^*_i \}| \geq 0.5 T] \leq \frac{2}{\pi(x)} \exp \left ( - \frac{  \gap(P|_{\cX_i})}{56} T \right\}. 
\end{equation}

For $i \in \{1,2,\ldots,M\}$ and $T \in \mathbb{N}$, denote by $\mathcal{O}_{T}(i) = |\{0 \leq t \leq T \, : \, X_{t} \in \cX_{i}\}|$ the number of steps that the chain spends in $\cX_{i}$.  Denote by 
$$\mathcal{G}_{T} = \left\{i \in \{1,2,\ldots,M\} \, : \, \mathcal{O}_{T}(i) > 56 \, \gap_{\rm{min}}^{-1} \, \log\left( \frac{8 M}{  \pmin} \right) \right\}$$ 
the collection of indices for which this occupation time is large relative to the relaxation time.
By the union bound,    for all $x \in \mathcal{X}$:
\begin{equation} \label{EqOccBdPre} 
    \bbP_x \left [\forall \, i \in \mathcal{G}_{T}, \, |\{ 0 \leq t \leq T \, : \, X_{t}^{(i)} = x_{i}^* \}| > \frac{1}{2} \mathcal{O}_{T}(i) \right]  \geq 0.75. 
\end{equation}
By the pigeonhole principle, for $T \geq T_1 = (5 M)  56 \, \gap_{\rm{min}}^{-1} \, \log\left( \frac{8 M}{  \pmin} \right)$, we must have 
$$|\{0 \leq t \leq T \, : \, X_{t} \in \cup_{ i \in \mathcal{G}_T}  \cX_i  \}| \geq \frac{4}{5}T. $$ 
Combining this with inequality \eqref{EqOccBdPre}, we conclude that for all $T \geq T_1$, we have 
\begin{equation}\label{aaron1}
    \bbP_x \left[ |\{ 0 \leq t \leq T \colon  X_{t}  \in  \cX^{*} \}| \geq \frac{2}{5} T \right] > 0.75.
\end{equation}

Next, for set $S \subset \cX^*$, define 
\begin{equation*}
    \tilde{\tau}_{\mathrm{hit}}(S) = \min\{t \, : \, \tilde{X}_{t} \in S \}.
\end{equation*}
Denote by $\tau_{\mathrm{hit}}$ the analogous hitting time for the original chain. 
Fix an arbitrary subset $S \subset \mathcal{X}$ with $\pi(S) > 0.45$. Since $\pi(\cX^*) \geq 0.65$, we have  $\pi(S \cap \cX^*) > 0.1$. For any $x \in \cX$, we have 
\begin{align*}
    \mathbb{P}_x[\tilde{\tau}_{\mathrm{hit}}(S) \leq 4 \tmix(P|_{\cX^*})] &\geq  \mathbb{P}_x [\tilde{X}_{4 \tmix(P|_{\cX^*})} \in S] \\ 
    &\geq \pi|_{\cX^*}(S \cap \cX^*) - \|   P|_{\cX^*}^{4 \tmix(P|_{\cX^*})} (x, \cdot ) - \pi|_{\cX^*}(\cdot) \|_{\mathrm{TV}} \\
    &\geq 0.1 - 2^{-4} \\
    &\geq 0.03.
\end{align*}

Iterating, for any $S \subset \mathcal{X}$ with $\pi(S) > 0.45$, we have for $k \in \mathbb{N}$, 
\begin{equation} \label{IneqHitOnTrace}
  \max_{x \in \cX} \mathbb{P}_x [\tilde{\tau}_{\mathrm{hit}}(S) > 4 k \, \tmix(P|_{\cX^*})] \leq 0.97^{k}.
\end{equation}
Consider an integer 
$$T \geq \max \left\{ \frac{T_1}{5}, \,  \frac{2 \log(4)}{-\log(0.97)}  \tmix(P|_{\cX^*}) \right\}.$$
Using a union bound and then combining inequalities \eqref{aaron1} and \eqref{IneqHitOnTrace}, we have:
\begin{align*}
    \max_{x \in \cX} \mathbb{P}_x[\tau_{\mathrm{hit}}(S) > 5T] &\leq  \max_{x \in \cX} \mathbb{P}_x[|\{1 \leq t  \leq 5 T \, : \, X_i \notin \cX^* \}| > 3 T] +  \max_{x \in \cX} \mathbb{P}_x[\tilde{\tau}_{\mathrm{hit}}(S) > 2 T ]  \\
    &\leq 0.25 + 0.25 \\
    &\leq 0.5.
\end{align*}
Thus, for each $k \in \bbN$,
\begin{align*}
      \max_{x \in \cX} \mathbb{P}_x[\tau_{\mathrm{hit}}(S) > 5 k T] \leq 2^{-k}. 
\end{align*}
It follows that 
\begin{align*}
    \max_{x \in \mathcal{X}} \max_{S \subset \mathcal{X} \, : \, \pi(S) > 0.6} \mathbb{E}_x[\tau_{\mathrm{hit}}(S) ] \leq 10 T.
\end{align*}
Applying the main result of \cite{peres2015mixing}  completes the proof. 
\end{proof}

\section{Proofs} \label{supp:all.proofs}

\subsection{Proof of Lemma~\ref{lm:pih}}\label{supp:proof.pih}
\begin{proof}
It suffices to check the detailed balance condition. If $h(u) = u^a$, then for any $y \in \cN(x)$, 
\begin{align*}
    \pi(x)^{2a} Z_h(x) K_h(x, y)  = \pi(x)^{2a}  h\left( \frac{\pi(y)}{\pi(x)} \right)  
    = \pi(x)^a \pi(y)^a = \pi(y)^{2a} Z_h(y) K_h(y, x). 
\end{align*}
Hence, $\pi_h \propto Z_h \cdot  \pi$.  Similarly, if $h$ is a balancing function, one can use Definition~\ref{def:balance} to show that $\pi(x) Z_h(x)  K_h(x, y) =  \pi(y)  Z_h(y) K_h(y, x)$. 
\end{proof}

\subsection{Proof of Lemma~\ref{lm:tgs}}\label{supp:proof.is}
\begin{proof} 
Without loss of generality, we can assume that $x^{(0)}$ is generated from $\pi_h$, since  the limiting distribution of the estimator~\eqref{eq:sni} does not depend on the initial distribution~\citep[Chapter 22.5]{douc2018markov}.  
Let $W^{(0)}, W^{(1)}, \dots, W^{(t)}$ denote random variables such that given $x^{(k)}$, $W^{(k)}$ is an exponential random variable with mean $\omega(x^{(k)})$ and independent of everything else. Define 
\begin{align*}
\hat{f}_Q ( T_m) =    \frac{ \sum_{k=0}^m f(x^{(k)}) W^{(k)} } { T_m } , \quad \quad \text{ where } T_m = \sum_{k=0}^{m}  W^{(k)}. 
\end{align*}
Now one can see that  Proposition 5 of~\citet{deligiannidis2018ergodic} differs from our setting only in that the former assumes each $W^{(k)}$ is geometrically distributed~\citep[c.f.][ Proposition 2]{doucet2015efficient}.  
So, we can apply their argument. 
Since  $\omega = \pi / \pi_h$, by the law of large numbers and central limit theorem for ergodic Markov chains~\citep[Corollary 6]{haggstrom2007variance}, 
\begin{align*}
\frac{1}{t} \sum_{k=1}^t   \omega (x^{(k)}) \overset{a.s.}{\rightarrow} \bbE_{\pi_h} [ \omega ] =  1,   \quad \quad 
\frac{1}{\sqrt{t} } \sum_{k=1}^t f(x^{(k)}) \omega (x^{(k)})  \overset{D}{\rightarrow} N(0,   \sigma^2 ), 
\end{align*}
where 
\begin{align*}
\sigma^2 = \lim_{t \rightarrow \infty}  t^{-1} \, \Var \left(   \sum_{k=1}^t  f(x^{(k)}) \omega (x^{(k)})   \right). 
\end{align*}
It then follows from  Slutsky's theorem that $\sqrt{t} \hat{f} (t, \omega)  \overset{D}{\rightarrow} N(0,   \sigma^2)$.
Similarly, by a standard conditioning argument and treating $(f(x^{(k)}), W^{(k)})$ as a bivariate Markov chain, we obtain that 
$ \sqrt{T_m} \hat{f}_Q ( T_m)  \overset{D}{\rightarrow} N(0,   \sigma^2_{\mathrm{c}})$ as $m \rightarrow \infty$,  where 
\begin{align*}
\sigma_{\rm{c}}^2 = \lim_{t \rightarrow \infty}  t^{-1} \, \Var \left(   \sum_{k=1}^t  f(x^{(k)}) W^{(k)} \right). 
\end{align*}
But $\hat{f}_Q( T_m)$ is just the time average of the continuous-time Markov chain  $Q_h$ at time $T_m$. Thus, by standard results (see, for example,~\citet[Proposition 4.29]{aldous2002reversible}), 
\begin{equation}
\sigma_{\rm{c}}^2 \leq \frac{ \bbE_\pi[f^2]}{\gap(Q_h)}. 
\end{equation}
So it only remains to compare $\sigma^2$ with $\sigma_{\rm{c}}^2$. 
A direct calculation using conditioning yields that 
\begin{align*}
&  \bbE_{\pi_h } \left[  \sum_{k=1}^t  f(x^{(k)}) W^{(k)}  \right ]^2  -  \bbE_{\pi_h } \left[  \sum_{k=1}^t f(x^{(k)}) \omega (x^{(k)})   \right ]^2  \\
= \;&  \sum_{k=1}^t   \bbE_{\pi_h }\left[  f(x^{(k)})^2 (W^{(k)})^2 - f(x^{(k)})^2 \omega (x^{(k)})^2  \right] \\
= \;&    t \, \bbE_{\pi_h }\left[  f(X )^2  W^2 - f(X)^2 \omega (X )^2   \right] \quad \quad  \text{(where } W\mid X \sim \mathrm{Exp}(1 / \omega(X)),  \; X \sim \pi_h ) \\
= \;& t \, \bbE_\pi [f^2 \omega ]. 
\end{align*}
Hence, $ \sigma^2 = \sigma_{\rm{c}}^2 - \bbE_\pi [f^2 \omega ]$, from which the result follows. 
\end{proof}

\subsection{Proof of  Lemma~\ref{lm:main} }\label{supp:proof.main}
\begin{proof} 
We first show that it suffices to prove the claim for $\gap(P)$. 
Let $b= \max_{ x \in \cX} |Q(x, x)|$, which is finite since $|\cX| < \infty$. 
Then $P = b^{-1} Q + I$ is the transition matrix of a discrete-time Markov chain such that $  \gap(Q) =  b  \gap(P)$, which is still irreducible and reversible w.r.t. $\pi$. 
Since $x \neq \cT(x)$, we have $P (x, \cT(x) )  = b^{-1}  \,  Q(x, \cT(x))$. 
Thus, if $\gap(P) \geq \kappa(p, \alpha, \nu) \min_{x \neq x^*} P(x, \cT(x))$, the same bound holds for $\gap(Q)$ with $P$ replaced by $Q$.  
  
Our proof for $\gap(P)$ is conceptually similar to the analysis of the birth-death chain given in~\citet[Section 3]{kahale1997semidefinite}. 
Without loss of generality, we can assume that $P(x, \cT(x)) > 0$, since otherwise the spectral gap bound holds trivially. 
Set $\cT(x^*) = x^*$ to avoid ambiguity. 
Let $(\cX, \cT)$ be the bidirected graph with node set $\cX$ and edge set  
$\Edge(\cT) = \{  (x, y) \in \cX^2 \colon   x \neq y, \text{ and }  y = \cT(x) \text{ or } x = \cT(y) \}$, which implies $(x, y) \in \Edge (\cT)$ if and only if $(y, x) \in \Edge(\cT)$.  
Observe that $(\cX, \cT)$ is a tree, and thus for any $x \neq y$, there exists one unique directed path  without repeated edges that starts at $x$ and ends at $y$; denote this path by $\gamma(x, y)$. 
We use the notation $e \in \gamma$  to mean that the path $\gamma$ traverses the  edge $e$. 

Given an edge $e = (z, w) \in \Edge(\cT)$, we define its load by 
\begin{align*}
\rho(e ) = \pi(z) P(z, w) = \pi(w) P(w, z). 
\end{align*} 
The second equality holds since $P$ is reversible. Define the ``length'' of this edge by 
\begin{equation}\label{eq:length}
\ell( e ) = \left\{   \pi(z) \wedge \pi(w)  \right\}^{-q},   \quad \quad  q = \frac{\nu - \alpha}{2 \nu}. 
\end{equation}
For any directed path $\gamma$, let  $ |\gamma|_\ell = \sum_{e \in \gamma} \ell(  e)$ denote the ``length'' of the path.  
Note that for any $x \neq x^*$,  there exists some $d = d(x) < \infty$ such that $\gamma(x, x^*) = (x, \cT(x), \dots, \cT^d(x))$. It follows that 
\begin{align*}
 |\gamma(x, x^*)|_\ell =  \sum_{e \in \gamma(x, x^*)} \ell(  e)   =  \sum_{k=0}^{d-1}  \pi( \cT^k(x) )^{-q} 
 \leq  \sum_{k=0}^{d-1} \pi(x)^{- q} p^{- k \nu q} \leq \frac{ \pi(x)^{- q} }{ 1 - p^{ - \nu q} }, 
\end{align*}
where $\cT^0(x)$ denotes $x$ itself.    For any $x \neq y$, we can bound the length of $\gamma(x, y)$ by 
\begin{equation}\label{eq:lm1}
 |\gamma(x, y)|_\ell  \leq  |\gamma(x, x^*)|_\ell +  |\gamma(y, x^*)|_\ell \leq \frac{ \pi(x)^{- q}  + \pi(y)^{- q}}{ 1 - p^{ - \nu q} }. 
\end{equation}
By~\citet[Theorem 3.2.3]{saloff1997lectures}, 
\begin{equation}\label{eq:lm1b}
\gap(P)^{-1} \leq  \max_{e \in \Edge(\cT) }  \left\{  \frac{1}{ \rho(e) \ell(e)}   \sum_{(x, y) \colon e \in \gamma(x, y)}  \pi(x) \pi(y) |\gamma(x, y)|_\ell  \right\}. 
\end{equation} 

The rest of the argument is similar to the proof of Theorem 1 of~\citet{zhou2021complexity}. 
To bound the right-hand side of the above inequality, by symmetry, it suffices to consider edges $e = (z, w)$ such that $w = \cT(z)$.  
Fix an arbitrary $z \neq x^*$ and let $w = \cT(z)$. 
Let $\Anc(z) = \{ x \in \cX \colon   \cT^k(x) = z \text{ for some } k \geq 0 \} $ denote all the ``ancestors'' of $z$ (including $z$ itself). 
Recalling that $(\cX, \cT)$ is a tree, one can show that $e = (z, w) \in \gamma(x, y)$  only if $x \in \Anc(z)$ and $y \notin \Anc(z)$. 
Hence, by~\eqref{eq:lm1}, 
\begin{align*}
\sum_{(x, y) \colon e \in \gamma(x, y)}  \pi(x) \pi(y) |\gamma(x, y)|_\ell 
\leq \;& \frac{1}{ 1 - p^{ - \nu q} } \sum_{x \in \Anc(z)} \sum_{y \notin \Anc(z)}   \left\{ \pi(x)^{ 1- q} \pi(y) +  \pi(x)\pi(y)^{1 - q} \right\} \\
\leq \;& \frac{2}{ 1 - p^{ - \nu q} }  \sum_{x \in \Anc(z)} \sum_{y \in \Anc(x^*)}   \pi(x)^{ 1- q}  \pi(y)^{1 - q}. 
\end{align*}
The assumption $|\cN(x)| \leq p^{\alpha}$  and the reversibility of $P$ imply that $|\{y \in \cX \colon \cT(y) = x \} | \leq p^\alpha$.
Let $\cT^{-k}(z) = \{x \colon \cT^k(x) = z, \cT^{k-1}(x) \neq z \}$. Then, $\pi(\cT^{-k}(z))^c \leq p^{(\alpha - \nu c) k} \pi(z)^c$ for any $c > 0$. Using $q = (\nu - \alpha) / 2\nu$, we find that 
\begin{equation}\label{eq:a1}
 \sum_{x \in \Anc(z) } \pi(x)^{1 - q} \leq \sum_{k=0}^\infty \pi(z)^{1 - q} p^{[ \alpha - \nu (1 - q)] k} \leq \frac{\pi(z)^{1 - q} }{1 - p^{\alpha - \nu (1 - q)}} = \frac{ \pi(z)^{1 - q} }{1 - p^{- (\nu - \alpha)/2}}. 
\end{equation} 
It follows that 
\begin{align*}
    \sum_{(x, y) \colon e \in \gamma(x, y)}  \pi(x) \pi(y) |\gamma(x, y)|_\ell \leq \frac{2\pi(z)^{1 - q}} { \left\{1 - p^{- (\nu - \alpha)/2} \right\}^3}. 
\end{align*}
Plugging this inequality into~\eqref{eq:lm1b}, we obtain that 
\begin{align*}
\gap(P)^{-1} \leq   \max_{z \neq x^*} \frac{2 }{ \{1 - p^{-(\nu - \alpha)/2}  \}^3 } P(z, \cT(z))^{-1}, 
\end{align*}
which yields the asserted spectral gap bound.  
\end{proof}

\subsection{Proof of Theorem~\ref{th:one} }\label{supp:th1}
\begin{proof}
From Lemma~\ref{lm:pih} we know that for a balancing function $h$, we have $\pi_h \propto \pi Z_h$, which can be equivalently expressed as $\pi_h (x) = \pi(x) Z_h(x) / \bbE_\pi [Z_h ]$.  
Hence,  for any $y \neq x$, 
\begin{equation}\label{eq:Q1}
Q_h(x, y) = \frac{\pi_h(x)}{\pi(x)} \frac{ h( \pi(y) / \pi(x) ) }{Z_h(x)}
=  \frac{ h( \pi(y) / \pi(x) ) }{ \bbE_\pi [Z_h ] }. 
\end{equation}
If $h$ is non-decreasing, we have $Q_h(x, \cT(x))  \geq h(p^\nu) / \bbE_\pi[Z_h]$ under Assumption~\ref{A1}. Let $I_h(x, y) = \pi(x)h ( \pi(y) / \pi(x) ) = I_h(y, x)$. 
Then, 
\begin{equation}\label{eq:EZ}
\bbE_\pi[Z_h] = \sum_{x \in \cX} \pi(x) Z_h(x) = \sum_{x \in \cX} \sum_{y \in \cN(x)}  I_h(x, y). 
\end{equation}
Since $\cN$ is symmetric,  $\bbE_\pi[Z_h]$ is twice the sum of $I_h(x, y)$ over all unordered pairs of neighbors. 
Now we bound $\bbE_\pi[Z_h]$ for the three choices of $h$ separately, from which the asserted bounds on $\gap(Q_h)$ follow. 

\medskip 
\noindent \textit{Case 1: $h(u) = 1 + u$.}   We have $I_h(x, y) = \pi(x) + \pi(y)$. 
Since each $x$ has at most $p^\alpha$ neighbors,  
\begin{align*}
\bbE_\pi[Z_h] = 2 \sum_{x \in \cX}   |\cN(x)| \pi(x) \leq 2 p^\alpha. 
\end{align*}
  
\medskip
\noindent \textit{Case 2: $h(u) = 1 \wedge u$.} We have $I_h(x, y) = \pi(x) \wedge \pi(y)$. 
For any $x \neq x^*$, $\pi(x)$ can appear in the  summation in~\eqref{eq:EZ} at most $2 |\cN(x)|$ times. 
But $\pi (x^*) $ cannot appear in the summation since $x^*$ is the mode. 
By a calculation similar to~\eqref{eq:a1},  we find that $\pi(x^*) \geq 1 - p^{\alpha - \nu}$  under Assumption~\ref{A1}.  Thus, 
\begin{align*}
\bbE_\pi[Z_h]  \leq 2  \sum_{x \neq x^*} |\cN(x) | \pi(x) \leq 2 p^\alpha (1 - \pi(x^*)) \leq 2 p^{2 \alpha - \nu}. 
\end{align*} 

\medskip
\noindent \textit{Case 3: $h(u) = \sqrt{u}$.}  We have $I_h(x, y) = \sqrt{ \pi(x)\pi(y) } \leq ( \pi(x) + \pi(y) )/ 2$. 
Applying this inequality to any pair of neighbors that does not involve $x^*$, we obtain from~\eqref{eq:EZ} that
\begin{align*}
\bbE_\pi[Z_h]  \leq  2 \sum_{x \neq x^*} |\cN(x)| \pi(x) + 2 \sum_{y \in \cN(x^*)}  \sqrt{ \pi(x)\pi(y) }. 
\end{align*}
Since for any $x \neq x^*$, $\pi(x) \leq p^{-\nu}$ under Assumption~\ref{A1}, we have 
\begin{align*}
\bbE_\pi[Z_h]  \leq  2p^{2 \alpha - \nu} + 2 p^{\alpha - \nu/2},
\end{align*} 
which completes the proof. 
\end{proof}

\subsection{Proof of Theorem~\ref{th:two}}\label{supp:th2}
\begin{proof}
Let $b= \max_{ x \in \cX} |Q_h(x, x)|$. Then $P = b^{-1} Q_h + I$ is the transition matrix of a discrete-time Markov chain such that $  \gap(Q_h) =  b  \gap(P)$.   
Treating $\cX^*$ as a single state denoted by $x^*$, define a transition matrix $\bP$ with  state space $(\cX \setminus \cX^*) \cup \{x^*\}$ by 
\begin{equation}\label{eq:def.Pbar}
\begin{aligned}
    \bP(x, x') = \left\{\begin{array}{cc}
        P (x, x'), &  \text{ if } x\neq x^*, x' \neq x^* \\
    \sum_{y \in \cX^*} P(x, y),    &  \text{ if } x' = x^*, x \neq x^* \\
    \pi(\cX^*)^{-1} \sum_{y \in \cX^*} \pi(y) P(y, x')    &  \text{ if } x = x^*, x'\neq x^* \\
    \pi(\cX^*)^{-1} \sum_{y, w \in \cX^*} \pi(y) P(y, w)    &  \text{ if } x = x' = x^*. \\
    \end{array}
    \right.
\end{aligned}
\end{equation} 
Then $\bP$ is reversible with respect to $\bpi$ defined by $\bpi(x) = \pi(x)$ for $x \neq x^*$ and $\bpi(x^*) = \pi(\cX^*)$. Let $\bar{\cN}$ denote the neighborhood mapping on $\bar{\cX}$ induced by $\cN$. Then, $|\cN(x)| \leq p^\alpha$ for any $x \neq x^*$, and $|\cN(x^*)| \leq M p^\alpha$.
The mapping $\bar{\cT} \colon \bar{\cX} \rightarrow \bar{\cX}$ induced by $\cT$ can be defined by
\begin{align*}
    \bar{\cT}(x) =\left\{\begin{array}{cc}
        x^* &  \text{ if } x = x^*, \\
       \cT(x)  & \text{ if } x \neq x^*, \cT(x) \notin \cX^*, \\
        x^*  & \text{ if } x \neq x^*, \cT(x) \in \cX^*, \\
    \end{array}
    \right.
\end{align*}
Observe that $\bpi(\bar{\cT}(x)) \geq p^\nu \bpi(x)$ for any $x \neq x^*$. Hence, we can bound $\gap(\bP)$ by the same argument used to prove Lemma~\ref{lm:main}. The only step that needs to be modified is~\eqref{eq:a1}. Since 
\begin{equation}\label{eq:a2}
\bpi( \bar{\cT}^{-1}(x^*) )^c \leq M p^\alpha \left( \frac{\bpi(x^*)}{M p^\nu} \right)^c =  M^{1-c} p^{\alpha - \nu c} \bpi(x^*)^c,
\end{equation}
for any $c > 0$, one can show that it suffices to multiply the bound in~\eqref{eq:a1} by $M^q$ where $q = (\nu - \alpha) / 2\nu$. It follows that 
\begin{equation}\label{eq:bd.barP}
    \gap(\bP) \geq \frac{\kappa}{M^{(\nu - \alpha)/\nu} } \min_{x \neq x^*} \bP(x, \bar{\cT}(x)) 
    = \frac{\kappa}{M^{(\nu - \alpha)/\nu} } \min_{x \notin \cX^*} P(x, \cT(x)) 
    \geq \frac{\kappa}{M} \min_{x \notin \cX^*} P(x, \cT(x))  , 
\end{equation}
where $\kappa = \kappa(p, \alpha, \nu)$ is as given in Lemma~\ref{lm:main}. 

Next, define $P^* \colon \cX^* \times \cX^* \rightarrow [0, 1]$ as the restriction of $P$ to $\cX^*$; that is, $P^*(x, x') = P(x, x')$ if $x \neq x'$, and $P^*(x, x) = 1 - \sum_{x' \in \cX^* \setminus \{x\}} P(x, x')$.
By Theorem 1 of~\citet{jerrum2004elementary}, we have 
\begin{equation}\label{eq:jerrum}
\gap(P) \geq \min \left\{ \frac{ \gap(\bP) }{3},  \;  \frac{\gap(\bP) \gap(P^*) }{3 \sgamma + \gap(\bP)} \right\}, 
\end{equation}    
where $\sgamma = \max_{x \in \cX^*}  P(x, \cX \setminus \cX^*)$. 

As in the proof of Theorem~\ref{th:one},  we have 
\begin{equation}\label{eq:qh}
    Q_h(x, x') = \frac{\pi_h(x')}{\pi(x)} \frac{ h( \pi(x')/\pi(x)) }{ Z_h(x)} = \frac{ h( \pi(x')/\pi(x)) }{ \bbE_\pi [Z_h] }. 
\end{equation}
Using $P = b^{-1} Q_h + I$, we obtain the following bounds on $\gap(\bP), \gap(P^*)$ and $\sgamma$ 
\begin{align}
    \gap(\bP) \geq \;& \frac{\kappa  h(p^{\nu})}{b M \bbE_\pi[Z_h] }, \label{eq:bd1} \\
    \gap(P^*) \geq \;& \frac{h(1)}{b M(M-1) \bbE_\pi[Z_h]}, \label{eq:bd2} \\
    \sgamma \leq \;& \frac{ p^\alpha h(p^{-\nu})}{b \bbE_\pi[Z_h] } \label{eq:bd3}. 
\end{align}
The first inequality follows from~\eqref{eq:bd.barP}, the monotonicity of $h$ and the fact that $\bpi(\bar{\cT}(x)) \geq p^\nu \bpi(x)$ for any $x\neq x^*$. 
To prove the second, recall that Condition (iv) in Assumption~\ref{A2} implies that $P^*$ is irreducible. Hence, between any $x, x' \in \cX^*$, there is a path on $\cX^*$ with length at most $M-1$.  Further, $P^*$ is reversible with respect to the uniform measure on $\cX^*$, and if $x, x' \in \cX^*$ and $x' \in \cN(x)$, we have $P^*(x, x') = P(x, x') \geq h(1)/ ( b \bbE_\pi[Z_h])$. Then,~\eqref{eq:bd2} follows from the standard canonical path method~\citep{sinclair1992improved}. Lastly,~\eqref{eq:bd3} can be proved by noting that $\pi(x) \geq p^\nu \pi(x')$ for any $x \in \cX^*, x' \notin \cX^*$. Plugging~\eqref{eq:bd1},~\eqref{eq:bd2} and~\eqref{eq:bd3} into~\eqref{eq:jerrum}, we get 
\begin{align*}
    \gap(Q_h) = b \gap(P) 
    \geq \frac{ \kappa h(p^\nu) h(1) }{3 M (M-1) \bbE_\pi[Z_h] \left\{M p^\alpha h(p^{-\nu}) + h(p^\nu) \right\} }.
\end{align*} 
Since $h$ is a balancing function, $h(p^\nu) = p^\nu h(p^{-\nu}) > p^\alpha h(p^{-\nu})$. Hence, $M p^\alpha h(p^{-\nu}) + h(p^\nu) < (Mp^{\alpha - \nu} + 1) h(p^\nu)$ and the above bound simplifies to
\begin{align*}
       \gap(Q_h) \geq \frac{ \kappa  h(1) }{3 (Mp^{\alpha - \nu} +1) M (M-1) \bbE_\pi[Z_h]  }.
\end{align*}
To complete the proof, we  bound $\bbE_\pi [Z_h]$  for each choice of $h$ using~\eqref{eq:EZ}. 
Let $\pi_0 = \pi(x)$ for any $x \in \cX^*$. 
Note that by letting $c=1$ in~\eqref{eq:a2}, we can show that $\bpi(x^*) = M \pi_0 \geq 1 - p^{\alpha - \nu}$. 

\medskip 
\noindent \textit{Case 1: $h(u) = 1 + u$.} The same bound $\bbE_\pi [Z_h] \leq 2p^\alpha$ holds.     

\medskip
\noindent \textit{Case 2: $h(u) = 1 \wedge u$.}  
Observe that $I_h(x, y)  = \pi_0$ only when both $x, y \in \cX^*$,  which happens at most $M(M-1)$ times in the  summation in~\eqref{eq:EZ}. 
Hence, 
\begin{align*}
\bbE_\pi[Z_h]  \leq 2  \sum_{x \notin  \cX^*} |\cN(x) | \pi(x) + M(M-1) \pi_0 \leq 2 p^{2 \alpha  - \nu}  + M-1. 
\end{align*}

\medskip
\noindent \textit{Case 3: $h(u) = \sqrt{u}$.} To bound $I_h(x, y) = \sqrt{ \pi(x) \pi(y)}$, we consider three subcases according to whether $x$ and $y$ are in $\cX^*$. 
First, using $\sqrt{\pi(x) \pi(y)} \leq (\pi(x) + \pi(y))/2$, we find  that 
\begin{align*}
\sum_{(x, y) \in (\cX \setminus \cX^*)^2 \colon y \in \cN(x)}  \sqrt{\pi(x) \pi(y)}
\leq \;& \sum_{x \in \cX \setminus \cX^*  } |\cN(x)|   \pi(x)  \leq  p^{2 \alpha - \nu}.  
\end{align*}
If $x \in \cX^*$ and $y \notin \cX^*$, we have $\pi(y) \leq p^{-\nu} \pi(x)$. Hence,
\begin{align*}
\sum_{ x \in \cX^*, \, y \notin \cX^*}  \sqrt{\pi(x) \pi(y)}
\leq \;&  \sum_{ x \in \cX^*, \,  y \in \cN(x)} \pi(x) p^{-\nu / 2} \leq p^{\alpha - \nu / 2}. 
 \end{align*}
Using $\pi_0 \leq 1 / M$, we finally obtain that 
\begin{align*}
\bbE_\pi[Z_h] \leq  p^{2 \alpha - \nu} + 2p^{\alpha - \nu / 2} + M - 1, 
\end{align*} 
which completes the proof. 
\end{proof}

\subsection{Proof of Theorem~\ref{th:mix}}\label{supp:mix.bound}
\begin{proof}[Proof of Theorem~\ref{th:mix}]
Define a transition matrix $K_0$ with state space $\cX$ by 
\begin{align*}
    K_0 (x, x') = \left\{\begin{array}{cc}
        \frac{Q_h(x, x')}{ \sum_{y \neq x}  Q_h(x, y)},  &   \text{ if } x \neq x', \\
        0  & \text{ if } x = x'. 
    \end{array}
    \right.
\end{align*}
Let $K_0 |_{\cX^*}$ be the trace of $K_0$ on $\cX^*$.  
Fix an arbitrary $b \geq  2 \max_{ x \in \cX} |Q_h(x, x)|$, and let $P^b = b^{-1} Q_h + I$ be the transition matrix of a discrete-time Markov chain with $P^b(x, x) \geq 1/2$ for each $x$. 
If $x, x' \in \cX^*$ and $x \neq x'$, then 
\begin{align*}
    P^b|_{ \cX^*}(x, x') = (1 - P^b(x, x) ) K_0|_{\cX^*}(x, x') = - \frac{Q_h(x, x)}{b} K_0|_{\cX^*}(x, x'). 
\end{align*}
This shows that if $P^b(x, x') \geq \delta / b$ for $b = 2 \max_{ x \in \cX} |Q_h(x, x)|$, then we have $P^b(x, x') \geq \delta / b$ for any $b \geq 2 \max_{ x \in \cX} |Q_h(x, x)|$. Hence, we can assume $b$ is chosen sufficiently large, and then~\citet[Theorem 20.3]{levin2017markov} implies that there exists a universal constant $C'$ such that $  \tmix(Q_h) \leq C'   b^{-1}  \tmix(P^b)$. 
In the rest of the proof, we simply use $P$ to denote $P^b$. 

To bound $\tmix(P)$, we apply Lemma~\ref{LemmaDecompGeneric}  along with the bounds appearing in the proof of Theorem~\ref{th:two}.  We must first choose our decomposition $\cX = \sqcup_{i=1}^{M} \cX_{i}$. Let $\cX^* = \{ x_1^*, \dots, x_M^* \}$. 
For each $x \notin \cX^*$, let $d(x) = \min\{ k \colon  \cT^k(x)  \in \cX^* \}$, which is finite by Assumption~\ref{A2}. 
For each $k \in \{1, \dots, M\}$, define $\cX_k =  \{x_k^*\}  \cup  \{ x \notin \cX^* \colon  \cT^{d(x)}(x) = x_k^* \}$.  Then, by construction, $\cX_1, \dots, \cX_M$ are disjoint, and $\cup_{k=1}^M \cX_k = \cX$.   
Let $P_k$ be the restriction of $P$ to $\cX_k$, and $P|_{\cX_k}$ be the trace of $P$ on $\cX_k$. Both are reversible with respect to the measure $\pi_k$ defined by $\pi_k(x) = \pi(x) / \pi(\cX_k)$ for each $x \in \cX_k$. Hence, by Peskun's theorem, $\gap(P|_{\cX_k}) \geq \gap(P_k)$. 
Let $\cN_k$ be the restriction of $\cN$ to $\cX_k$ such that $\cN_k(x) = \cN(x) \cap \cX_k$ for each $x \in \cX_k$. 
Observe that for any $x \in  \cX_k \setminus \{x_k^*\}$, we have $\cT(x) \in \cX_k$. Thus, the quadruple $(\cX_k, \cN_k, \pi_k, \cT)$ satisfies Assumption~\ref{A1} with the same constants $\nu, \alpha$. 
Hence, by Theorem~\ref{th:one}, we have 
\begin{align*}
\gap(P|_{\cX_k}) \geq \gap(P_k) \geq \kappa \min_{x \in \cX_k \setminus \{x_k^*\}} P_k(x, \cT(x) ) =  \kappa \min_{x \in \cX_k \setminus \{x_k^*\}} P(x, \cT(x) ), 
\end{align*}
where $\kappa = \kappa(p, \alpha, \nu)$ is as given in Lemma~\ref{lm:main}.  It follows that 
\begin{equation}\label{IneqThm2AnalogueRest}
\gap_{\rm{min}} = \min_{1 \leq k \leq M}\gap(P|_{\cX_k}) \geq  \kappa  \min_{x \in  \cX \setminus \cX^*} P(x, \cT(x) )  \geq    \frac{h(p^\nu)}{b \bbE_\pi[Z_h]}. 
\end{equation}

By a calculation analogous to~\eqref{eq:a2}, one can show that $\pi(x_i^*) / \pi(\cX_i) \rightarrow 1$ as $p^{\nu - \alpha} \rightarrow \infty$, and thus the condition $\min_{1 \leq i \leq M}\pi(x_i^*) / \pi(\cX_i) \geq 0.65$ in Lemma~\ref{LemmaDecompGeneric} is satisfied if $p^{\nu - \alpha}$ is sufficiently large.  
The bound on $\gap(P^*)$ appearing in the proof of Theorem~\ref{th:two} applies to $P|_{ \cX^*}$ with only two modifications. First, the lower bound on $P|_{\cX^*}(x,x') \geq \delta / b$ is obtained directly from the assumption. Second, the stationary probability of $P|_{\cX^*}$ is bounded by $\pi|_{\cX^*} (x) \geq 1 / (B M)$ (in Theorem~\ref{th:two} we assume $B=1$).  This gives:
\begin{align*}
    \gap(P|_{\cX^*}) \geq \frac{\delta / b}{ B M(M-1)}.
\end{align*}
Thus, by~\citet[Theorem 20.6]{levin2017markov}, 
\begin{equation} \label{IneqThm2AnalogueTrace}
    \tmix(P|_{ \cX^*}) \leq (b/\delta) B M(M-1) \log(4 B M).
\end{equation} 
Combining inequalities \eqref{IneqThm2AnalogueRest}, \eqref{IneqThm2AnalogueTrace}, and applying Lemma~\ref{LemmaDecompGeneric}, we obtain the bound for $\tmix(Q_h)$. By~\citet[Lemma 4.23]{aldous2002reversible}, $\gap(Q_h)^{-1} \leq  \tmix(Q_h)$, which concludes the proof. 
\end{proof}

\subsection{Proof of Theorem~\ref{th:decomp}}\label{supp:decomp}
\begin{proof}
The proof is similar to that of Theorem~\ref{th:two}. 
Let $b= \max_{ x \in \cX} |Q_h(x, x)|$ and $P = b^{-1} Q_h + I$. 
Let $P_\cG$ denote the Markov chain induced by $\cG$ on $\cY$, which is defined by
\begin{align*}
    P_{\cG}(y, y') =\pi_{\cG}(y)^{-1}  \sum_{x \in \cX_y} \sum_{x' \in \cX_{y'}} \pi(x) P(x, x'), \quad \forall \, y, y' \in \cY, 
\end{align*}
It is easy to verify that $P_{\cG}$ is irreducible and reversible with respect to $\pi_{\cG}$. 
Since $\pi_\cG$ on $\cY$ satisfies Assumption~\ref{A1},
we can apply Lemma~\ref{lm:main} to obtain that\footnote{Though in Assumption~\ref{A1} we require $|\cN(x)| \leq p^\alpha$,  the proof of Lemma~\ref{lm:main} only uses $|\{x' \colon \cT(x') = x \}| \leq p^\alpha$.}  
\begin{align*}
    \gap( P_\cG) \geq \kappa(p, \alpha, \nu) \min_{y \neq y^*}  P_\cG(y, \cT(y)) \eqqcolon \bar{\lambda}. 
\end{align*}
For each $y \in \cY$, let $ P_y$ be the restricted Markov chain on $\cX_y$ defined by 
\begin{align*}
     P_y(x, x') = \left\{\begin{array}{cc}
        P(x, x') &  \text{ if } x, x' \in \cX_y \text{ and } x \neq x' \\
        1 - \sum_{w \in \cX_y \setminus \{x\}} P(x, w) & \text{ if } x = x'  \in \cX_y. 
    \end{array}    \right.
\end{align*}
Let $\lambda_{\rm{min}} = \min_{y \in \cY} \gap( P_y)$. 
By Theorem 1 of~\citet{jerrum2004elementary}, we have 
\begin{equation}\label{eq:jerrum2}
\gap(P) \geq \min \left\{ \frac{ \bar{\lambda} }{3},  \;  \frac{\bar{\lambda} \lambda_{\rm{min}} }{3 \bar{\gamma} + \bar{\lambda}} \right\}, 
\end{equation}    
where $\bar{\gamma} = \max_{x \in \cX} \sum_{x' \colon \cG(x') \neq \cG(x)} P(x, x')$. By~\eqref{eq:qh}, we can bound $\bar{\gamma}$ by
 \begin{equation}\label{eq:bd.beta}
     \bar{\gamma}  \leq  \max_{x \in \cX} (1 - P(x,x)) = \max_{x \in \cX}  \frac{ \pi_h(x) }{b \, \pi(x)}   =  \frac{\max_{x \in \cX}  Z_h(x) }{b \, \bbE_\pi[Z_h]}. 
\end{equation}

Observe that for any $y \neq y^*$ and $x \in   \cX_y( \cT(y), \tilde{\nu})$, there exists  $x' \in \cX_{\cT(y)} \cap \cN(x)$ such that  $\pi(x') \geq p^{\tilde{\nu}} \pi(x)$. 
It then follows from condition (iii) that for any $y \neq y^*$,
\begin{align*}
      P_{\cG}(y, \cT(y)) \geq \;& \pi_{\cG}(y)^{-1}  \sum_{x \in  \cX_y( \cT(y), \tilde{\nu})} \sum_{x' \in \cX_{\cT(y)} } \pi(x) P(x, x')  \\
      \geq \;& \pi_{\cG}(y)^{-1}  \sum_{x \in  \cX_y( \cT(y), \tilde{\nu})}   \pi(x)  \frac{ h(p^{\tilde{\nu}}) }{b \bbE_\pi [Z_h] } \\
      \geq \;&   \frac{\varepsilon  h(p^{\tilde{\nu}}) }{b \bbE_\pi [Z_h] }. 
\end{align*}
Hence, we obtain the following bound on $\bar{\lambda}$.
\begin{equation}\label{eq:bar.lambda}
\bar{\lambda} \geq \frac{ \kappa  \varepsilon  h(p^{\tilde{\nu}}) }{b \bbE_\pi [Z_h] }, 
\end{equation}
where $\kappa = \kappa(p, \alpha, \nu)$. 
Plugging~\eqref{eq:bd.beta} and~\eqref{eq:bar.lambda} into~\eqref{eq:jerrum2} and using $\gap( P_y) = b^{-1} \gap(Q_y)$, we obtain that 
\begin{align*}
  \gap(Q_h) = b \gap(P) \geq \min  \left\{ \frac{ \kappa  \varepsilon  h(p^{\tilde{\nu}}) }{3   \bbE_\pi [Z_h] }, \;  \frac{ \kappa  \varepsilon  h(p^{\tilde{\nu}}) \min_{y \in \cY} \gap(Q_y)}{\kappa  \varepsilon  h(p^{\tilde{\nu}}) + 3 \max_{x \in \cX} Z_h(x) } \right\}, 
\end{align*} 
which proves the result. 
\end{proof}
 
\end{document}